 \numberwithin{equation}{section}
\title{A Bianchi Type IV Viscous Fluid Model of The Early Universe}
\author{Ikjyot Singh Kohli}
\begin{document}

\abstract{We are interested in formulating a viscous model of the universe based on The Bianchi Type IV algebra. We first begin by considering a congruence of fluid lines in spacetime, upon which, analyzing their propagation behaviour, we derive the famous Raychaudhuri equation, but, in the context of viscous fluids. We will then go through in great detail the topological and algebraic structure of a Bianchi Type IV algebra, by which we will derive the corresponding structure and constraint equations. From this, we will look at The Einstein field equations in the context of orthonormal frames, and derive the resulting dynamical equations: The \emph{Raychaudhuri Equation}, \emph{generalized Friedmann equation}, \emph{shear propagation equations}, and a set of non-trivial constraint equations. We show that for cases in which the bulk viscous pressure is significantly larger than the shear viscosity, this cosmological model isotropizes asymptotically to the present-day universe. We finally conclude by discussing The Penrose-Hawking singularity theorem, and show that the viscous universe under consideration necessarily emerged from a past singularity point.}
\newpage
\maketitle
\newpage
\tableofcontents
\newpage
\section{Introduction}
\indent For centuries past and into the present time, much of physics has been taught and done in the flat, Euclidean space, with time being considered as a dependent, constant variable.  Indeed, two such fields of physics have emerged. There is a \emph{local} physics, which involves formulating theories of local phenomena such as fluid dynamics, quantum mechanics, and electromagnetism. On the other hand, there is an \emph{universal} physics which attempts to describe the nature of the universe on a large-structure scale. While the former have been described extremely successfully by The Navier-Stokes equations, Schrodinger's Equation, and Maxwell's equations, only Einstein's Theory of General Relativity has been able to successfully, and completely describe the latter. In recent times, many attempts have been made with respect to combining the two into one grand perspective of describing the whole universe, from the very small to the very large. Einstein's theory conveniently allows for this by the notion that matter induces curvature in the spacetime manifold, and the resulting spacetime geometry prescribes the motion of of this matter. In the context of this description for how our universe really works, we attempt to formulate a theory of viscous fluids in a spatially homogeneous spacetime. This is not as easy as simply applying the Navier-Stokes equations. Despite their overwhelming success in the classical theory of fluid dynamics, the Navier-Stokes equations only apply in Euclidean space. The main question is that of the classification of partial differential equations. In theory, since the Navier-Stokes equations are a coupled set of \emph{hyperbolic} partial differential equations, and therefore, constitute \emph{causal} solutions, one can map these over to the pseudo-Riemannian manifold of General Relativity by the usual partial to covariant derivative rule. However, in the case of viscosity, the Navier-Stokes equations become \emph{elliptical} partial differential equations, and therefore lead to \emph{acausal} solutions. This is not allowed in any relativistic theory, since this would technically imply that fluid propagation speeds are able to exceed the speed of light. Therefore, other methods must be used to describe the viscous fluid flow in a General Relativistic context. We are particularly interested in viscous/imperfect fluids as compared to inviscid/perfect fluids for the sole reason that inviscid flow is an idealization. In the context of cosmology, many astrophysical fluids such as gases exhibit high compressibility, where viscosity must be included in any fluid analysis for more realistic results. In addition, to the best of the author's knowledge at the time of the writing of this thesis, viscous flows in the Bianchi Type-IV universe have not been described, so this thesis also is an opportunity to present some new and valuable information. Perhaps the most important application is to \emph{early-universe} cosmology. The early universe consisted of high-temperature fluids, which necessarily means that viscosity must be included to appropriately describe any fluid matter. In addition, the early universe is taken to be spatially homogeneous and anisotropic, which somehow isotropized to the current FLRW models of the present-day universe. \\
\indent We will proceed in the following manner: We will first derive the viscous analogue of the all-important Raychaudhuri equation which describes the expansion/contraction behaviour of a congruence of fluid lines, and necessarily the universe. After discussing some qualitative solutions to this equation, we solve The Einstein Field equations for a Bianchi Type-IV algebra with a viscous fluid matter source using the approach of orthonormal frames.The point of this approach is that the field equations which typically constitute ten coupled, non-linear, hyperbolic partial differential equations reduce to a system of autonomous, first-order, coupled ordinary differential equations which are easier to solve using numerical methods. We will conclude with a discussion of The Penrose-Hawking singularity theorem applied to a viscous universe. We assume that the reader has adequate knowledge of general relativity and differential geometry throughout. In addition, we use geometrized units, such that $c = G = 1$. In addition, as typical in cosmology settings, we will use comoving coordinates, such that $u^{a} = (1,0,0,0)$, the metric signature $(-1,1,1,1)$, which implies that $u^{a}u_{a} = -1$.


\section{The Notion of Spatial Homogeneity}
Standard cosmological models have for the most part taken the universe to be at a minimum, spatially homogeneous. The historical reasons for this have been well- described in the literature. The interested reader should see for example \cite{ellisobserve} and \cite{mtwch30}. In terms of the timeline of our universe, cosmological models are essentially divided up into three categories. The first category consists of cosmological models that describe the immediate post-Big Bang universe, that is the universe on Planck scales. Such models are the typical quantum gravity models based on the concept of Misner's minisuperspace \cite{misnerb} \cite{misnerc} and The Hamiltonian formulation of General Relativity given by Arnowitt, Deser, and Misner \cite{ADM} , and Dirac \cite{dirac}. There are of course others that have contributed to this subject, however, the main work seems to be accredited to ADM and Dirac. The second category consists of models of the early universe, that is, the post-Big Bang universe considered on scales much larger than Planck scales such that classical General Relativity can be applied. The key point of these models is that they are taken to be spatially homogeneous, but since it is widely assumed that this stage of the universe still had much anisotropy in terms of cosmic radiation, these models are taken to be anisotropic, and hence conveniently characterized by the Bianchi classifications. Much work has been done in this area as well. For example, Singh and Baghel have recently studied Bianchi Type V bulk viscous cosmological models with a time dependent cosmological constant. \cite{singhbaghel} Misner famously studied the Bianchi Type IX/Mixmaster models which describe a chaotic decay of the anisotropy of the universe. \cite{misner} In addition, Belinskii, Khalatnikov, and Lifshitz have studied in detail the oscillatory approach to a singular point in the relativistic cosmology. \cite{lifshitz} The third category consists of the present-day models of the universe which are entirely described by The Friedmann-LeMatire-Robertson-Walker (FLRW) models with a perfect fluid matter source. Not only are these models spatially homogeneous, but they are also isotropic about every point making them a special class of the more general Bianchi models. One of the major areas of research in cosmology is to understand how the anisotropies of the early universe decayed to the present-day state of isotropy. This problem has been looked at in quite some detail as well. The interested reader is encouraged to read \cite{gutzwiller}, \cite{berger}, \cite{barrow}, especially \cite{hobbill}, \cite{chitre}, and \cite{misnerc2}. In addition, Salucci and Fabbri \cite{salucci} have in great detail analyzed the cosmological evolution of general Bianchi models in the adiabatic regime. They showed that a constant or slowly varying energy-momentum tensor isotropized the Hubble expansion. As alluded to earlier, our area of interest for this thesis is to develop a model of the early universe with a viscous fluid matter source. This model is unique in that the Bianchi Type-IV model we will consider has not been described in the literature. We have so far said a great deal about spatial homogeneity and (an)isotropy. Therefore, it is appropriate to describe what these terms mean in a more precise, mathematical context. 

\subsection{The Mathematical Foundations of Spatial Homogeneity}
In this section, we will build the mathematical foundations for spatial homogeneity. The arguments we present here are largely based on the descriptions given by Hervik and Gr{\o}n \cite{hervik}, Ellis and MacCallum \cite{ellismac}, Eisenhart \cite{eisenhart} \cite{eisenhart2}, and Stephani et.al \cite{stephani2}. The interested reader is encouraged to refer to these sources for further details. However, we attempt to provide as comprehensive of a description as possible. Basically, a space is homogeneous if it admits a group of motions. That is, a space is topologically homogeneous if you can carry one point to any other point via an isometry. If we consider a space $M$ on the manifold described by the metric tensor $\mathbf{g}$, we can then define an isometry group as:

\begin{equation}
Isom(M) \equiv \{\phi: M \rightarrow M|\phi \mbox{ isometry} \}
\end{equation}

An isometry operation on the metric is one that leaves the metric unchanged, that is, $\phi*\mathbf{g} = \mathbf{g}$. From classical Riemannian geometry, we know that Killing vector fields generate isometries. We can therefore say that a space is homogeneous if for each pair of points $(p,q) \in M, \exists \phi \in Isom(M) \mbox{ such that } \phi(p) = q$. In some literature \cite{fields}, homogeneous spaces are referred to as transitive. 

We can extend this definition a bit further. We could define an isotropy subgroup of a point $ p \in M$ by:

\begin{equation}
S_{p}(M) = \{ \phi \in Isom(M) | \phi(p) = p\}
\end{equation}

Intuitively, this definition is providing for a mechanism for spatially homogeneous spacetimes that are also isotropic. By leaving one point fixed as the above definition implies, we are allowing for the possibility of rotations. We shall see the importance of this subgroup in due course.

If we further let $dim\{Isom(M)\} = n$, and $dim\{S_{p}(M)\} = m$, a space is homogeneous if $n \geq dim\{M\}$. It is said that $M$ is simply transitive if it is homogeneous and $n = dim\{M\}$. $M$ is said to be multiply transitive if $M$ is homogeneous, and $n \geq dim\{M\}$.  

In addition, if we consider for a point $p \in M$ in a subspace of $M$ such that:

\begin{equation}
H_{p} = \{q \in M | q = \phi(p) \mbox{ for a } \phi \in Isom(M)\}
\end{equation}

This subspace is known as the orbit of $p$ under the isometry group. Going back to the idea of moving from one point to another in a space, all points that we can reach via the action of the isometry group on $p$, is necessarily in the orbit of $p$. In addition, if the orbit of $p$ is actually the whole space, that is, $H_{p} = M$, then space is homogeneous/transitive.

In a simply transitive space, there always exists a basis of Killing vectors $\mathbf{\xi}_{i}$, such that:

\begin{equation}
\left[\mathbf{\xi}_{i}, \mathbf{\xi}_{j}\right] = F^{k}_{ij}\mathbf{\xi}_{k}
\end{equation}

This indeed is the main method described by Stephani \cite{stephani2} and Taub \cite{taub}. However, this seems slightly inefficient as to get an expression for the metric tensor involves expressing these Killing vector basis vectors in canonical form, that is, in a coordinate basis. Except for trivial cases, this leads to complicated expressions for the Ricci tensor that are undesirable to work with when solving The Einstein field equations. We therefore take another approach by noting that we can define another set of basis vectors as follows: At a point $p \in M$, we choose a basis $\mathbf{e}_{i}$ and define what is known as a \emph{left-invariant frame} by Lie transporting it around the space in the manifold. That is, by the definition of the Lie derivative:

\begin{equation}
\mathcal{L}_{\mathbf{\xi}_{j}} \mathbf{e}_{i} = \left[\mathbf{\xi}_{j}, \mathbf{e}_{i}\right] = 0
\end{equation}

In general, a frame with basis vectors $\mathbf{e}_{j}$ defines a left-invariant frame, while the frame with Killing vectors $\mathbf{\xi}_{i}$ as the basis vectors defines a right-invariant frame. In addition, applying the general property:

\begin{equation}
\mathcal{L}_{\mathbf{\xi}_j} \left[\mathbf{e}_{i}, \mathbf{e}_{k}\right] = 0
\end{equation}

We see that the basis vectors $\mathbf{e}_{i}$ span a Lie algebra. Therefore, for structure constants $C^{k}_{ij}$, they obey the general property:

\begin{equation}
\left[\mathbf{e}_{i}, \mathbf{e}_{j}\right]  = C^{k}_{ij} \mathbf{e}_{k}
\end{equation}

We will therefore construct a homogeneous space by first defining a left-invariant frame as:

\begin{equation*}
\left[\mathbf{e}_{i}, \mathbf{e}_{j}\right]  = C^{k}_{ij} \mathbf{e}_{k}
\end{equation*}

By definition, if $\omega^{k}$ is the dual basis to $\mathbf{e}_{k}$, then, necessarily:

\begin{equation}
d\mathbf{\omega}^{k} = -\frac{1}{2} C^{k}_{ij} \mathbf{\omega}^{i} \wedge \mathbf{\omega}^{j}
\end{equation}

These invariant one-forms can be used in turn to define a metric (which is also left-invariant) as:

\begin{equation}
ds^2 = g_{ij} \mathbf{\omega}^{i} \otimes \mathbf{\omega}^{j}
\end{equation}

This is a spatially homogeneous metric on the space. It is interesting to note that the Killing vectors of the metric are given by $\mathbf{\xi}_{i}$, while, the basis vectors $\mathbf{e}_{j}$ constitute a left-invariant frame.

\subsection{The Bianchi Classifications}
So far, we have considered the notion of spatial homogeneity in a general sense. We obviously would like to apply the aforementioned concepts to general relativity, and specifically, cosmology. We begin by assuming that four-dimensional spacetime can be foliated with three-dimensional spatial sections. That is, given a spacetime manifold $M$ , we have:

\begin{equation}
M = \mathbb{R} \times \Sigma_{t}
\end{equation}

Here, $\mathbb{R}$ represents the space of time vectors that are moving through the spatial slices, and $\Sigma_{t}$  is a homogeneous space of dimension three that is unique at each time $t_{i}$. We have attempted to illustrate this concept in Fig. (2.1).

\begin{figure}[h]
\begin{center}
\caption{3+1 Decomposition into Spatial Slices Orthogonal to the Time Vector}
\fbox{
\scalebox{1} 
{
\begin{pspicture}(0,-3.7365909)(15.202162,3.7365909)
\psbezier[linewidth=0.04,fillstyle=solid](2.0202675,3.1829271)(2.8659642,3.716591)(6.5509152,3.3628833)(6.5602674,2.362927)(6.56962,1.3629707)(4.1266665,2.2022796)(3.2602675,1.702927)(2.3938687,1.2035744)(1.760535,-1.4715507)(0.8802675,-0.997073)(0.0,-0.5225953)(1.1745708,2.6492631)(2.0202675,3.1829271)
\usefont{T1}{ptm}{m}{n}
\rput(3.8817227,2.327927){$M \subseteq \mathbb{R}^{4} $}
\psline[linewidth=0.04cm,arrowsize=0.05291667cm 2.0,arrowlength=1.4,arrowinset=0.4]{->}(6.2002673,1.842927)(9.060267,1.082927)
\psline[linewidth=0.04cm,arrowsize=0.05291667cm 2.0,arrowlength=1.4,arrowinset=0.4]{<-}(11.460267,3.142927)(11.380267,-3.297073)
\usefont{T1}{ptm}{m}{n}
\rput(10.811723,3.287927){$\mathbb{R}$}
\psbezier[linewidth=0.04,fillstyle=solid](10.040268,-2.617073)(10.080691,-1.6178904)(12.891315,-1.717555)(12.860268,-2.717073)(12.829219,-3.716591)(12.359618,-2.7131069)(11.360268,-2.677073)(10.360917,-2.641039)(9.999844,-3.6162555)(10.040268,-2.617073)
\psbezier[linewidth=0.04,fillstyle=solid](10.040268,-1.1370729)(10.080691,-0.13789035)(12.891315,-0.23755509)(12.860268,-1.237073)(12.829219,-2.2365909)(12.359618,-1.233107)(11.360268,-1.197073)(10.360917,-1.161039)(9.999844,-2.1362555)(10.040268,-1.1370729)
\psbezier[linewidth=0.04,fillstyle=solid](10.0202675,0.32292703)(10.060691,1.3221097)(12.871316,1.2224449)(12.840267,0.22292702)(12.809219,-0.7765909)(12.339618,0.22689301)(11.340267,0.26292703)(10.340917,0.29896104)(9.979844,-0.6762556)(10.0202675,0.32292703)
\psbezier[linewidth=0.04,fillstyle=solid](10.080268,1.682927)(10.120691,2.6821096)(12.931315,2.582445)(12.900268,1.582927)(12.869219,0.58340913)(12.399618,1.586893)(11.400268,1.6229271)(10.400917,1.658961)(10.039844,0.6837444)(10.080268,1.682927)
\usefont{T1}{ptm}{m}{n}
\rput(14.111723,-2.832073){$\Sigma_{1}$}
\usefont{T1}{ptm}{m}{n}
\rput(14.111723,-1.2520729){$\Sigma_{2}$}
\usefont{T1}{ptm}{m}{n}
\rput(14.111723,0.107927024){$\Sigma_{3}$}
\usefont{T1}{ptm}{m}{n}
\rput(14.0917225,1.787927){$\Sigma_{4}$}
\end{pspicture} 
}
}
\label{manifold3+1}
\end{center}
\end{figure}
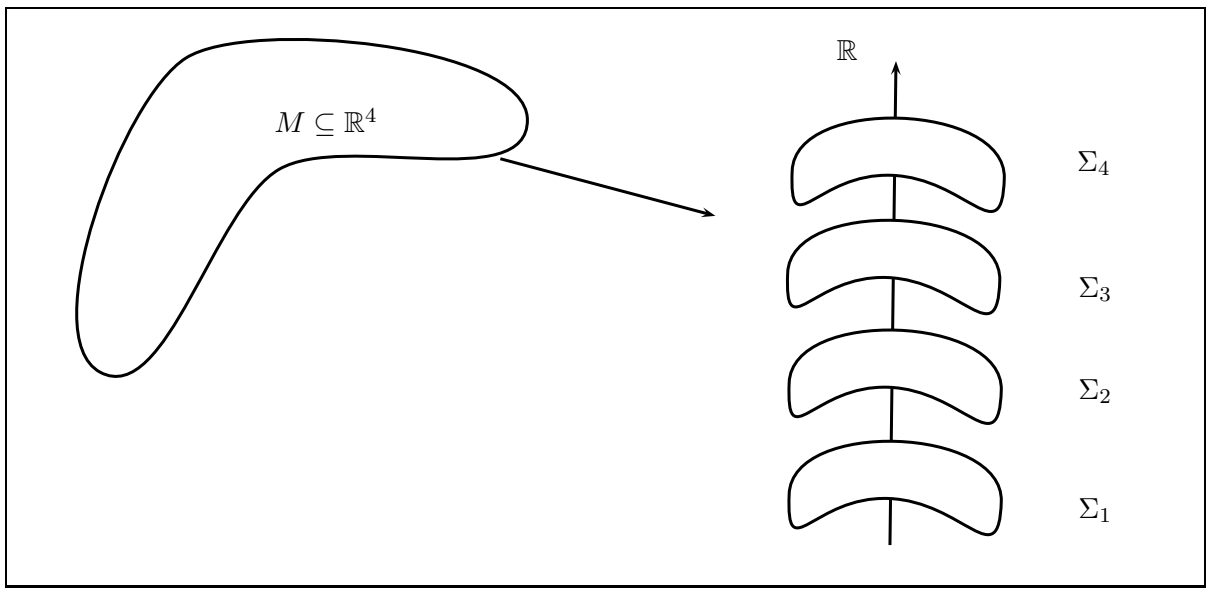

For a homogeneous space, there are two interesting cases with respect to cosmology. The first occurs for $dim\{Isom(M)\} = 6$, which means that $\Sigma_{t}$ is multiply transitive and is maximally transitive. That is, it represents a spatially homogeneous space that is translationally and rotationally invariant, or, it is spatially homogeneous and isotropic. The models of the present-day universe, the Friedmann-LeMa\^{\i}tre-Robertson-Walker models belong to this classification. With respect to early-universe cosmology, we are interested in cases where $dim\{Isom(M)\} = 3$, which implies that $\Sigma_{t}$  is simply transitive. That is, we are studying spacetimes that are only spatially homogeneous, neglecting the additional assumption of isotropy. The question seems to be then how many different classifications of Lie algebras can we have in three dimensions? Such classifications are called The Bianchi models, first developed by Luigi Bianchi \cite{Bianchi}. One typically begins by listing the Bianchi models in terms of their structure constants. In general, we can employ the Behr decomposition which enables us to write the structure constants as a decomposition of the symmetric and asymmetric parts:

\begin{equation}
C^{k}_{ij} = \epsilon_{ijl}n^{lk} + a_{l} \left(\delta^{l}_{i} \delta^{k}_{j} - \delta^{l}_{j} \delta^{k}_{i}\right)
\end{equation}

In this equation, $a_{i}$ is a vector belonging to the Lie algebra, for which we have the freedom to choose a basis such that $a_{i} = a \delta^{3}_{i}$. 
\begin{equation}
d\mathbf{\omega}^{k} = -\frac{1}{2} C^{k}_{ij} \mathbf{\omega}^{i} \wedge \mathbf{\omega}^{j}
\end{equation}

One then writes the metric tensor as:

\begin{equation}
ds^2 = -dt^2 + g_{ij}(t) \mathbf{\omega}^{i} \otimes \mathbf{\omega}^{j}
\end{equation}

This approach is useful when considering the Hamiltonian/Lagrangian formulations of general relativity with respect to Quantum gravity, or gravitational chaos. Such approaches have been discussed in detail by Ryan and Shepley \cite{shepley}, Hobbill et.al \cite{hobbill}, Misner \cite{misnerc}, Burd \cite{burd}, and others. Now that we have cemented the geometric structure of a spatially homogeneous spacetime, we now apply these concepts to deriving a form of The Einstein field equations that will result in a first-order system of ordinary differential equations. First, however, we will need a way to describe the presence of viscous fluid matter in spacetime. We do this in the next chapter.

\section{The Mathematical Description of a Viscous Fluid Matter Source in Spacetime}
So far, we have discussed the topological formalism of a spatially homogeneous spacetime by introducing the concept of The Bianchi types.  This is obviously only one half of the story. To develop a general relativistic theory of the early universe, one also needs to introduce a matter source that couples to the curvature of spacetime. We are particularly concerned with viscous fluids in this thesis. It is believed that the early universe contained a matter source that was a high-temperature viscous fluid. \cite{brevik} \cite{brevik2} It is quite plausible that such a fluid had strong properties of turbulent and compressible flow.  Therefore, in order to account for these effects, it is necessary to include the effects of viscosity in any matter source description via the energy-momentum tensor.


\subsection{Energy-Momentum Tensor for a Viscous Fluid}
As per the aforementioned requirements, we need an expression for the energy momentum tensor $T_{ab}$ for the viscous fluid flow. One is already quite familiar with the energy-momentum tensor of a perfect,  non-viscous fluid:

\begin{equation}
T^{ab} = u^{a}u^{b}(\mu + p) - u^{c}u_{c}g^{ab}p 
\end{equation}
(Where $\mu$ represents the total mass-energy density of the fluid)

This comes from the idea of considering a pressure-less fluid:

\begin{equation}
T_{00} = \mu u^0 u^0 \Rightarrow T_{ab} = \mu u^{a}u^{b}
\end{equation}

Assuming there is a fluid pressure, from elementary fluid mechanics it is well known that such a pressure gradient only is defined for the spatial coordinates, so any such contribution to the energy-momentum tensor should be proportional to $p h^{ab}$, where $h^{ab}$ is the projection tensor, and is defined as: $h^{ab} \equiv g^{ab} - \frac{u^{a}u^{b}}{u^{c}u_{c}}$.  We can therefore write that:

\begin{equation}
T_{ab} = \mu u^{a} u^{b} + \alpha ph^{ab} \mbox{  $(\alpha = \pm 1)$}
\end{equation}

This in turn implies that:

\begin{equation}
T_{ab} = u^{a}u^{b} \left(\mu - \alpha p/u^cu_c\right) + \alpha p g^{ab}
\end{equation}

Letting $\alpha = -u^c u_c$, we find that the appropriate modified form of the energy-momentum tensor for a perfect fluid is evidently:

\begin{equation}
T^{ab} = (\mu + p)u^{a}u^{b} - u^{c}u_{c} g^{ab} p
\end{equation}

Letting $\mu + p = w$:

\begin{equation}
T^{ab} = w u^{a} u^{b} - u^{c}u_{c}g^{ab} p
\end{equation}

This is precisely the equation for the energy-momentum tensor for a perfect fluid as given in \cite{landau_fluid}. We proceed with our derivation by first denoting the additional viscous contributions by $\mathcal{V}_{ab}$, so that:

\begin{equation}
T_{ab} = wu_{a}u_{b} - u_{c}u^{c}g_{ab}p + \mathcal{V}_{ab}
\end{equation}

To obtain the form of this additional tensor term, we note that from classical fluid mechanics, the Euler equation is given as:

\begin{equation}
\left(\rho u_i \right)_{,t} = -\Pi_{ik,k}
\end{equation}

(Where, $\Pi_{ik,k}$ is the momentum flux tensor). Also, recall that for a non-viscous fluid, one has:

\begin{equation}
\Pi_{ik} = p \delta_{ik} + \rho u_i u_k
\end{equation}

We simply add a term that represents the viscous momentum flux, $\Sigma'_{ik}$:

\begin{equation}
\Pi_{ik} = p \delta_{ik} + \rho u_i u_k - \Sigma'_{ik} = -\Sigma_{ik} + \rho u_i u_k
\end{equation}

Where

\begin{equation}
\Sigma_{ik} = -p \delta_{ik} + \Sigma'_{ik}
\end{equation}

is the stress tensor, while, $\Sigma'_{ik}$ is the \emph{viscous} stress tensor. 

The general form of the viscous stress tensor can be formed by recalling that viscosity is formed when the fluid particles move with respect to each other at different velocities, so this stress tensor can only depend on spatial components of the fluid velocity. We assume that these gradients in the velocity are small, so that the momentum tensor only depends on the first derivatives of the velocity in some Taylor series expansion. Therefore, $ \Sigma'_{ik}$ is some function of the $u_{i,k}$.  In addition, when the fluid is in rotation, no internal motions of particles can be occurring, so we consider linear combinations of $u_{i,k} + u_{k,i}$, which clearly vanish for a fluid in rotation with some angular velocity, $\Omega_i$. The most general viscous tensor that can be formed is given by:

\begin{equation}
\Sigma'_{ik} = \eta \left(u_{i,k} + u_{k,i} - \frac{2}{3}\delta_{ik} u_{l,l}\right) + \xi \delta_{ik} u_{l,l}
\end{equation}

Here, $\eta$ and $\xi$ are the coefficients of shear and bulk/second viscosity respectively \cite{landau_fluid} \cite{kundu}. In the above equation, we note that $\delta_{ik} u_{l,l}$  is an expansion rate tensor, and $\left(u_{i,k} + u_{k,i} - \frac{2}{3}\delta_{ik} u_{l,l}\right)$ represents the shear rate tensor. Since we would like to generalize this expression in a general relativistic sense, we replace the partial derivatives above with covariant derivatives, and the Kroenecker tensor with a more general metric tensor, that is, $\delta_{ik} \rightarrow g_{ik}$. Therefore, we have that:

\begin{equation}
\Sigma'_{ik} = \eta \left(u_{i;k} + u_{k;i} - \frac{2}{3}g_{ik} u_{l;l}\right) + \xi g_{ik} u_{l;l}
\end{equation}

As we shall show later when discussing the Raychaudhuri equation, a fluid moving through spacetime can be described as a congruence of fluid lines in motion that exhibit properties of expansion/collapse, rotation, and shear. Denoting the shear rate tensor as $\sigma_{ab}$, and the expansion rate tensor as $\theta \equiv u^{a}_{;a}$, the above expression generalizes to:

\begin{equation}
\mathcal{V}_{ab} = -2\eta \sigma_{ab} - \xi \theta h_{ab}
\end{equation}

 Substituting these relationships into Eq. (3.7), we finally obtain:

\begin{equation}
T_{ab} = \left(\mu + p\right)u_{a}u_{b} - u_{c}u^{c}g_{ab}p - 2\eta \sigma_{ab} - \xi \theta h_{ab}
\end{equation}

\subsection{Equation of Motion for The Fluid}
Now that we have the energy-momentum tensor in hand, we are in a position to derive an equation of motion for the fluid. It is quite a remarkable fact that the classical equations of motion of fluid mechanics and electromagnetism can be derived quite easily from the appropriate energy-momentum tensor. That is, equations of motion are typically found from conservation laws. The generalization of conservation of energy-momentum in general relativity appears via the divergence-free property of the energy-momentum tensor, which itself, is coupled to the divergence-free property of the Einstein tensor. 

For notational convenience (which will be clearer in what follows), we will let $\pi_{ab} = 2\eta \sigma_{ab}$, such that:
\begin{equation}
T_{ab} = (\mu + p)u_{a}u_{b} + g_{ab} p - \pi_{ab} - \xi \theta h_{ab} 
\end{equation}

We write this in the mixed form:
\begin{eqnarray}
T^{b}_{a} &=& (\mu + p) u_{a} u^{b} + \delta_{a}^{b} p - \pi^{b}_{a} - \xi \theta h^{b}_{a} \nonumber \\
&=& (\mu + p) u_{a} u^{b} + \delta_{a}^{b} p - \pi^{b}_{a} - \xi \theta h^{b}_{a}  \nonumber \\
&=& (\mu + p) u_{a} u^{b} + \delta_{a}^{b} p - \pi^{b}_{a} - \xi \theta \left(\delta^{b}_{a} + u_{a}u^{b}\right) 
\end{eqnarray}

Energy conservation requires that this quantity should have a vanishing divergence, $T^{ab}_{;b} = 0 = T^{b}_{a;b}$. Upon a contraction with $u^{a}$, we get that:
\begin{eqnarray}
u^{a} \left((\mu + p) u_{a} u^{b} + \delta_{a}^{b} p - \pi^{b}_{a} - \xi \theta \left(\delta^{b}_{a} + u_{a}u^{b}\right) \right)_{;b} = 0
\end{eqnarray}

We will now evaluate this expression term-by-term. For the first term, we get:

\begin{eqnarray}
u^{a} \left( (\mu + p)_{;b} u_{a} u^{b} + (\mu + p) u_{a;b} u^{b} + (\mu + p) u_{a} u^{b}_{;b}\right) \nonumber \\
=  u^{a} \left( (\mu + p)_{,b} u_{a} u^{b} + (\mu + p)u_{a;b} u^{b} + (\mu + p) u_{a} \theta \right) \nonumber \\
= \left( -(\mu + p)_{,b} u^{b}  - (\mu + p) \theta \right) \nonumber \\
= \left(-\dot{\mu} - \dot{p} - (\mu + p)\theta \right)
\end{eqnarray}

For the second term, we have:

\begin{eqnarray}
u^{a} \left( \delta^{b}_{a} p \right)_{;b} \nonumber \\
= u^{a} \left(\delta^{b}_{a;b} p + \delta^{b}_{a} p_{;b}\right) \nonumber \\
= u^{a} \left(\delta^{b}_{a} p_{,b} \right) \nonumber \\
= \dot{p}
\end{eqnarray}

For the third term, we have:
\begin{eqnarray}
u^{a} \left(\pi^{b}_{a}\right)_{;b} \nonumber \\
= -\sigma_{ab} \pi^{ab}
\end{eqnarray}

For the fourth term, we have:

\begin{eqnarray}
u^{a} \left(\xi \theta \left(\delta^{b}_{a} + u_{a} u^{b} \right)\right)_{;b} \nonumber \\
= u^{a} \left(\xi_{;b} \theta  \left(\delta^{b}_{a} + u_{a} u^{b} \right) + \xi \theta_{;b}  \left(\delta^{b}_{a} + u_{a} u^{b} \right) + \xi \theta \left(\delta^{b}_{a} + u_{a} u^{b} \right)_{;b} \right) \nonumber \\
= -\xi \theta^{2}
\end{eqnarray}

Combining these individual expressions, we obtain:

\begin{equation}
\left(-\dot{\mu} - \dot{p} - (\mu + p)\theta \right) + \dot{p}  + \sigma_{ab} \pi^{ab} + \xi \theta^{2} = 0
\end{equation}

Cleaning this up slightly, we have:

\begin{equation}
\dot{\mu} + (\mu + p)\theta - \sigma_{ab}\pi^{ab} - \xi \theta^2   = 0
\end{equation}

Recall, we had initially defined $\pi^{ab} = 2\eta \sigma^{ab}$, so this equation becomes:

\begin{equation}
\dot{\mu} + (\mu + p)\theta - 2\eta \sigma_{ab} \sigma^{ab} - \xi \theta^2  = 0
\end{equation}

Using the definition $\sigma^2 = \frac{1}{2} \sigma^{ab} \sigma_{ab} \Rightarrow \sigma^{ab} \sigma_{ab} = 2\sigma^2$, we have finally that:

\begin{equation}
\dot{\mu} + (\mu + p)\theta - 4\eta \sigma^2 - \xi \theta^2 = 0
\end{equation}

We assume that in the early universe, any fluid would behave ultra-relativistically, that is, it would obey an equation of state:

\begin{equation}
p = \frac{1}{3} \mu
\end{equation}

Substituting this relationship in the previous equation, we get:

\begin{equation}
\dot{\mu} + \left(\frac{4}{3} \mu\right) \theta - 4 \eta \sigma^2 - \xi \theta^2 = 0
\end{equation}

This seemingly simple equation is quite a remarkable result in actuality. It describes the evolution of the fluid in our universe. It is important to note, however, that this is one equation with three unknown variables. As we shall see later, this evolution equation is actually the equation that closes the Einstein field equations constituting a complete dynamical system description of the universe and its matter content.  

\subsection{A Discussion on Energy Conditions}
Later on, we will attempt to prove the existence of an initial singularity for this viscous fluid universe. The arguments presented there will depend strongly on what are known as the strong and weak energy conditions. Having just introduced the form of the energy momentum tensor, this would be an appropriate opportunity to discuss these concepts. Assuming that the universe on a large-scale structure does not have \emph{exotic} matter, that is, matter that does not have a negative energy density, any observer moving with four-velocity $u^{a}$ will necessarily measure the energy density (which is a scalar quantity) as $T_{ab}u^{a}u^{b}$. This general condition is typically broken down into two specific conditions: \emph{The Weak Energy Condition}, an \emph{The Strong Energy Condition} \cite{ellis3} \cite{hervik} .  The weak energy condition says that for all $u^{a}$ in a spacetime,
\begin{equation}
T^{ab}u_{a}u_{b} \geq 0
\end{equation}

On the other hand, the strong energy condition says that for all $u^{a}$ in a spacetime,

\begin{equation}
\left(T^{ab} - \frac{1}{2}T g^{ab}\right)u_{a}u_{b} \geq 0
\end{equation}

Using The Einstein field equations, we can write this equation as simply:

\begin{equation}
R^{ab} u_{a} u_{b} \geq 0
\end{equation}

From a physical point of view, the weak energy condition is a statement about all timelike observers measuring a positive energy density. The strong energy condition is more restrictive than that in the following sense. To diagonalize the energy-momentum tensor, let us choose a frame whose basis consists of the eigenvectors of the energy-momentum tensor, and denote the eigenvalues and eigenvectors by $\mu$ and $p_{i} \mbox{ } i = 1,...,3$. Note that it is common to label the $p_{i}$ as the principal pressures. We then see that the weak energy condition (WEC) satisfies the condition:

\begin{equation}
\mu \geq 0 \wedge \mu + p_{i} \geq 0 
\end{equation}

Whereas, the strong energy condition (SEC) satisfies the condition:

\begin{equation}
\mu + \sum_{i} p_{i} \geq 0 \wedge \mu + p_{i} \geq 0
\end{equation}

Furthermore, Eq. (3.31) implies that any spacetime whose corresponding energy-momentum tensor satisfies the SEC necessarily has positive curvature. This, in turn, says a great deal about the geodesics of this spacetime. In that, for this spacetime two neighbouring geodesics initially parallel and separated will eventually converge. These concepts will be used later on in the discussion of the necessary criteria for the existence of a past singularity.

\section{The Einstein Field Equations}
 At the heart of general relativity is the idea that spacetime itself is dynamical. That is, in The Einstein field equations is the notion that a matter source induces spacetime curvature, which in turn, induces motion on the matter source. Since we are particularly interested in early-universe cosmology, we will solve these equations for a spatially homogeneous, anisotropic, viscous fluid universe. As we alluded to earlier, we will specifically work with the Bianchi Type-IV model. The reason for this is largely that Bianchi Type-IV models have not been studied in the context of early-universe, viscous cosmology. We therefore feel that solutions to this model will provide valuable insights to the fields of early-universe cosmology, and more broadly, cosmological dynamical systems. As General Relativity is a \emph{coordinate-invariant} theory, there are often several practical approaches to solving The Einstein field equations. That is, there are several appropriate choices for coordinate systems that can simplify the process of obtaining a solution to the field equations. The great detail in describing the Bianchi structure above was with great reason. Given the structure of the Bianchi models and the general covariance of The Einstein field equations, we will employ the powerful method of \emph{orthonormal frames}. This approached was first used by Ellis and MacCallum \cite{ellismac} \cite{ellis}, and in the author's opinion, is one of the most remarkable results to come out of general relativistic theory in recent history.  In order to use this approach, we will first give a covariant description of fluids in spacetime.
 
 \subsection{The Orthonormal Frame Approach}
 If one considers a fluid described by a family of time-like curves, with 4-velocity, $u^{a}$, the properties of the fluid flow are most appropriately described by the decomposition \cite{dyer}:

\begin{equation}
u_{a;b} = -a_{a}u_{b} + \sigma_{ab} + \omega_{ab} + \frac{1}{3}h_{ab}\theta
\end{equation}

In this equation, $\theta \equiv u^{a}_{;a}$ is a measure of the divergence of the family of time-like curves, $a_{a} \equiv u_{a;b}u^{b}$ is a measure of how much the curves represent non-geodesics, thus, can be taken to be the fluid acceleration, and $h_{ab} \equiv g_{ab} - u_{a}u_{b} / u^{c}u_{c}$ is a projection tensor. We also have terms that represent the shear and vorticity of the fluid:

\begin{equation}
\sigma_{ab} = \left[\frac{1}{2}\left(u_{m;n} + u_{n;m}\right) - \frac{1}{3}u^{c}_{;c}h_{mn}\right]h^{m}_{a}h^{n}_{b}
\end{equation}

\begin{equation}
\omega_{ab} = \frac{1}{2}\left(u_{m;n} - u_{n;m}\right)h^{m}_{a} h^{n}_{b}
\end{equation}

In addition, we note that since for any velocity field $u^a u_{a} = \pm 1$, the following addition relations hold:

\begin{equation}
a_c u^c = \omega_{nc}u^c = \sigma_{nc}u^c = h_{nc}u^c = 0
\end{equation}
 
When using orthonormal frames, the metric tensor, $g_{u v}$ takes a very simple form for the following reason. Any cosmological model can be described by a coordinate system in which we can define a basis $ \{\mathbf{e}_{u} \}$, and its dual basis of differential one-forms $ \{\mathbf{\omega}^{u} \}$. As usual, the metric tensor is then defined as $g_{uv} = \mathbf{g(\mathbf{e}_{u}, \mathbf{e}_{v})}$, from which the line element is $ds^2 = g_{uv} \omega^{u} \omega^{v}$. For a given coordinate chart on the pseudo-Riemannian manifold, we can take the basis formed by $ \{\mathbf{e}_{u} \}$ as the \emph{coordinate basis} $ \{ \partial / \partial x^{i} \}$, with the dual basis then playing the role of the coordinate differential one-forms, $\{ dx^i \}$.  If we now assume that the basis constitutes the \emph{orthonormal frame} of this section, then all four basis vectors $ \{\mathbf{e}_{u} \}$ are mutually orthonormal, and thus satisfy the relationship: $ \mathbf{g(\mathbf{e}_{u}, \mathbf{e}_{v})} = \eta_{uv} = \delta_{uv} = diag(-1,1,1,1)$. Hence, $g_{uv}$  is really just the Kronecker tensor, and all tensors in this section in the orthonormal frame approach have their indices lowered/raised with it. Furthermore, the structure constants are now functions, which we denote by a lowercase $c$ to distinguish between the constants described previously by an uppercase $C$. These functions satisfy the commutation relation:
 
 \begin{equation}
 \left[\mathbf{e}_{u}, \mathbf{e}_{v}\right] = c^{a}_{uv} \mathbf{e}_{a}
 \end{equation}
 
 Recall that for an arbitrary basis, for a general (torsion-free) spacetime, we have:
 
 \begin{equation}
 c^{a}_{uv} = \Gamma^{a}_{vu} - \Gamma^{a}_{uv}
 \end{equation}

We can then write the connection coefficients as:

\begin{equation}
\Gamma_{auv} = \frac{1}{2} \left(g_{ab} c^{b}_{vu} + g_{uv}c^{b}_{av} - g_{vb}c^{b}_{au}\right)
\end{equation} 

Building on the discussion of the previous section, we shall assume that the fluid is \emph{non-tilted}. That is, the fluid four-velocity is orthogonal to the spatial hypersurfaces. This immediately implies that the fluid is irrotational and geodesic:

\begin{equation}
\omega_{uv} = u_{u;v}u^{v} = 0
\end{equation}

We can therefore write that:
\begin{equation}
\theta_{uv} = u_{u;v} = \frac{1}{3}\theta h_{uv} + \sigma_{uv}
\end{equation}

As we mentioned the four-velocity is orthogonal to the spatial slices, which additionally implies:
\begin{equation}
\theta_{uv} = \Gamma^{t}_{uv} \Rightarrow c^{t}_{ta} = c^{t}_{ab} = 0
\end{equation}

As we did with the Behr decomposition, we can write the structure coefficients as:
\begin{equation}
c^{a}_{tb} = -\theta^{a}_{b} + \epsilon^{a}_{bc} \Omega^{c}
\end{equation}

The vector $\Omega^{c}$ has the special significance of being interpreted as a local angular velocity in the rest frame of an observer with spatial frame $\{\mathbf{e}_{a}\}$. It is conventionally defined as:

\begin{equation}
\Omega^{a} = \frac{1}{2} \epsilon^{abcd} u_{b} \mathbf{e}_{c} \cdot \dot{\mathbf{e}}_{d}
\end{equation}

It can be shown that that the structure coefficients are spatial, and therefore, must correspond to The Bianchi models. We can therefore write that:

\begin{equation}
c^{k}_{ij} = \epsilon_{ijl}n^{lk} + a_{l} \left(\delta^{l}_{i} \delta^{k}_{j} - \delta^{l}_{j} \delta^{k}_{i}\right)
\end{equation}

Since each structure coefficient is constant along each orbit of transitivity, $n^{ab}$ and $a_{i}$ are functions of time only. Even though the spatial frame described by the basis  $ \{\mathbf{e}_{u} \}$ is a left-invariant set on the spatial slices, the difference in the orthonormal frame approach is that the structure coefficients are time-dependent functions. Before, the structure coefficients were always taken to be constant in the true sense of the word. We can find evolution equations for  $n^{ab}$ and $a_{i}$ by first realizing that the Jacobi identity holds for all vectors. The Jacobi identity as applied to the set of vectors $\left(\mathbf{U}, \mathbf{e}_{a}, \mathbf{e}_{b}\right)$ gives:

\begin{eqnarray}
[\mathbf{U},[ \mathbf{e}_{a},  \mathbf{e}_{b}]] + [ \mathbf{e}_{a},[ \mathbf{e}_{b},\mathbf{U}]]+ [ \mathbf{e}_{b},[\mathbf{U}, \mathbf{e}_{a}]] = 0 \nonumber \\
\Rightarrow (\mathbf{U}(c^{v}_{ab}) + c^{v}_{tu}c^{u}_{ab} + c^{v}_{au}c^{u}_{bt} + c^{v}_{bu}c^{u}_{ta}) \mathbf{e}_{v} = 0
\end{eqnarray}

In this derivation,  $\mathbf{U}$ is a gauge vector that will typically be set to $\partial_{t}$. Note that since we are assuming that the fluid is non-tilted, necessarily $\theta_{uv} = \Gamma^{t}_{uv} \Rightarrow c^{t}_{ta} = c^{t}_{ab} = 0$. We therefore obtain the identity:

\begin{equation}
\partial_{t} (c^{k}_{ab}) + c^{k}_{td}c^{d}_{ab} + c^{k}_{ad}c^{d}_{bt} + c^{k}_{bd}c^{d}_{ta} = 0
\end{equation}

One can show that upon applying Jacobi's identity to the three spatial vectors, we get the eigenvalue equation \cite{hervik}:
\begin{equation}
n^{ij}a_{i} = 0
\end{equation}

Combining the previous equations and taking the trace of Eq. (4.15) gives the evolution equation for $a_{i}$ as:

\begin{equation}
\boxed{
\dot{a_{i}} + \frac{1}{3}\theta a_{i} + \sigma_{ij}a^{j} + \epsilon_{ijk}a^{j}\Omega^{k} = 0}
\end{equation}

The \emph{trace-free} part of Eq. (4.15) gives the evolution equation for $n_{ab}$ as:

\begin{equation}
\boxed{
\dot{n_{ab}} + \frac{1}{3}\theta n_{ab} + 2n^{k}_{_{(a}\epsilon_{b})kl}\Omega^{l}-2n_{k(_{a}}\sigma_{b})^{k} = 0}
\end{equation}

The important thing to keep in mind is that we would like the structure constants as defined in Eq. (4.13) to correspond to a Lie algebra. In order for this to be the case, Eq. (4.16) must hold.  Bianchi models are typically classified into two categories: Class A models and Class B models. Class A models are all Bianchi models such that $a_{i} = 0$ for which the latter is satisfied. For class B models, $a_{i}$ must be an eigenvector of $n^{ij}$ with zero eigenvalue. It is important to note that we always take $n^{ij}$ to be a symmetric matrix and as such we can diagonalize it using a orientation of our choice for the spatial frame. It has been conventional to assume:

\begin{equation}
n_{ij} = diag(n_{1}, n_{2}, n_{3}), \mbox{   }, a^{i} = (0, 0, a)
\end{equation}

The Jacobi identity immediately implies that $n_{33}a = n_{3}a = 0$. It is a fact of linear algebra that eigenvalues of a matrix are invariant under a conjugation operation with respect to rotations. One can then classify the different Bianchi models by the signs of the eigenvalues $n_{11}, n_{22}, n_{33} = n_{1}, n_{2}, n_{3}$ and $a$. Since, we are dealing with the Bianchi Type IV, we note that:

\begin{equation}
a^{i} = a \delta^{i}_{3} > 0, \mbox{ and } n_{1} > 0, n_{2} = n_{3} = 0.
\end{equation}

(The signs of the eigenvalues $n_{i}$ and $a$ for the other Bianchi types can be found in \cite{ellis} \cite{hervik}.

\subsection{The Dynamical Equations as Implied by The Einstein Field Equations}
In this chapter, we have so far explained the orthonormal frame approach, derived the evolution equations for $n_{ij}$ and $a^{i}$, and described the classification of the Bianchi Type-IV model. We will now derive the dynamical equations which are implied by The Einstein Field equations in the orthonormal frame approach. To give a sense of how remarkable this approach is to solving the field equations, one should first look at them written in terms of the metric tensor alone \cite{thorne}.

\begin{eqnarray}
 \tilde g^{uv} \frac{{\partial ^2 \tilde g^{ab} }}{{\partial x^u \partial x^v }} = \tilde g_{,u} ^{av} \tilde g^{bu} _{,v}  + \frac{1}{2}\left( {\tilde g^{ab} \tilde g_{lu} \tilde g_{,p} ^{lv} \tilde g_{,v} ^{pu} } \right) + \tilde g_{lu} \tilde g^{vp} \tilde g_{,v} ^{al} \tilde g_{,p} ^{bu}  -  \nonumber \\ 
 \tilde g^{av} \tilde g_{uv} \tilde g_{,p} ^{bv} \tilde g_{,l} ^{up}  - \tilde g^{bl} \tilde g_{uv} \tilde g_{,p} ^{av} \tilde g_{,l} ^{up}  +  \nonumber \\ 
 \frac{1}{8}\left( {2\tilde g^{av} \tilde g^{bu}  - \tilde g^{ab} \tilde g^{lu} } \right)\left( {2\tilde g_{vp} \tilde g_{st}  - \tilde g_{ps} \tilde g_{vt} } \right)\tilde g_{,l} ^{vt} \tilde g_{,u} ^{ps} \nonumber \\ 
 \end{eqnarray}

This system of partial differential equations represents the vacuum field equations in the deDonder gauge ($\tilde{g}^{ab}_{,b} = 0$), where $\tilde{g}^{uv} \equiv \sqrt{-g} g^{uv}$. This is a very complicated set of ten coupled highly nonlinear partial differential equations. As we shall show below, the field equations lead to a system of first-order ordinary differential equations in the orthonormal frame approach.

\subsection{The Raychaudhuri Equation}
We begin by recalling that one can consider a fluid evolution as described by a family of time-like curves, with 4-velocity, $u^{a}$. We had from before:
\begin{equation*}
u_{a;b} = -a_{a}u_{b} + \sigma_{ab} + \omega_{ab} + \frac{1}{3}h_{ab}\theta
\end{equation*}

Analyzing the equation of the divergence of the fluid flow, we are interested in how this divergence evolves in time:

\begin{equation}
\frac{d\theta}{dt} = \left(u^{a}_{;a}\right)_{;b}u^b = u^{a}_{;a;b}u^b
\end{equation}

Recalling the definition of the Riemann curvature tensor as determine whether the second covariant derivatives of some vector field in the manifold do not commute:

\begin{eqnarray}
u_{a;d;b} - u_{a;b;d} &=& R^{c}_{adb}u_c \nonumber \\
\Rightarrow u^{a}_{;d;b} - u^{a}_{;b;d} &=& R^{ca}_{db} u_{c} \nonumber \\
\Rightarrow u^{a}_{;d;b} - u^{a}_{;b;d} &=& -R^{ac}_{db} u_{c} \nonumber \\
\Rightarrow u^{a}_{;d;b} - u^{a}_{;b;d} &=& -R^{a}_{cdb} u^{c} 
\end{eqnarray}

Performing a contraction on the indices $a$ and $d$, we finally get:

\begin{equation}
u^{a}_{;a;b} - u^{a}_{;b;a} = - R_{bc}u^{c}
\end{equation}

Substituting this equation into Eq. (4.23), we obtain:

\begin{equation}
\frac{d\theta}{dt} =   u^a_{;b;a}u^{b} - R_{bc}u^c u^b
\end{equation}

Using the fact that: $\left(u^a_{;b}u^b\right)_{;a} = u^{a}_{;b;a}u^b + u^{a}_{;b}u^{b}_{;a}$, we have:

\begin{equation}
\frac{d\theta}{dt} = a^{a}_{;a} - R_{bc}u^{c}u^{b} - u_{a;b}u^{b;a}
\end{equation}

Since  $h_{ab}h^{ab} = 3, \sigma^2 = \frac{1}{2}\sigma^{ab}\sigma_{ab}, \omega^2 = \frac{1}{2}\omega^{ab}\omega_{ab}$, we now have:

\begin{equation}
u_{a;b}u^{b;a} = 2\sigma^2 - 2\omega^2 + \frac{1}{3}\theta^2 + \frac{2}{3}\theta h_{ab}\sigma^{ab}
\end{equation}

In addition, we also have that:

\begin{equation}
h^{ab}u_{a;b} = -a_{a}u_{b}h^{ab} + h^{ab}\sigma_{ab} + h^{ab}\omega_{ab} + \frac{1}{3}\theta h_{ab}h^{ab}
\end{equation}

Since: $h^{ab} \sigma_{ab} = 0$, and that:

\begin{equation}
u_{a;b} u^{b;a} = 2\sigma^2 - 2\omega^2 + \frac{1}{3}\theta^2
\end{equation}

We now finally get the \emph{Raychaudhuri equation} for $\theta$:

\begin{eqnarray}
\frac{d\theta}{dt} &=& -R_{bc}u^{b}u^{c} + a^{a}_{;a} + 2\omega^2 - 2\sigma^2 - \frac{1}{3}\theta^2 \nonumber \\
&=& -R_{bc}u^{b}u^{c} + \left(a^{a}_{,a} - \Gamma^{a}_{an}a_{n}\right) + 2\omega^2 - 2\sigma^2 - \frac{1}{3}\theta^2 
\end{eqnarray}

We can make a further simplification using the Einstein field equations as follows for the Ricci tensor term in the above equation:

\begin{eqnarray}
R_{ab} - \frac{1}{2}Rg_{ab} &=& \kappa T_{ab} \nonumber \\
\Rightarrow R - 2R &=& \kappa T \nonumber \\
\Rightarrow R &=& -\kappa T \nonumber \\
\Rightarrow R_{ab} + \frac{1}{2}\kappa T g_{ab} &=& \kappa T_{ab} \nonumber \\
\Rightarrow R_{ab} &=& \kappa \left(T_{ab} - \frac{1}{2}Tg_{ab} \right)
\end{eqnarray}

We compute the trace of the energy-momentum tensor as follows:
\begin{eqnarray}
T &=& T^{a}_{a} = g^{ab}T_{ab} = g^{ab}\left(wu_{a}u_{b} - u_{c}u^{c}g_{ab}p - 2\eta \sigma_{ab} - \xi \theta h_{ab} \right) \nonumber \\
&=& g^{ab} \left(wu_{a}u_{b}\right) - g^{ab}\left(u_{c}u^{c}g_{ab}p\right) - 2\eta g^{ab}\sigma_{ab} - \xi g^{ab} \left(\theta h_{ab} \right)  \nonumber \\
&=& u^{a} u_{a} (w- p \delta^{a}_{a}) -  2\eta g^{ab}\sigma_{ab} - \xi g^{ab} \left(\theta h_{ab} \right)  \nonumber \\ 
&=& u^{a} u_{a} (w - 4p) -  2\eta g^{ab}\sigma_{ab} - \xi g^{ab} \left(\theta h_{ab} \right) \nonumber \\
&=& u^{a}u_{a} (\mu - 3p)-  2\eta g^{ab}\sigma_{ab} - \xi g^{ab} \left(\theta h_{ab} \right)
\end{eqnarray}

We can simplify things slightly by writing:
\begin{equation}
\sigma_{ab} = \left[\frac{1}{2}\left(u_{m;n} + u_{n;m}\right) - \frac{1}{3}u^{c}_{;c}h_{mn}\right]h^{m}_{a}h^{n}_{b} = u_{(a;b)} + a_{(_{a}u_b)} - \frac{1}{3}\theta h_{ab}
\end{equation}

We then have:

\begin{eqnarray}
g^{ab}\sigma_{ab} &=& g^{ab}\left(u_{(a;b)} + a_{(_{a}u_b)} - \frac{1}{3}\theta h_{ab}\right) \nonumber \\
&=& g^{ab}u_{(a;b)}  + g^{ab} a_{(_{a}u_b)} - \frac{1}{3}\theta g^{ab}h_{ab} \nonumber \\
&=& u^{a}_{;a} + 0 -\theta \nonumber \\
&=& 0
\end{eqnarray}

Here we have used the fact that:

\begin{eqnarray}
g^{ab}h_{ab} &=& g^{ab} \left(g_{ab} - \frac{u_a u_b}{u^c u_c}\right) \nonumber \\
&=& \delta^{a}_{a} - \frac{u^b u_b}{u^c u_c} \nonumber \\
&=& 3
\end{eqnarray}

We then have:

\begin{equation}
g^{ab}\left(\theta h_{ab}\right) = \theta g^{ab} h_{ab} = 3 \theta
\end{equation}

Therefore, the trace of the energy-momentum tensor $T$ takes the form:

\begin{eqnarray}
T &=& u^{a}u_{a} (\mu - 3p)-  2\eta g^{ab}\sigma_{ab} - \xi g^{ab} \left(\theta h_{ab} \right)  \nonumber \\
&=& u^{a}u_{a}(\mu - 3p) - 3\xi \theta \nonumber \\
\end{eqnarray}

From our original definition of $T_{ab}$, it follows that:

\begin{eqnarray}
T_{ab}u^{a}u^{b} &=& \left(wu_{a}u_{b} - u_{c}u^{c}g_{ab}p - 2\eta \sigma_{ab} - \xi \theta h_{ab} \right)u^{a}u^{b} \nonumber \\
&=& wu_{a}u_{b}u^{a}u^{b} - u_cu^{c}g_{ab}u^{a}u^{b}p - 2\eta \sigma_{ab}u^{a}u^{b} - \xi \theta h_{ab}u^{a}u^{b}  \nonumber \\
&=& w\left(u_{a}u^{a}\right)^2 - p - 2\eta \sigma_{ab} u^{a} u^{b} - \xi \theta h_{ab}u^{a}u^{b} \nonumber \\
&=& w - p - 2\eta \sigma_{ab} u^{a} u^{b} - \xi \theta h_{ab}u^{a}u^{b}   \nonumber \\
&=& w - p \nonumber \\
&=& \mu + p - p  \nonumber \\
&=& \mu 
\end{eqnarray}

Combining the results we obtained above, we obtain:

\begin{eqnarray}
R_{ab}u^{a}u^{b} &=& \kappa u^{a} u^{b} \left(T_{ab} - \frac{1}{2}Tg_{ab}\right) \nonumber \\
&=& \kappa u^{a}u^{b} T_{ab} - \frac{1}{2}\kappa T g_{ab} u^{a}u^{b} \nonumber \\
&=& \kappa (\mu) - \frac{1}{2} \kappa \left[u^{a}u_{a}(\mu - 3p) - 3\xi \theta \right] u_{a}u^{a} \nonumber \\
&=& \kappa (\mu )- \frac{1}{2} \kappa \left[ (\mu - 3p) - 3 \xi \theta u^{a} u_{a}  \right] \nonumber \\
&=& \kappa \mu - \frac{1}{2} \kappa \mu + \frac{3}{2} \kappa p + \frac{3}{2} \kappa \xi \theta u^{a}u_{a} \nonumber \\
&=& \frac{1}{2} \kappa \left(\mu + 3p \right) + \frac{3}{2} \kappa \xi \theta u^{a} u_{a} \nonumber \\
\end{eqnarray}

Since we committed to the signature $(-1,+1,+1,+1)$, $u^{a} u_{a} = -1$, and:

\begin{equation}
R_{ab} u^{a} u^{b} = \frac{1}{2} \kappa \left(\mu + 3p \right) - \frac{3}{2} \kappa \xi \theta
\end{equation}

Therefore, we obtain \emph{The Raychaudhuri equation} for our viscous fluid:

\begin{eqnarray}
\frac{d\theta}{dt} &=& -R_{bc}u^{b}u^{c} + a^{a}_{;a} + 2\omega^2 - 2\sigma^2 - \frac{1}{3}\theta^2 \nonumber \\
&=& -\frac{1}{2} \kappa \left(\mu + 3p \right) + \frac{3}{2} \kappa \xi \theta - \frac{1}{2} \kappa + a^{a}_{;a} + 2\omega^2 - 2\sigma^2 - \frac{1}{3}\theta^2\nonumber \\
\end{eqnarray}

Since we are assuming the fluid is geodesic and irrotational, we obtain:

\begin{equation}
\frac{d\theta}{dt} = -\frac{1}{2} \kappa \left(\mu + 3p \right) + \frac{3}{2} \kappa \xi \theta- 2\sigma^2 - \frac{1}{3}\theta^2
\end{equation}

Since $p = \frac{1}{3}\mu$, we finally obtain:

\begin{equation}
\boxed{
\frac{d\theta}{dt}  = - \kappa \mu+ \frac{3}{2} \kappa \xi \theta- 2\sigma^2 - \frac{1}{3}\theta^2
}
\end{equation}

Note that we have taken some care to derive The Raychaudhuri equation by going through each step carefully since such a derivation for a bulk viscous fluid is not readily available in the literature.  

\subsection{The Generalized Friedmann Equation}
Exploiting the assumption of a non-tilted cosmology again, we state the additional fact that in this case:

\begin{equation}
u_{a;b} = \theta_{ab} = K_{ab}
\end{equation}

Here, we have denoted the \emph{extrinsic curvature} of the spatial slices by the \emph{extrinsic curvature tensor}, $K_{ab}$. Recall from standard differential geometry, we have that for any three-dimensional spatial slice:

\begin{eqnarray}
^{(4)}R &=& ^{(3)}R + K^2 - K^{\alpha \beta}K_{\alpha \beta} - 2 ^{(4)}R_{\alpha \beta}u^{\alpha} u^{\beta} \nonumber \\
\Rightarrow \kappa T &=& -^{(3)}R - K^2 + K^{\alpha \beta}K_{\alpha \beta} +  2 ^{(4)}R_{\alpha \beta}u^{\alpha} u^{\beta} \nonumber \\
\Rightarrow \kappa T &=&  -^{(3)}R - K^2 + K^{\alpha \beta}K_{\alpha \beta} + 2\kappa T_{\alpha \beta}u^{\alpha} u^{\beta} + \kappa T \nonumber \\
\Rightarrow 0 &=&-^{(3)}R - K^2 + K^{\alpha \beta}K_{\alpha \beta} + 2\kappa T_{\alpha \beta}u^{\alpha} u^{\beta} \nonumber \\
\Rightarrow \kappa T_{\alpha \beta} u^{\alpha \beta} &=& \frac{1}{2} \left(^{(3)}R - K^{\alpha \beta}K_{\alpha \beta} + K^2\right)
\end{eqnarray}

If we now take once again the decomposition equation and set $a_{a} = \omega_{ab} = 0$, we obtain:

\begin{equation}
u_{\alpha;\beta} = \theta_{\alpha \beta} = K_{\alpha \beta} = \sigma_{\alpha \beta} + \frac{1}{3}h_{\alpha \beta} \theta
\end{equation}

In addition we note that:

\begin{equation}
K^2 \equiv \frac{1}{2}K_{\alpha \beta}K^{\alpha \beta} 
\end{equation}

Substituting Eq. (4.38) into Eq. (4.45), we obtain:

\begin{eqnarray}
\kappa \mu &=& \frac{1}{2} \left(^{(3)}R - K^{\alpha \beta}K_{\alpha \beta} + K^2\right)
\end{eqnarray}

Using the definition in Eq. (4.47), one obtains the \emph{generalized Friedmann equation}:

\begin{equation}
\boxed{
\frac{1}{3}\theta^2 = \frac{1}{2}\sigma_{ab}\sigma^{ab} - \frac{1}{2}^{(3)}R + \kappa \mu
}
\end{equation}

\subsection{The Shear Propagation Equations}
The last set of dynamical equations implied by The Einstein Field equations are the \emph{shear propagation equations}. They essentially describe the evolution of the anisotropy in a cosmological model as a function of time. The lengthy derivation of these equations has been done in much of the literature on cosmological dynamical systems \cite{ellis} \cite{hervik} \cite{plebanski}. The idea behind the derivation is similar to that of the derivation of The Raychaudhuri equation. The \emph{shear propagation equations} are:

\begin{equation}
\boxed{
\dot{\sigma_{ab}} + \theta \sigma_{ab} -\sigma^{d}_{a}\epsilon_{bcd}\Omega^{c} - \sigma^{d}_{b}\epsilon_{acd}\Omega^{c} + ^{(3)}R_{ab} - \frac{1}{3}\delta_{ab}^{(3)}R = 2 \eta \kappa \sigma_{ab}}
\end{equation}

\subsection{The Ricci Curvature and Constraint Equations}
The greatest achievement of the orthonormal frame approach is that one no longer needs to express the components of the Ricci tensor in terms of the coordinate basis functions. The Ricci tensor is expressed entirely in terms of $a^{i}$ and $n_{ab}$. As discussed by Ellis and MacCallum \cite{ellismac} and Gr{\o}n and Hervik \cite{hervik}, the Ricci tensor takes the form:
\begin{equation}
^{(3)}R_{ab} = -\epsilon^{cd}_{a} n_{bc} a_d - \epsilon ^{cd}_{b} n_{ac} a_{d} + 2 n_{ad}n^{d}_{b} - n n_{ab} - \delta_{ab}\left(2a^2 + n_{cd}n^{cd} - \frac{1}{2}n^2\right)
\end{equation}

Contracting this equation, we obtain the Ricci scalar:

\begin{equation}
^{(3)}R = ^{(3)}R_{a}^{a} = -\left(6a^2 + n_{cd}n^{cd} - \frac{1}{2}n^2\right)
\end{equation}

One can also show that the off-diagonal components of the four-dimensional Ricci tensor yield a constraint equation:

\begin{equation}
3a^{b}\sigma_{ba} - \epsilon_{abc}n^{cd}\sigma^{b}_{d} = 0
\end{equation}

It will however be more convenient to write this equation out in component form:
\begin{eqnarray}
3a\sigma_{33} + \left(n^{11} - n^{22}\right)\sigma_{21} &=& 0 \nonumber \\
3a\sigma_{31} + n^{22}\sigma_{32} &=& 0 \nonumber \\
3a\sigma_{32} - n^{11}\sigma_{31} &=& 0 
\end{eqnarray}

We have now discussed and derived all the necessary formalism of a spatially homogeneous spacetime that is additionally anisotropic. We will now finally apply these results to The Bianchi Type IV metric in the next chapter.

\section{The Bianchi Type IV Dynamical Equations}
In this chapter, we will derive the dynamical equations associated with the Bianchi Type IV algebra, and study their solution. For the Bianchi Type IV geometry, we take $n_{11} = N$, and $n_{22} = n_{33} = 0$, where $N>0$. The constraint relations (Eq. 4.54) then lead to:

\begin{eqnarray}
3a \sigma_{33} +N \sigma_{21} &=& 0 \\
3a \sigma_{31} &=& 0 \\
3a \sigma_{32} - \sigma_{31} &=& 0
\end{eqnarray}

We can see that  $\sigma_{32} = 0$. 

Since the shear tensor is taken to be symmetric and traceless, in three dimensions, it only has two independent components. It is also clear that the only non-zero off-diagonal component is $\sigma_{21} = \sigma_{12}$. 

Therefore, we have that, for $a = b$:
\begin{equation}
\sigma_{ab} =  \left(\sigma_{+} + \sqrt{3}\sigma_{-}, \sigma_{+} - \sqrt{3}\sigma_{-}, -2\sigma_{+}\right)
\end{equation}

We have also one non-zero, independent, off-diagonal, component for the shear tensor:

\begin{equation}
\sigma_{21} = \sigma_{12} = \frac{-3a \sigma_{33}}{N} = \frac{6a \sigma_{+}}{N}
\end{equation}

Using Eqs. (4.17) and (4.18) we obtain:
\begin{eqnarray}
\Omega^{1} = \Omega^{2} &=& 0 \\
\dot{a} + \frac{1}{3} \theta a + \sigma_{33} a &=& 0 \nonumber \\
\Rightarrow \dot{a} + \frac{1}{3} \theta a - 2 \sigma_{+} a &=& 0
\end{eqnarray}

\begin{eqnarray}
\dot{N} - \frac{1}{3}\theta N - 2N \left(\sigma_{+} + \sqrt{3}\sigma_{-}\right) &=& 0 \\
\Omega^{3} &=& -6a \sigma_{+} 
\end{eqnarray}

We also compute the components of the three-dimensional Ricci tensor as follows:

\begin{eqnarray}
^{(3)}R_{ab} &=& -\epsilon^{cd}_{a} n_{bc} a_d - \epsilon ^{cd}_{b} n_{ac} a_{d} + 2 n_{ad}n^{d}_{b} - n n_{ab} - h_{ab}\left(2a^2 + n_{cd}n^{cd} - \frac{1}{2}n^2\right) \nonumber \\
&=& -\epsilon^{13}_{a} n_{b1} (a) - \epsilon^{13}_{b} n_{a1}(a) + 2n_{a1}n^{1}_{b} - nn_{ab} - \delta_{ab} \left(2a^2 + n_{11}n^{11} - \frac{1}{2}n^2\right) \nonumber \\
&=& -\epsilon^{13}_{a} n_{b1} (a) - \epsilon^{13}_{b} n_{a1}(a) + 2n_{a1}n^{1}_{b} - N n_{ab} - \delta_{ab}\left(2a^2 + \frac{1}{2}N^2 \right)
\end{eqnarray}

Now, if $a = 1, b=1$:

\begin{eqnarray}
^{(3)}R_{11} = \frac{1}{2}N^2 -  2a^2 
\end{eqnarray}

If $a = 1, b=2$:

\begin{eqnarray}
^{(3)}R_{12} = ^{(3)}R_{21} &=& -\epsilon^{13}_{2} n_{11} a \nonumber \\
&=& Na
\end{eqnarray}

If $a = 1, b=3$:
\begin{equation}
^{(3)}R_{13} = ^{(3)}R_{31} = 0
\end{equation}

If $a = 2, b=2$:

\begin{equation}
^{(3)}R_{22} =-\left(2a^2 + \frac{1}{2}N^2\right)
\end{equation}

If $a = 2, b=3$:

\begin{equation}
^{(3)}R_{23} =^{(3)}R_{32} = 0
\end{equation}

If $a = 3, b=3$:
\begin{equation}
^{(3)}R_{33} = -\left(2a^2 + \frac{1}{2}N^2\right)
\end{equation}

The Ricci scalar is computed from Eq. (151) to be:
\begin{equation}
R = ^{(3)}R_{a}^{a} = -6a^2 - \frac{1}{2}N^2
\end{equation}

Using Eq. (4.50), we derive the shear propagation equations as follows:
\begin{eqnarray}
\dot{\sigma_{ab}} + \theta \sigma_{ab} -\sigma^{d}_{a}\epsilon_{bcd}\Omega^{c} - \sigma^{d}_{b}\epsilon_{acd}\Omega^{c} + ^{(3)}R_{ab} - \frac{1}{3}\delta_{ab}^{(3)}R &=& 2 \eta \kappa \sigma_{ab} \nonumber \\
\Rightarrow \dot{\sigma_{ab}} + \theta \sigma_{ab} - \sigma^{d}_{a} \epsilon_{b3d}\Omega^{3} - \sigma^{d}_{b}\epsilon_{a3d}\Omega^{3} + ^{(3)}R_{ab} - \frac{1}{3}\delta_{ab}^{(3)}R &=& 2 \eta \kappa \sigma_{ab} \nonumber \\
\Rightarrow  \dot{\sigma_{ab}} + \theta \sigma_{ab} - \Omega^{3} \left(\sigma^{d}_{a}\epsilon_{b3d} + \sigma^{d}_{b}\epsilon_{a3d}\right)+ ^{(3)}R_{ab} - \frac{1}{3}\delta_{ab}^{(3)}R &=& 2 \eta \kappa \sigma_{ab} \nonumber \\
\end{eqnarray}

For the diagonal equations ($a = b$) we obtain:

\begin{equation}
\dot{\sigma_{+}} + \sqrt{3} \dot{\sigma_{-}} + \theta \left(\sigma_{+} + \sqrt{3} \sigma_{-}\right) - 72 a^2 \sigma_{+}^2 + \frac{2N^2}{3} = 2 \eta \kappa \left(\sigma_{+} + \sqrt{3} \sigma_{-}\right)
\end{equation}

\begin{equation}
\dot{\sigma_{+}} - \sqrt{3} \dot{\sigma_{-}} + \theta \left(\sigma_{+} - \sqrt{3} \sigma_{-}\right) + 72a^2 \sigma_{+}^2 - \frac{N^2}{3} = 2 \eta \kappa \left(\sigma_{+} - \sqrt{3} \sigma_{-}\right)
\end{equation}

\begin{equation}
 \dot{\sigma_{+}} + \sigma_{+} \theta + \frac{N^2}{6} = 2 \eta \kappa \sigma_{+}
\end{equation}

As for the off-diagonal equations, we obtain:

\begin{equation}
6 \dot{a} \sigma_{+} + 6 a \dot{\sigma_{+}} - \frac{6a\sigma_{+}\dot{N}}{N} + 6\theta  a \sigma_{+} + 12 \sqrt{3} a \sigma_{+} \sigma_{-} + N^2 a = 12 \eta \kappa a \sigma_{+}
\end{equation}

Recall, we had that:

\begin{equation*}
\dot{a} = -a \left( \frac{1}{3}\theta - 2 \sigma_{+}\right)
\end{equation*}

\begin{equation*}
\dot{N} = \frac{1}{3}\theta N + 2N \left(\sigma_{+} + \sqrt{3} \sigma_{-}\right)
\end{equation*}

Substituting these results into Eq. (5.22), we have upon simplification:

\begin{equation}
N^2 + 6 \dot{\sigma_{+}} + 2\sigma_{+} \theta = 12 \kappa \eta \sigma_{+}
\end{equation}

If we now subtract Eq. (5.19) from Eq. (5.20), we obtain:

\begin{equation}
N^4 - 144 a^2 \sigma_{+}^2 + 2 \sqrt{3} N^2\left[\dot{\sigma_{-}} + \sigma_{-} \left(\theta - 2 \kappa \eta\right)\right] = 0
\end{equation}

We can also add Eq. (5.21) and Eq. (5.23) to obtain:

\begin{equation}
\frac{-5N^2}{6} - 5 \dot{\sigma_{+}} - \sigma_{+}\theta = -10 \kappa \eta \sigma_{+}
\end{equation}

Equations (5.24) and (5.25) therefore constitute a  coupled system of first-order, ordinary differential equations for the shear variables $\sigma_{+}, \sigma_{-}$.

We will additionally write the generalized Friedmann equation in the following convenient form:
\begin{eqnarray}
\sigma^{2} &=& \frac{1}{3} \theta^2 + \frac{1}{2} \left(-6a^2 - \frac{1}{2}N^2\right) - \kappa \mu \nonumber \\
&=& \frac{1}{3} \theta^2 - 3 a^2 - \frac{1}{4}N^2 - \kappa \mu
\end{eqnarray}

Using this relationship, we will write the dynamical equation for the fluid in the following form:

\begin{equation}
\dot{\mu} = -12 a^2 \eta - 4 \kappa \mu \eta - \eta N^2 - \frac{4 \mu \theta}{3} + \frac{4 \eta \theta^2}{3} + \theta^2 \xi
\end{equation}

We will also write The Raychaudhuri equation in the following form:

\begin{eqnarray}
\dot{\theta} &=& -\frac{1}{2} \kappa \left(\mu + 3p\right) + \frac{3}{2} \kappa \xi \theta - 2 \sigma^2 -\frac{1}{3}\theta ^2 \nonumber \\
&=& \frac{1}{2} \left(12a^2 + 2 \kappa \mu + N^2 - 2 \theta^2 + 3 \kappa \theta \xi \right)
\end{eqnarray}

Therefore, our full cosmological system for the Bianchi IV viscous fluid universe is governed by a system of six, coupled, ordinary differential equations in six unknowns, with the two viscosity coefficients, $\eta$ and $\xi$:

\begin{eqnarray}
\dot{\theta} &=&\frac{1}{2} \left(12a^2 + 2 \kappa \mu + N^2 - 2 \theta^2 + 3 \kappa \theta \xi \right) \\
\dot{\mu} &=& -12 a^2 \eta - 4 \kappa \mu \eta - \eta N^2 - \frac{4 \mu \theta}{3} + \frac{4 \eta \theta^2}{3} + \theta^2 \xi \\
\dot{\sigma_{-}} &=& -\frac{N^2}{2\sqrt{3}} +\frac{ 24 \sqrt{3} a^2 \sigma_{+}^2}{N^2} - \sigma_{-}\theta + 2 \kappa \sigma_{-} \eta \\
\dot{\sigma_{+}} &=&  -\frac{N^2}{6} - \frac{\sigma_{+} \theta}{5} + 2 \kappa \eta \sigma_{+} \\
\dot{a} &=& -a \left(\frac{\theta}{3} - 2 \sigma_{+}\right) \\
\dot{N} &=& N \left[\frac{\theta}{3} + 2 \left(\sigma_{+} + \sqrt{3} \sigma_{-}\right)\right] 
\end{eqnarray}
Clearly, this system of equations has no exact solution, so numerical methods must be applied. However, even for a six-dimensional system, such as this one, numerical algorithms can be difficult to employ to obtain any relevant solutions. Therefore, we will write these equations in dimensionless form, by writing them using the expansion-normalized variables \cite{ellis}. 

\subsection{Dynamical Equations in Expansion-Normalized Variables}
The basic idea of the expansion-normalized variables is that the set of differential equations in Eqs. (5.29-5.34) is of the form:

\begin{equation}
\frac{d \mathbf{x}}{dt} = f(\mathbf{x})
\end{equation}

where $\mathbf{x} = (\theta, \mu, \sigma_{-}, \sigma_{+}, a, N) \in \mathbf{R}^{6}$. We will define $\theta \equiv 3H$, where $H$ is denoted the Hubble scalar and is defined as:

\begin{equation}
H = \frac{\dot{s}}{s}
\end{equation}

where $s$ is a cosmological length-scale function. It is also convenient to define the cosmological deceleration parameter $q$: 
\begin{equation}
q = -\frac{\ddot{s}s}{\dot{s}^2}
\end{equation}

Clearly, we have the relationship:

\begin{equation}
\dot{H} = -(1 + q)H^2
\end{equation}

It is also necessary to introduce a dimensionless time variable $\tau$ as:
\begin{equation}
s = s_{0}e^{\tau}
\end{equation}

From these equations, one can show that:
\begin{equation}
\frac{dt}{d\tau} = \frac{1}{H} \Rightarrow \frac{dH}{d\tau} \equiv H' = -(1+q)H
\end{equation}

Substituting $\theta = 3H$ into the Raychaudhuri equation, we obtain:

\begin{eqnarray}
3\dot{H} &=& -\frac{1}{2} \kappa (\mu + 3p) + \frac{6}{2} \kappa \xi H - 2 \sigma^{2} -3H^2 \nonumber \\
\Rightarrow \dot{H} &=& -\frac{1}{6} \kappa (\mu + 3p) + \kappa \xi H - \frac{2}{3}\sigma^{2} - H^2 \nonumber \\
\Rightarrow \dot{H} &=& -\frac{1}{3} \kappa \mu + \kappa \xi H - \frac{2}{3} \sigma^{2} - H^2
\end{eqnarray}

In addition, we define: a density parameter,

\begin{equation}
\Omega = \frac{\mu}{3H^2}
\end{equation}

a shear parameter, which compares the rate of shear with the rate of expansion,
\begin{equation}
\Sigma^2 = \frac{\sigma^2}{3H^2}
\end{equation}

and a curvature parameter:

\begin{equation}
K = -\frac{^{(3)}R}{6H^2}
\end{equation}

Note that, if the dynamical system has an evolution equation for $\Omega$, then this parameter is not explicitly needed in the system of differential equations. We have, however, just included it for completeness.

In addition, comparing Eqs. (5.41) and (5.38) we have the relationship:

\begin{eqnarray}
-(1+q)H^2 &=& -\frac{1}{3} \kappa \mu + \kappa \xi H - \frac{2}{3} \sigma^{2} - H^2 \nonumber \\
\Rightarrow (1+q)H^2 &=& \frac{1}{3}\kappa \mu - \kappa \xi H + \frac{2}{3} \sigma^2 + H^2 \nonumber \\
\Rightarrow (1+q) &=& \frac{1}{3H^2} \kappa \mu - \frac{\kappa \xi}{H} - \frac{2 \sigma^2}{3H^2} + 1 \nonumber \\
\Rightarrow q &=& \frac{1}{3H^2} \kappa \mu - \frac{\kappa \xi}{H} - \frac{2 \sigma^2}{3H^2}  \nonumber \\
\Rightarrow q &=& \kappa \Omega - \frac{\kappa \xi}{H} - 2\Sigma^2
\end{eqnarray}

We additionally assume the bulk viscosity obeys the \emph{equation of state}:
\begin{equation}
\frac{\xi}{H} = 3\xi_{0}
\end{equation}
Here, $\xi_{0}$ is a positive parameter. This equation of state essentially says that as the universe expands the bulk viscosity coefficient asymptotically vanishes describing the present-day, perfect fluid universe.

Therefore, we get:

\begin{equation}
q = \kappa \Omega - 3\kappa \xi_{0}- 2\Sigma^{2}
\end{equation}

Why this approach is so powerful in terms of analyzing the dynamics of our system is for the following reasons. Eq. (5.30) which is the evolution equation for the energy density, is now replaced (in the dynamical system) by $\Omega$. Based on physical arguments of physical matter having a nonnegative energy density, this quantity is taken to be positive. Therefore, the solution space of the system of differential equations has the constraint that it is effectively bounded by $\Omega \geq 0$. Keeping these concepts in mind, we will now write Eqs. (5.31) - (5.34) in dimensionless form. (Below, we define the additional dimensionless quantities: $n = N/H$, $A  = a/H$, $\Sigma_{\pm} = \sigma_{\pm} / H$):

Beginning with Eq. (5.31), we get:

\begin{eqnarray}
\dot{\Sigma_{-} H} &=& -\frac{n^2 H^2}{2\sqrt{3}} + 24\sqrt{3} \frac{A^2 H^2 \Sigma_{+}^2 H^2}{n^2H^2} - 3\Sigma_{-} H^2 + 2 \kappa \Sigma_{-}H \eta \nonumber \\
\Rightarrow \dot{\Sigma_{-}}H + \dot{H}\Sigma_{-} &=& -\frac{n^2 H^2}{2\sqrt{3}} + 24\sqrt{3} \frac{A^2 H^2 \Sigma_{+}^2 H^2}{n^2H^2} - 3\Sigma_{-} H^2 + 2 \kappa \Sigma_{-}H \eta \nonumber \\
\Rightarrow  \dot{\Sigma_{-}}H &=& -\frac{n^2 H^2}{2\sqrt{3}} + 24\sqrt{3} \frac{A^2 H^2 \Sigma_{+}^2 H^2}{n^2H^2} - 3\Sigma_{-} H^2 + 2 \kappa \Sigma_{-}H \eta -  \dot{H}\Sigma_{-} \nonumber \\
\Rightarrow \dot{\Sigma_{-}} &=& -\frac{n^2 H}{2\sqrt{3}} + 24\sqrt{3}\frac{A^2 H^2 \Sigma_{+}^2 H^2}{n^2H^3} - 3\Sigma_{-} H + 2 \kappa \Sigma_{-} \eta - \frac{\dot{H}}{H}\Sigma_{-} \nonumber \\
\Rightarrow \dot{\Sigma_{-}} &=&  -\frac{n^2 H}{2\sqrt{3}} + 24\sqrt{3} \frac{A^2 \Sigma_{+}^2 H}{n^2} - 3\Sigma_{-} H + 2 \kappa \Sigma_{-} \eta - \frac{\dot{H}}{H}\Sigma_{-} \nonumber \\
\end{eqnarray}

Using the chain rule, we obtain:
\begin{equation}
\frac{d \Sigma_{-}}{d \tau} = \frac{d \Sigma_{-}}{d t} \frac{dt}{d \tau} = \dot{\Sigma_{-}} \frac{1}{H}
\end{equation}

Therefore, we have that:

\begin{eqnarray}
\Sigma_{-} ' &=&  -\frac{n^2}{2\sqrt{3}} + 24\sqrt{3} \frac{A^2 \Sigma_{+}^2}{n^2} - 3\Sigma_{-} - 2 \kappa \Sigma_{-} \frac{\eta}{H} - \frac{\dot{H}}{H^2} \Sigma_{-} \nonumber \\
&=& -\frac{n^2}{2\sqrt{3}} + 24\sqrt{3} \frac{A^2 \Sigma_{+}^2}{n^2} - 3\Sigma_{-} - 2 \kappa \Sigma_{-} \frac{\eta}{H} + (1+q) \Sigma_{-} \nonumber \\
&=& -\frac{n^2}{2\sqrt{3}} + 24\sqrt{3} \frac{A^2 \Sigma_{+}^2}{n^2} - 3\Sigma_{-} - 6 \kappa \Sigma_{-}  \eta_{0}+ (1+q) \Sigma_{-} 
\end{eqnarray}
As is done by Saha and Rikhvitsky, \cite{saha}, we have set in the above equation, $\eta/H = 3 \eta_{0}$, where $\eta_{0}$ is a nonnegative parameter.

Continuing with Eq. (5.32), we have:

\begin{eqnarray}
\dot{\Sigma_{+} H} &=& -\frac{1}{6} (n^2 H^2) - \frac{3}{5} (\Sigma_{+} H^2) + 2 \kappa \eta \Sigma_{+}H \nonumber \\
\Rightarrow \dot{\Sigma_{+}}H &=&   -\frac{1}{6} (n^2 H^2) - \frac{3}{5} (\Sigma_{+} H^2) + 2 \kappa \eta \Sigma_{+}H - \dot{H} \Sigma_{+}   \nonumber \\
\Rightarrow \dot{\Sigma_{+}} &=& -\frac{1}{6} (n^2 H) - \frac{3}{5} \Sigma_{+} H + 2 \kappa \eta \Sigma_{+} - \frac{\dot{H}}{H} \Sigma_{+} \nonumber \\
\Rightarrow \Sigma_{+}' &=& -\frac{1}{6} n^2 - \frac{3}{5} \Sigma_{+} + 2 \kappa \frac{\eta}{H} \Sigma_{+} +(1+q)\Sigma_{+} \nonumber \\
\Rightarrow \Sigma_{+}' &=& -\frac{1}{6} n^2 - \frac{3}{5} \Sigma_{+} + 6 \kappa \eta_{0} \Sigma_{+} +(1+q)\Sigma_{+} \nonumber \\
\end{eqnarray}

Going on to Eq. (5.33), we have:

\begin{eqnarray}
\dot{A H} &=& -AH \left(H - 2\Sigma_{+}H\right) \nonumber \\
\Rightarrow \dot{A} H + \dot{H} A &=& -AH^2 + 2\Sigma_{+}AH^2 \nonumber \\
\Rightarrow \dot{A} &=& -AH + 2\Sigma_{+}AH - \frac{\dot{H}}{H}A \nonumber \\
\Rightarrow A' &=& -A + 2\Sigma_{+}A + (1+q)A
\end{eqnarray}

Finally going to Eq. (5.34), we have:
\begin{eqnarray}
\dot{nH} &=& nH \left[H + 2 \left(\Sigma_{+}H + \sqrt{3}\Sigma_{-}H\right) \right] \nonumber \\
\Rightarrow \dot{n}H &=& nH\left[H + 2 \left(\Sigma_{+}H + \sqrt{3}\Sigma_{-}H\right) \right] - \dot{H}n \nonumber \\
\Rightarrow \dot{n} &=& n \left[H + 2 \left(\Sigma_{+}H + \sqrt{3}\Sigma_{-}H\right) \right] - \frac{\dot{H}}{H} n \nonumber \\
\Rightarrow n' &=& \frac{n}H  \left[H + 2 \left(\Sigma_{+}H + \sqrt{3}\Sigma_{-}H\right) \right] - \frac{\dot{H}}{H^2} n \nonumber \\
\Rightarrow n' &=& n + 2n \left(\Sigma_{+} + \sqrt{3} \Sigma_{-}\right) + (1+q) n
\end{eqnarray}

Therefore, the non-dimensional dynamical system of equations takes the form:
\begin{eqnarray}
\Sigma_{-} ' &=& -\frac{n^2}{2\sqrt{3}} + 24\sqrt{3} \frac{A^2 \Sigma_{+}^2}{n^2} - 3\Sigma_{-} - 6 \kappa \Sigma_{-}  \eta_{0}+ (1+q) \Sigma_{-} \\
\Sigma_{+}' &=& -\frac{1}{6} n^2 - \frac{3}{5} \Sigma_{+} + 6 \kappa \eta_{0} \Sigma_{+} +(1+q)\Sigma_{+} \\
A' &=& -A + 2\Sigma_{+}A + (1+q)A \\
n' &=& n + 2n \left(\Sigma_{+} + \sqrt{3} \Sigma_{-}\right) + (1+q) n
\end{eqnarray}

One can see the advantage of this approach immediately. The solution space of the system has been reduced from $\mathbf{R}^6$ to $\mathbf{R}^5$. (The remaining evolution equation for the energy density is derived below). Raychaudhuri's equation has been decoupled from the system and is not necessary to solve to determine the dynamics of the cosmological system. It is important to note that $(\Sigma_{+}, \Sigma_{-})$ describe the anisotropy in the Hubble flow, and $(A,n)$ describe the spatial curvature of each spatial slice of the spatially homogeneous spacetime.



We now use Eq. (5.43) to calculate $\Sigma^{2}$ as follows:

\begin{eqnarray}
\Sigma^{2} &=& \frac{1}{6H^2} \left(\sigma_{ab} \sigma^{ab}\right) \nonumber \\
&=& \frac{1}{6H^2}  \left(\sigma_{11}^2 + \sigma_{22}^2 + \sigma_{33}^2\right) \nonumber \\
&=& \frac{1}{6H^2} \left(\sigma_{+}^2  + 3\sigma_{-}^2 + \sigma_{+}^2 + 3\sigma_{-}^2 + 4\sigma_{+}^2\right) \nonumber \\
&=& \frac{1}{H^2} \left(\sigma_{+}^2 + \sigma_{-}^2\right) \nonumber \\
&=& \Sigma_{+}^2 + \Sigma_{-}^2 
\end{eqnarray}

In addition, $K$ is computed from Eq.(5.44) as follows:

\begin{eqnarray}
K &=& -\frac{^{(3)}R}{6H^2} \nonumber \\
&=& \frac{1}{6H^2} \left(6a^2 + \frac{1}{2}N^2 \right) \nonumber \\
&=& A^2 + \frac{n^2}{12}
\end{eqnarray}

The $\Omega$ term in the above equations is governed by the non-dimensional version of Eq. (5.30), which we obtain as follows:

\begin{eqnarray}
\dot{\Omega}(3H^2) + 6\Omega \dot{H}H &=& -12H^2A^2\eta - 12 \kappa \eta H^2 \Omega - \eta n^2H^2 - 12H^3\Omega + 12\eta H^2 + 9H^2 \xi \nonumber \\
\Rightarrow \dot{\Omega}(3H^2) &=& -12H^2A^2\eta - 12 \kappa \eta H^2 \Omega - \eta n^2H^2 - 12H^3\Omega + 12\eta H^2 + 9H^2 \xi -  6\Omega \dot{H}H \nonumber \\
\Rightarrow \dot{\Omega} &=& -4A^2 \eta - 4\kappa \eta \Omega - \eta\frac{n^2}{3} - 4H\Omega + 4 \eta + 3\xi + 2\Omega(1+q)H \nonumber \\
\Rightarrow \Omega' &=& -12A^2 \eta_{0} - 12 \kappa \Omega \eta_{0} - \eta_{0} n^2 - 4 \Omega + 12 \eta_{0} + 9 \xi_{0} + 2\Omega(1+q)
\end{eqnarray}

Therefore, the cosmological dynamical system takes the form:

\begin{eqnarray}
\Sigma_{-} ' &=& -\frac{n^2}{2\sqrt{3}} + 24\sqrt{3} \frac{A^2 \Sigma_{+}^2}{n^2} - 3\Sigma_{-} - 6 \kappa \Sigma_{-}  \eta_{0}+ (1+q) \Sigma_{-} \nonumber \\
\Sigma_{+}' &=& -\frac{1}{6} n^2 - \frac{3}{5} \Sigma_{+} + 6 \kappa \eta_{0} \Sigma_{+} +(1+q)\Sigma_{+} \nonumber \\
A' &=& -A + 2\Sigma_{+}A + (1+q)A \nonumber \\
n' &=& n + 2n \left(\Sigma_{+} + \sqrt{3} \Sigma_{-}\right) + (1+q) n \nonumber \\
\Omega' &=& \eta_{0}\left(-12A^2 - 12 \kappa \Omega  -  n^2\right) - 4 \Omega + 12 \eta_{0} + 9 \xi_{0} + 2\Omega(1+q) \nonumber \\
\end{eqnarray}

where  $q = \kappa\Omega - 3\kappa\xi_{0}- 2 \Sigma^{2}$, $\Sigma^{2} =  \Sigma_{+}^2 + \Sigma_{-}^2$, $K =A^2 + \frac{n^2}{12}$, and $\Omega \geq 0$.

\section{Solutions To The Bianchi Type IV Dynamical Equations}
In this chapter, we present several numerical solutions to the system of equations (5.61). Clearly, this system, which is nonlinear and coupled has no exact solutions. Our goal however is to discover values for $\xi_{0}$ and $\eta_{0}$ such that the system asymptotically isotropizes, that is, $\Sigma_{\pm} \rightarrow 0 \mbox{ as } t \rightarrow 0$. Such physical characteristics resemble an anistropic early-universe that over time isotropized to the present-day universe. Note that in the diagrams below, asterisks indicate the initial conditions for the numerical experiments.
\newpage
\subsection{Plots of Numerical Experiments}
\begin{figure}[h]
\begin{center}
\caption{Phase plot of the anisotropy in the Hubble flow for $\xi_{0} = 0.1$, $\eta_{0} = 0.1$. }
\includegraphics{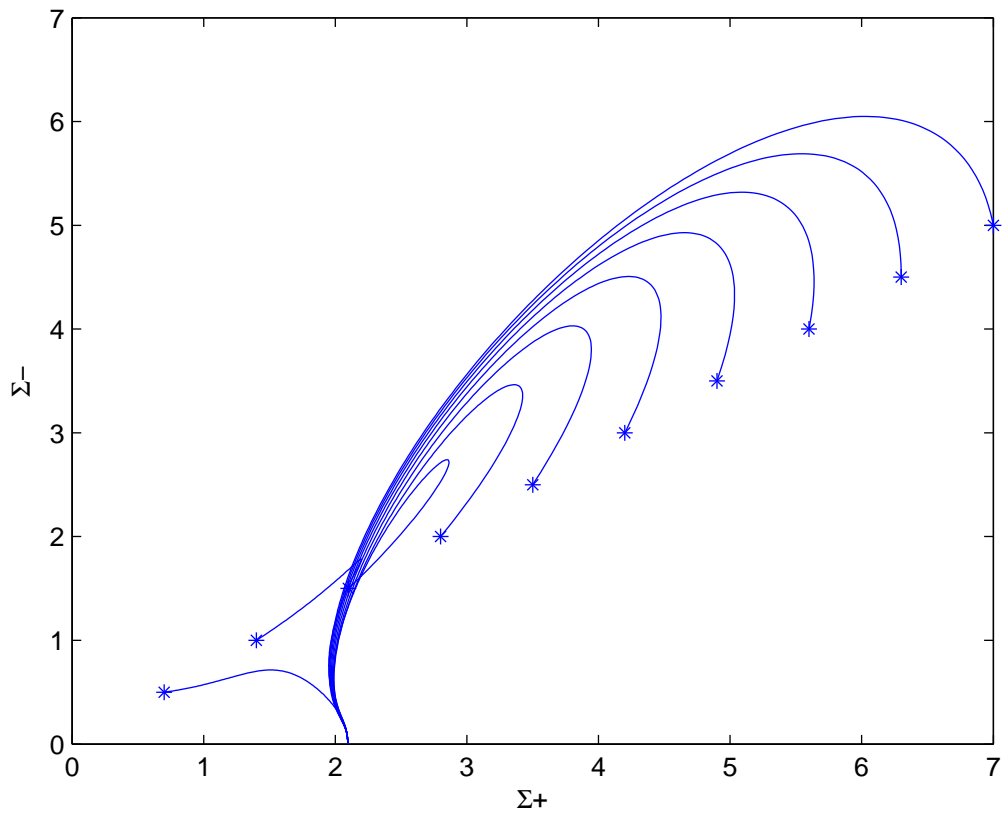}
\label{default}
\end{center}
\end{figure}

\newpage
\begin{figure}[h]
\begin{center}
\caption{Three-dimensional phase plot of the anisotropy in the Hubble flow and spatial curvature for $\xi_{0} = 0.1$, $\eta_{0} = 0.1$. }
\includegraphics{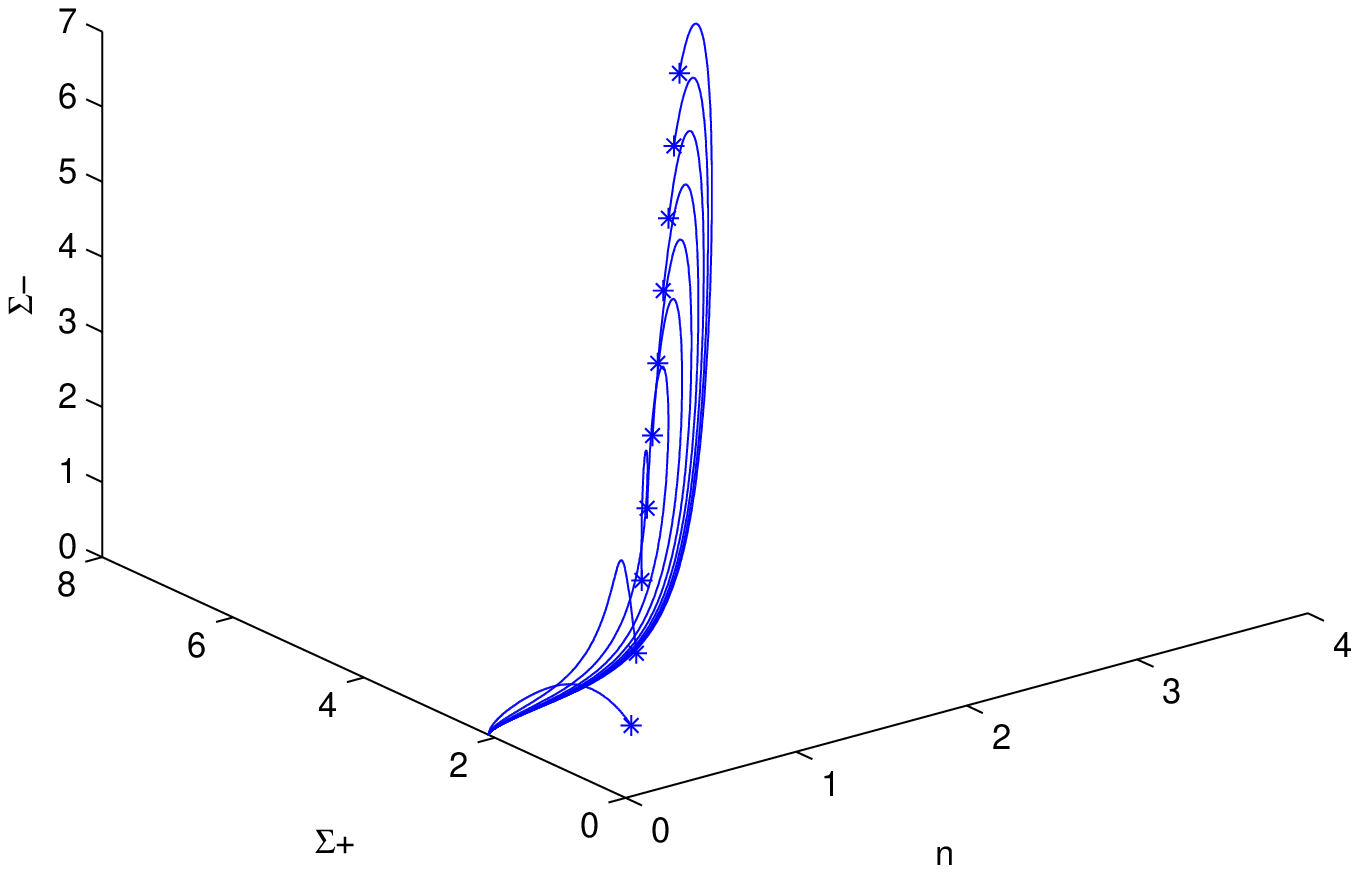}
\label{default}
\end{center}
\end{figure}

\newpage

\begin{figure}[h]
\begin{center}
\caption{Phase plot of the anisotropy and spatial curvature for $\xi_{0} = 0.1$, $\eta_{0} = 0.1$. }
\includegraphics{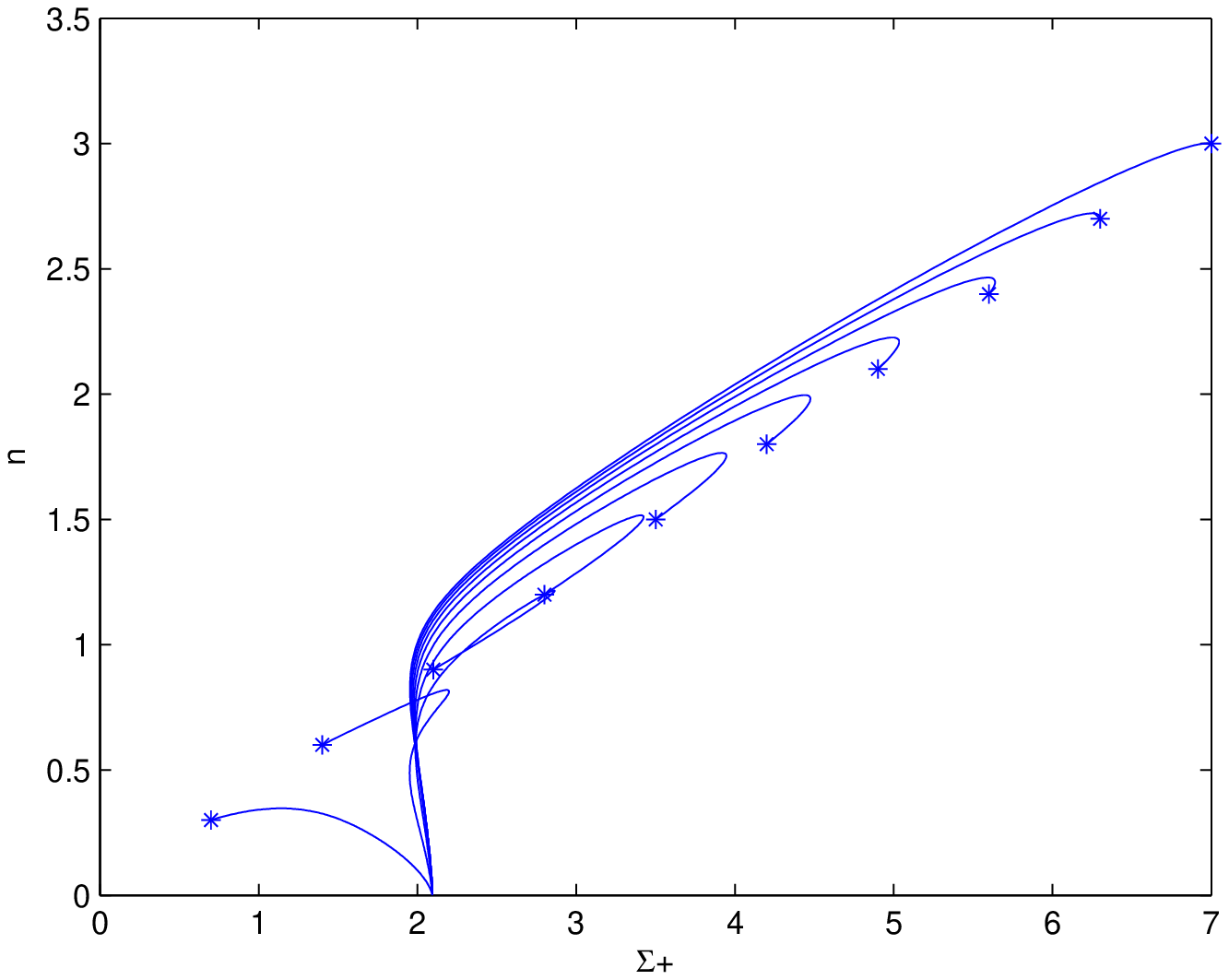}
\label{default}
\end{center}
\end{figure}

\newpage
\begin{figure}[h]
\begin{center}
\caption{Plots of the Energy Density and Anisotropy variables as functions of time, for $\xi_{0} = 0.1$, $\eta_{0} = 0.1$. One can see that $\Omega \rightarrow 0$, $\Sigma_{-} \rightarrow 0$, but $\Sigma_{+} \not \to 0$, so the model only partially isotropizes.}
\includegraphics{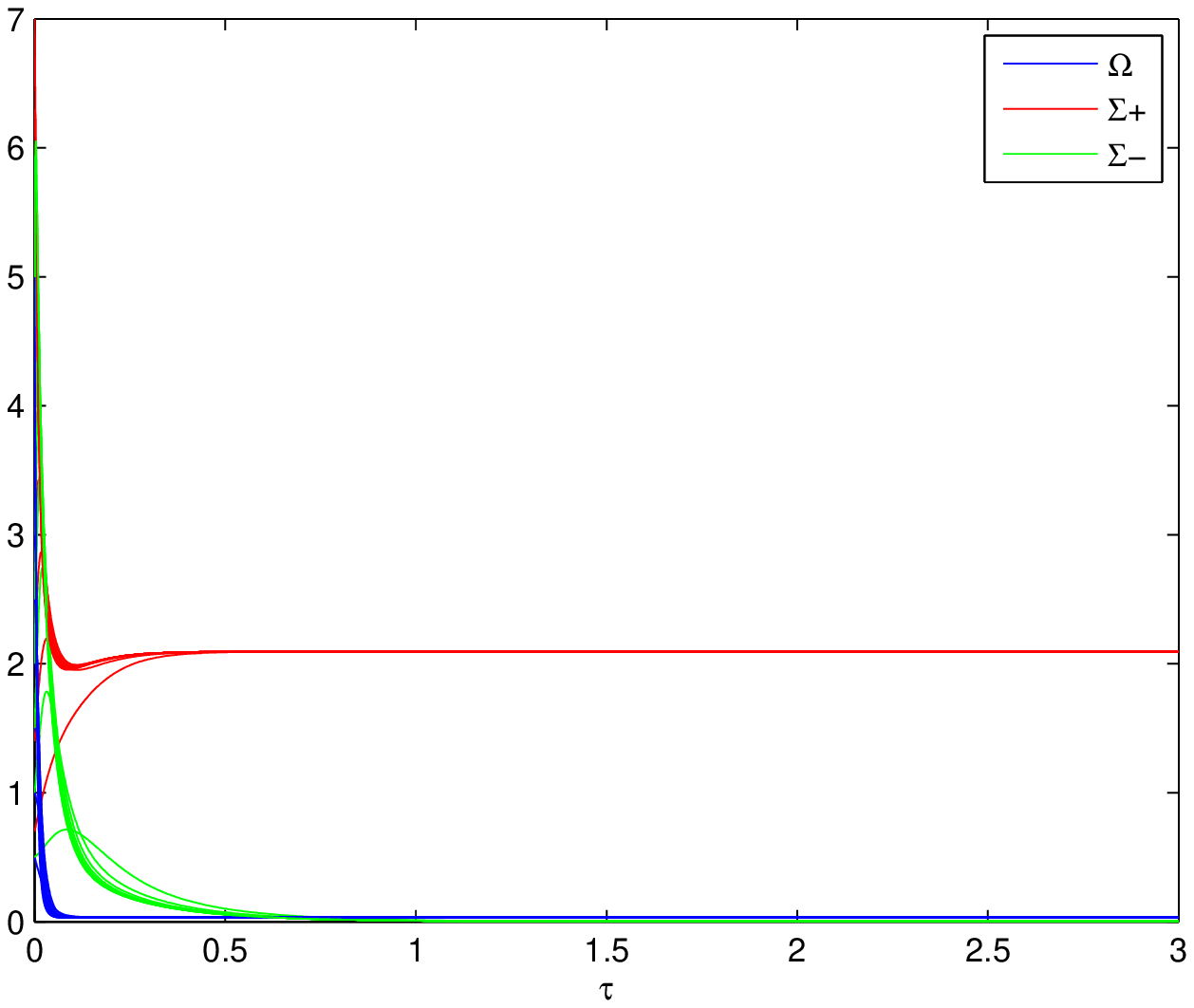}
\label{default}
\end{center}
\end{figure}


\newpage

\begin{figure}[h]
\begin{center}
\caption{Phase plot of the anisotropy in the Hubble flow for $\xi_{0} = 0.1$, $\eta_{0} = 1$. }
\includegraphics{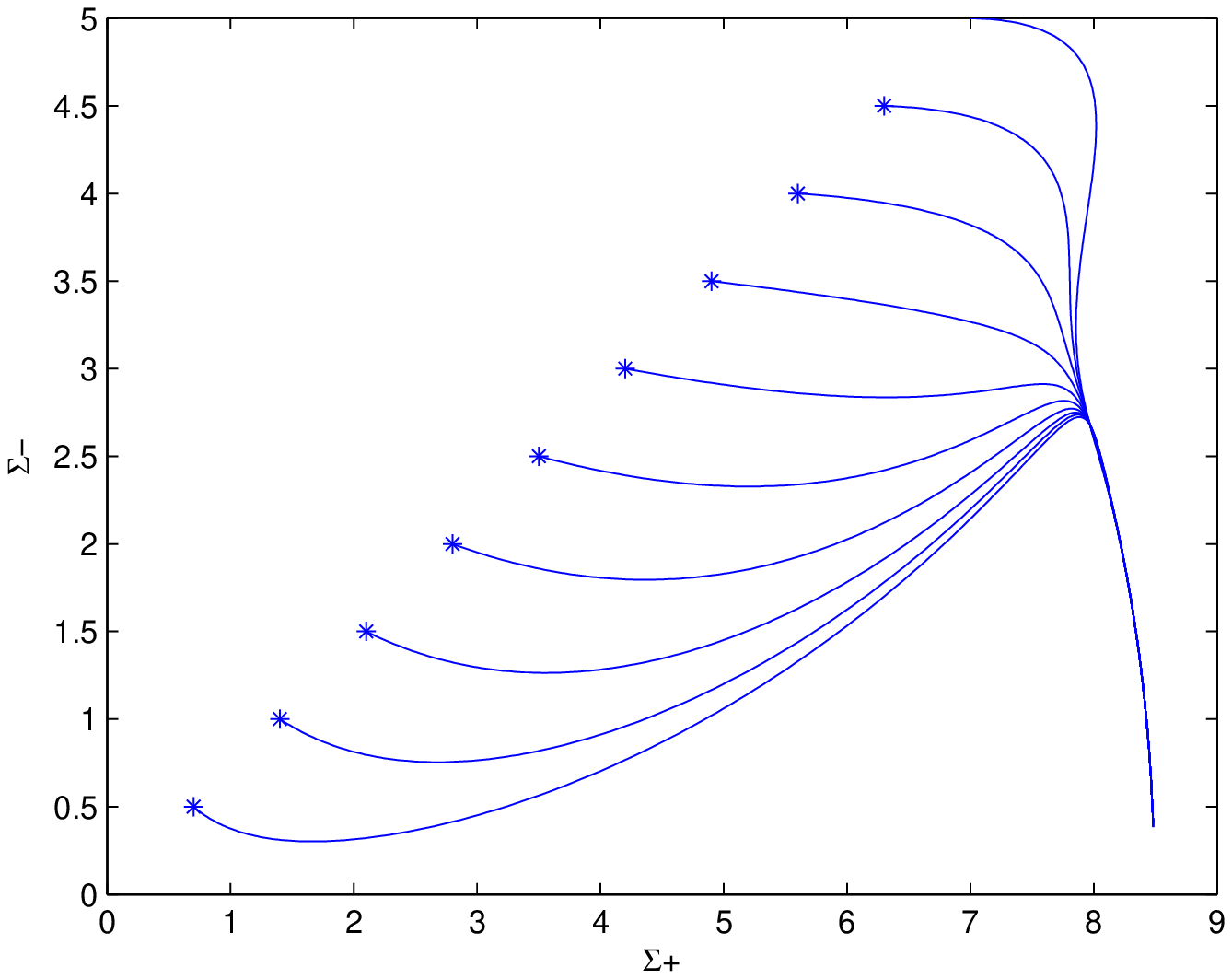}
\label{default}
\end{center}
\end{figure}

\newpage
\begin{figure}[h]
\begin{center}
\caption{Three-dimensional phase plot of the anisotropy in the Hubble flow and spatial curvature for $\xi_{0} = 0.1$, $\eta_{0} = 1$. }
\includegraphics{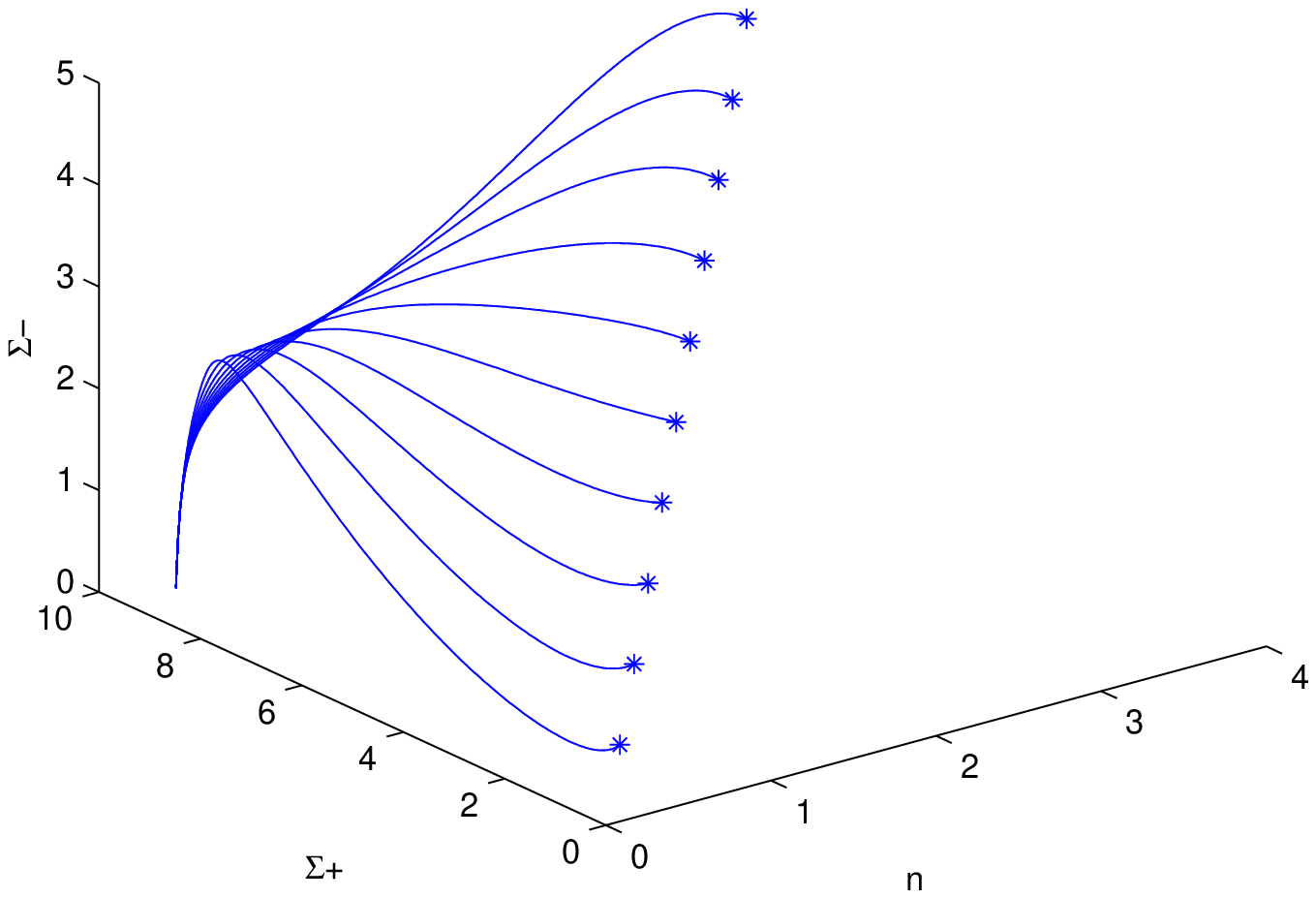}
\label{default}
\end{center}
\end{figure}

\newpage

\begin{figure}[h]
\begin{center}
\caption{Phase plot of the anisotropy and spatial curvature for $\xi_{0} = 0.1$, $\eta_{0} = 1$. }
\includegraphics{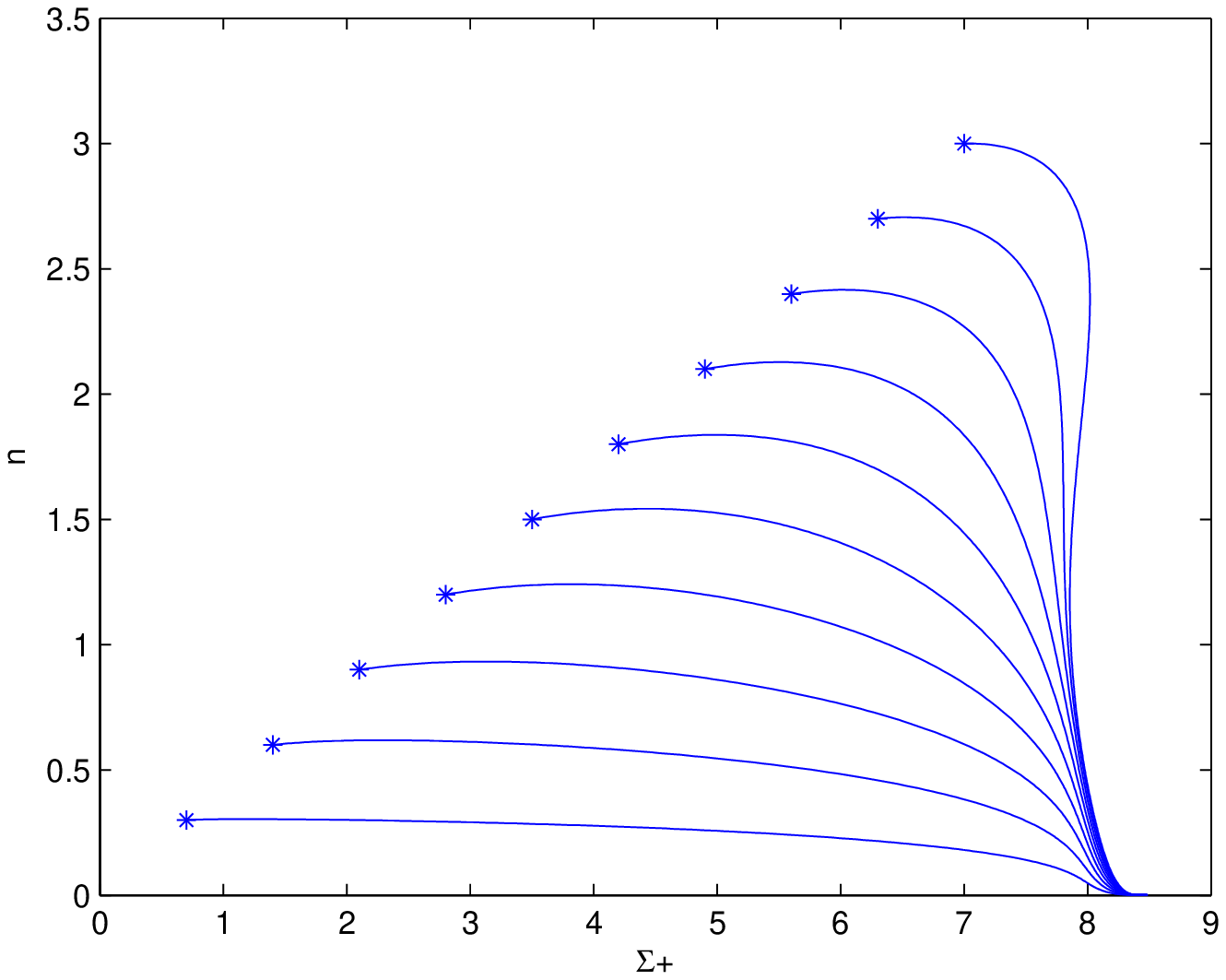}
\label{default}
\end{center}
\end{figure}

\newpage
\begin{figure}[h]
\begin{center}
\caption{Plots of the Energy Density and Anisotropy variables as functions of time, for $\xi_{0} = 0.1$, $\eta_{0} = 1$. One can see that $\Omega \rightarrow 0$, $\Sigma_{-} \rightarrow 0$, but $\Sigma_{+} \not \to 0$, so the model only partially isotropizes.}
\includegraphics{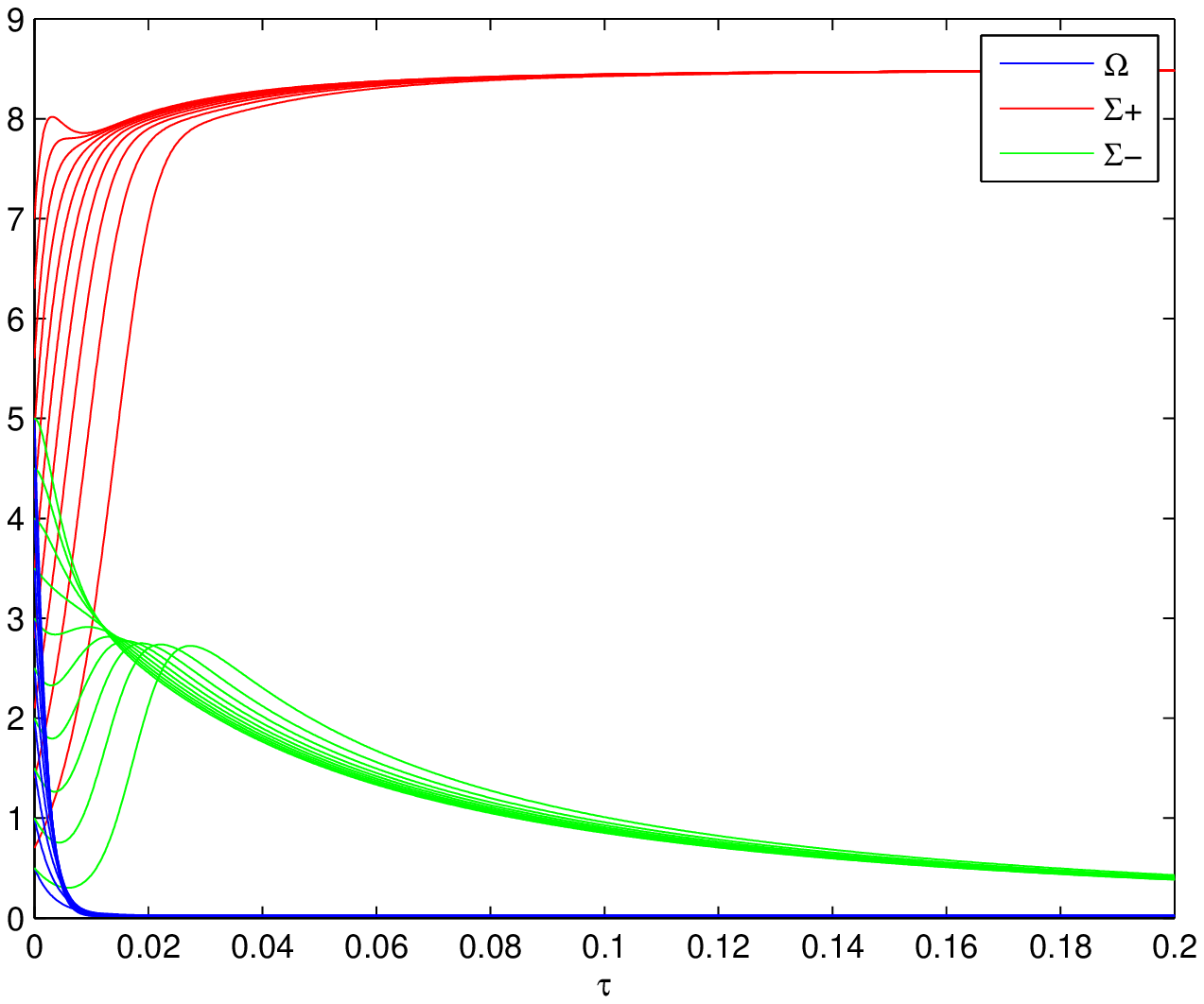}
\label{default}
\end{center}
\end{figure}

\newpage


\begin{figure}[h]
\begin{center}
\caption{Phase plot of the anisotropy in the Hubble flow for $\xi_{0} = 0.1$, $\eta_{0} = 10$. }
\includegraphics{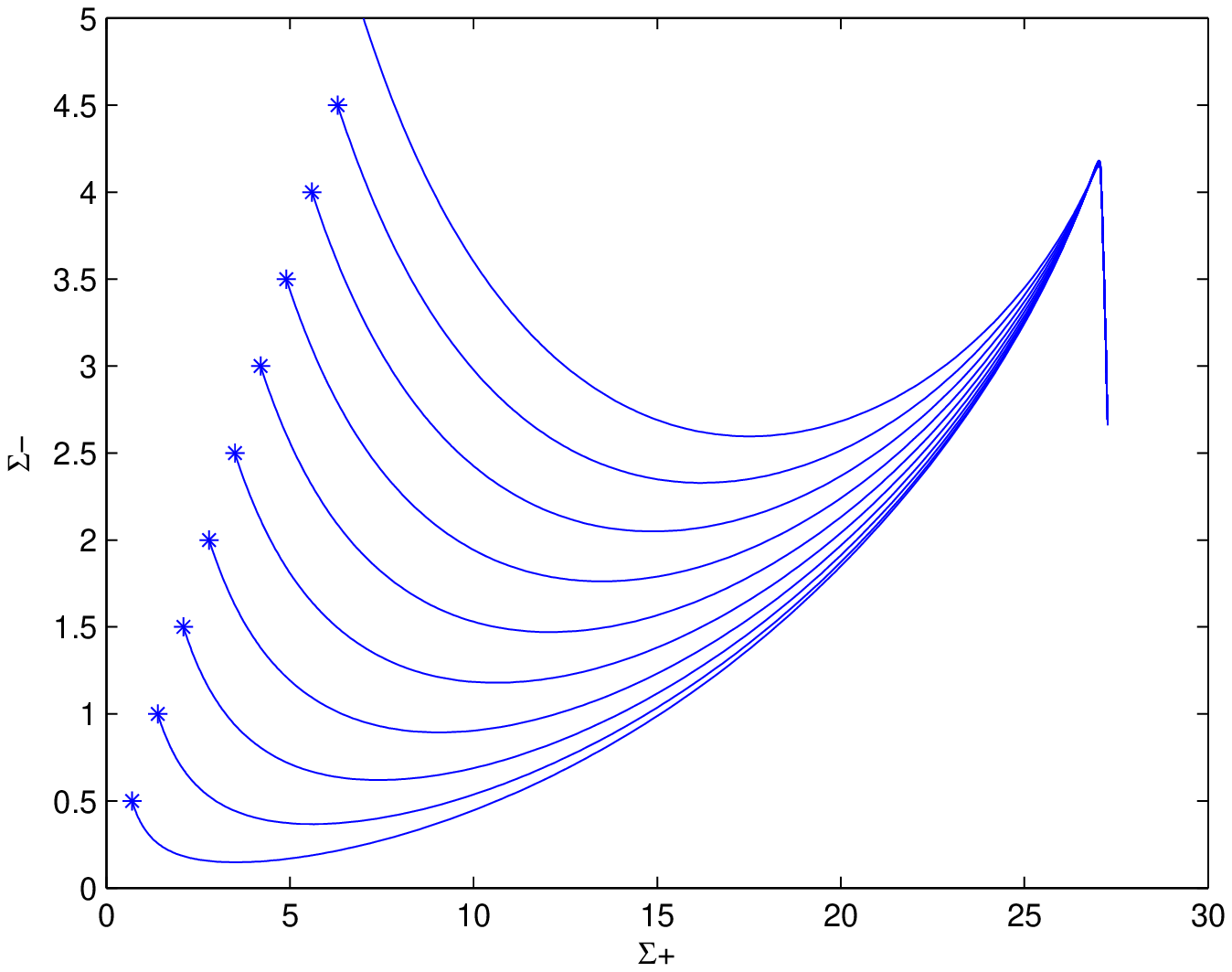}
\label{default}
\end{center}
\end{figure}

\newpage
\begin{figure}[h]
\begin{center}
\caption{Three-dimensional phase plot of the anisotropy in the Hubble flow and spatial curvature for $\xi_{0} = 0.1$, $\eta_{0} = 10$. }
\includegraphics{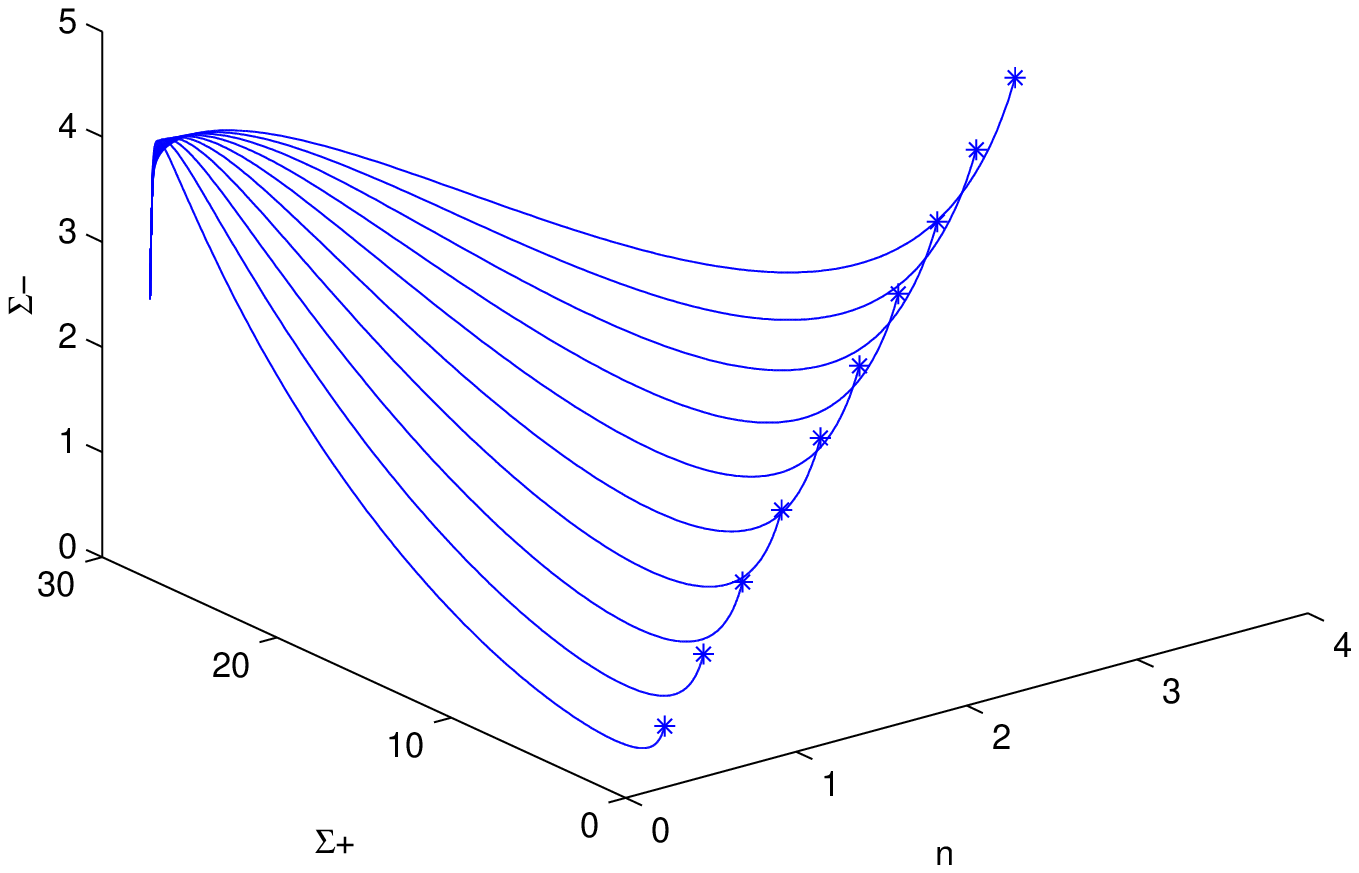}
\label{default}
\end{center}
\end{figure}

\newpage

\begin{figure}[h]
\begin{center}
\caption{Phase plot of the anisotropy and spatial curvature for $\xi_{0} = 0.1$, $\eta_{0} = 10$. }
\includegraphics{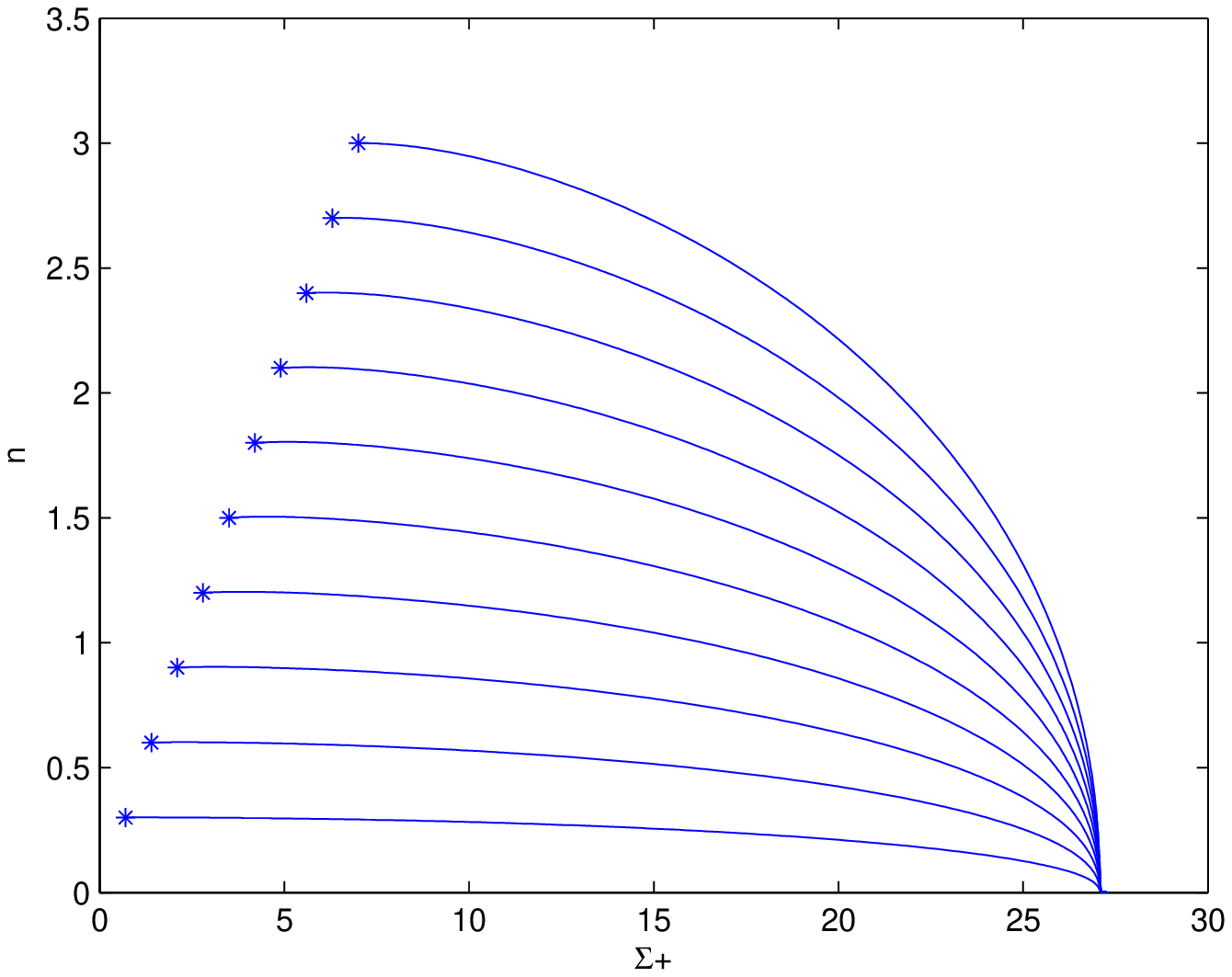}
\label{default}
\end{center}
\end{figure}

\newpage
\begin{figure}[h]
\begin{center}
\caption{Plots of the Energy Density and Anisotropy variables as functions of time, for $\xi_{0} = 0.1$, $\eta_{0} = 10$. One can see that $\Omega \rightarrow 0$, but $\Sigma_{-} \not \to 0$, and $\Sigma_{+} \not \to 0$, so the model does not isotropize at all in this case.}
\includegraphics{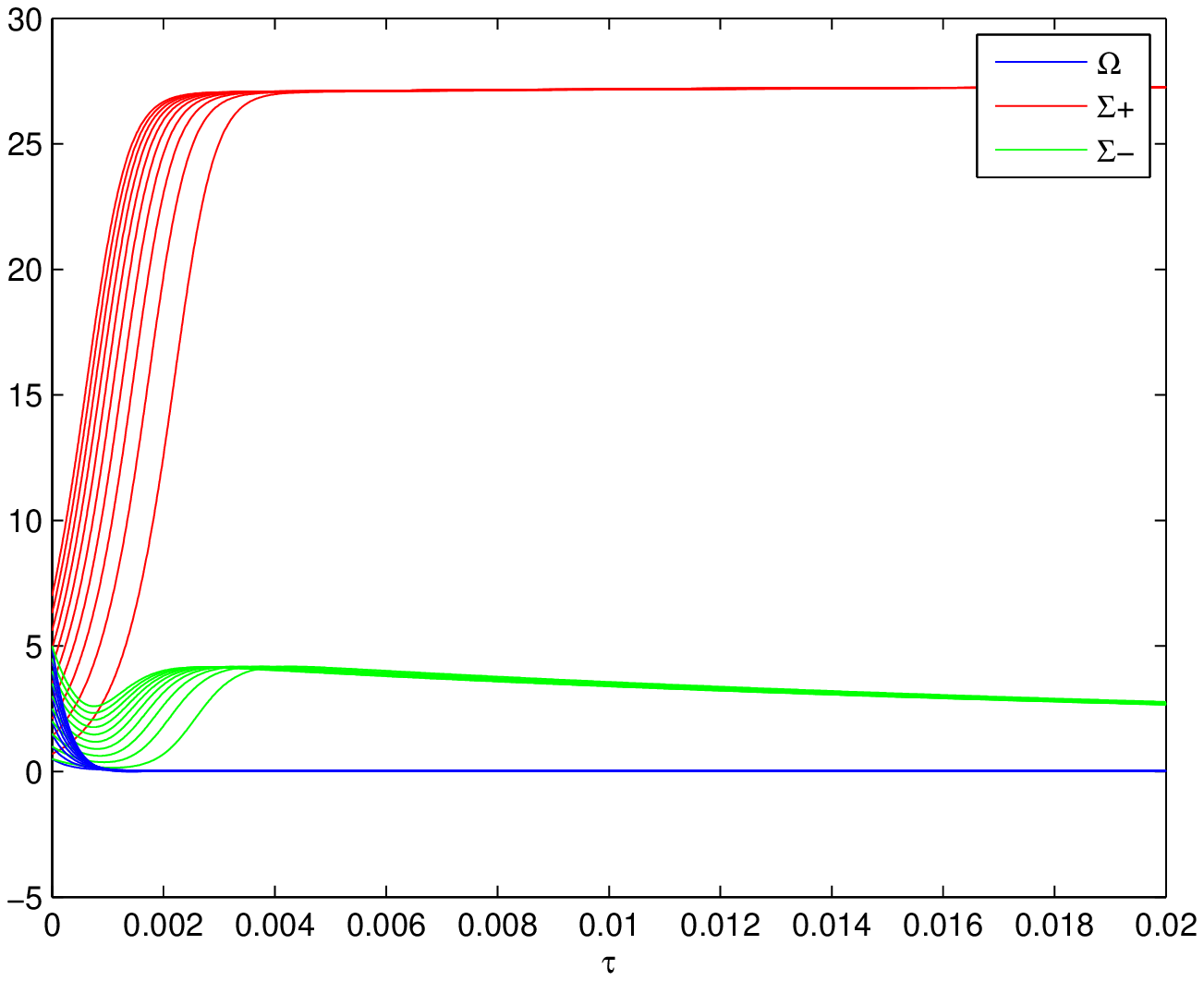}
\label{default}
\end{center}
\end{figure}

\newpage

\begin{figure}[h]
\begin{center}
\caption{Phase plot of the anisotropy in the Hubble flow for $\xi_{0} = 0.1$, $\eta_{0} = 100$. }
\includegraphics{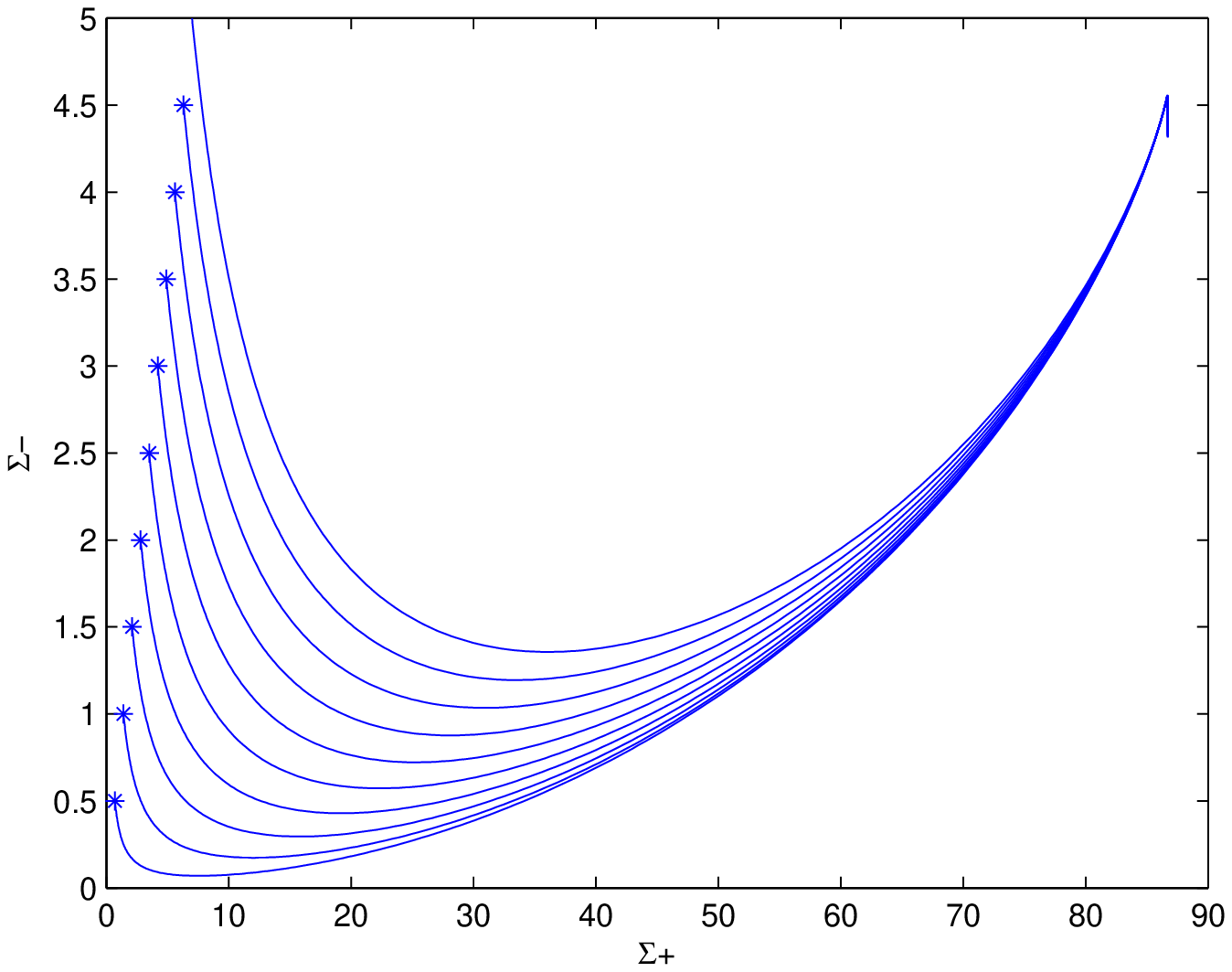}
\label{default}
\end{center}
\end{figure}

\newpage
\begin{figure}[h]
\begin{center}
\caption{Three-dimensional phase plot of the anisotropy in the Hubble flow and spatial curvature for $\xi_{0} = 0.1$, $\eta_{0} = 100$. }
\includegraphics{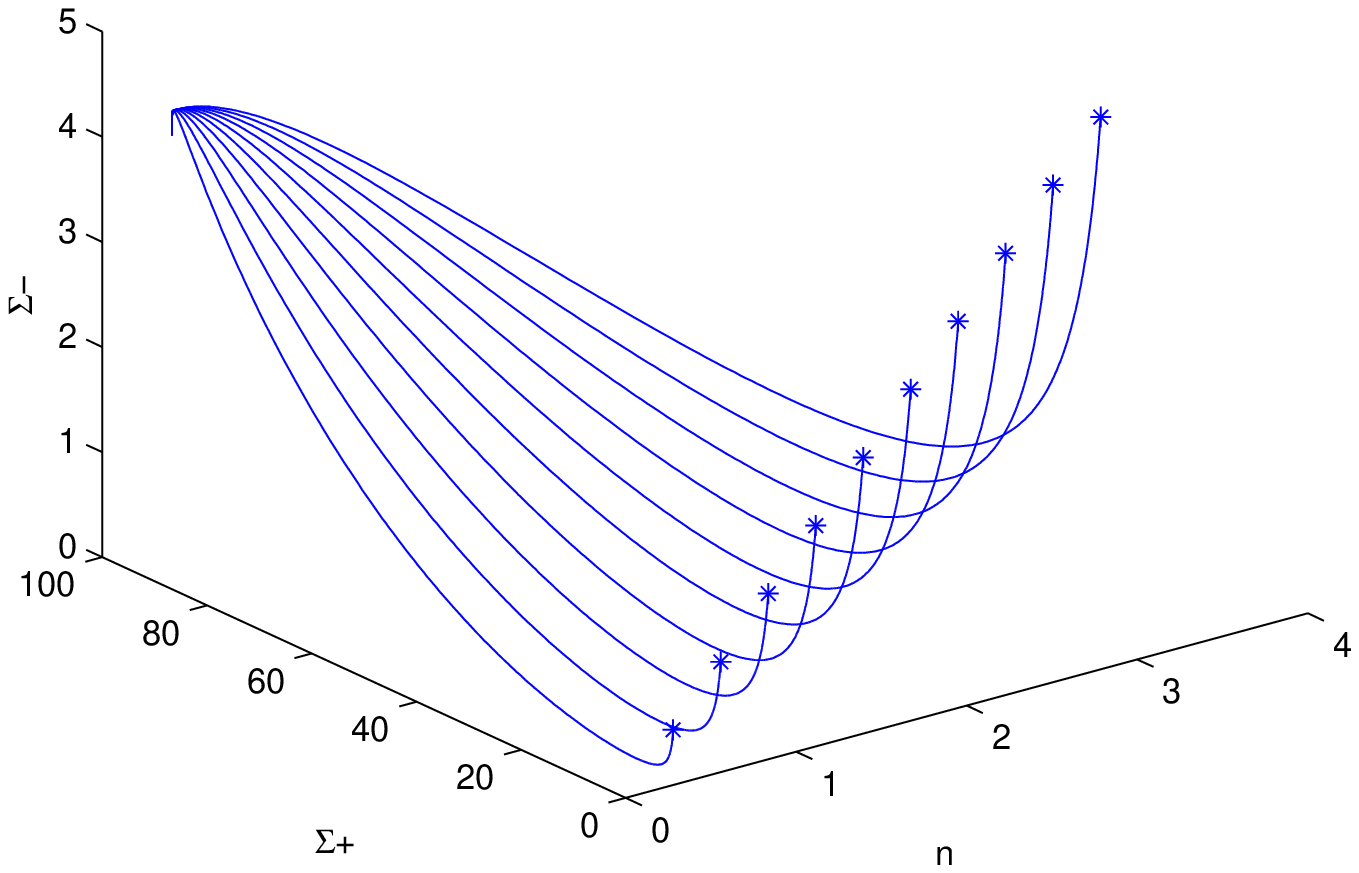}
\label{default}
\end{center}
\end{figure}

\newpage

\begin{figure}[h]
\begin{center}
\caption{Phase plot of the anisotropy and spatial curvature for $\xi_{0} = 0.1$, $\eta_{0} = 100$. }
\includegraphics{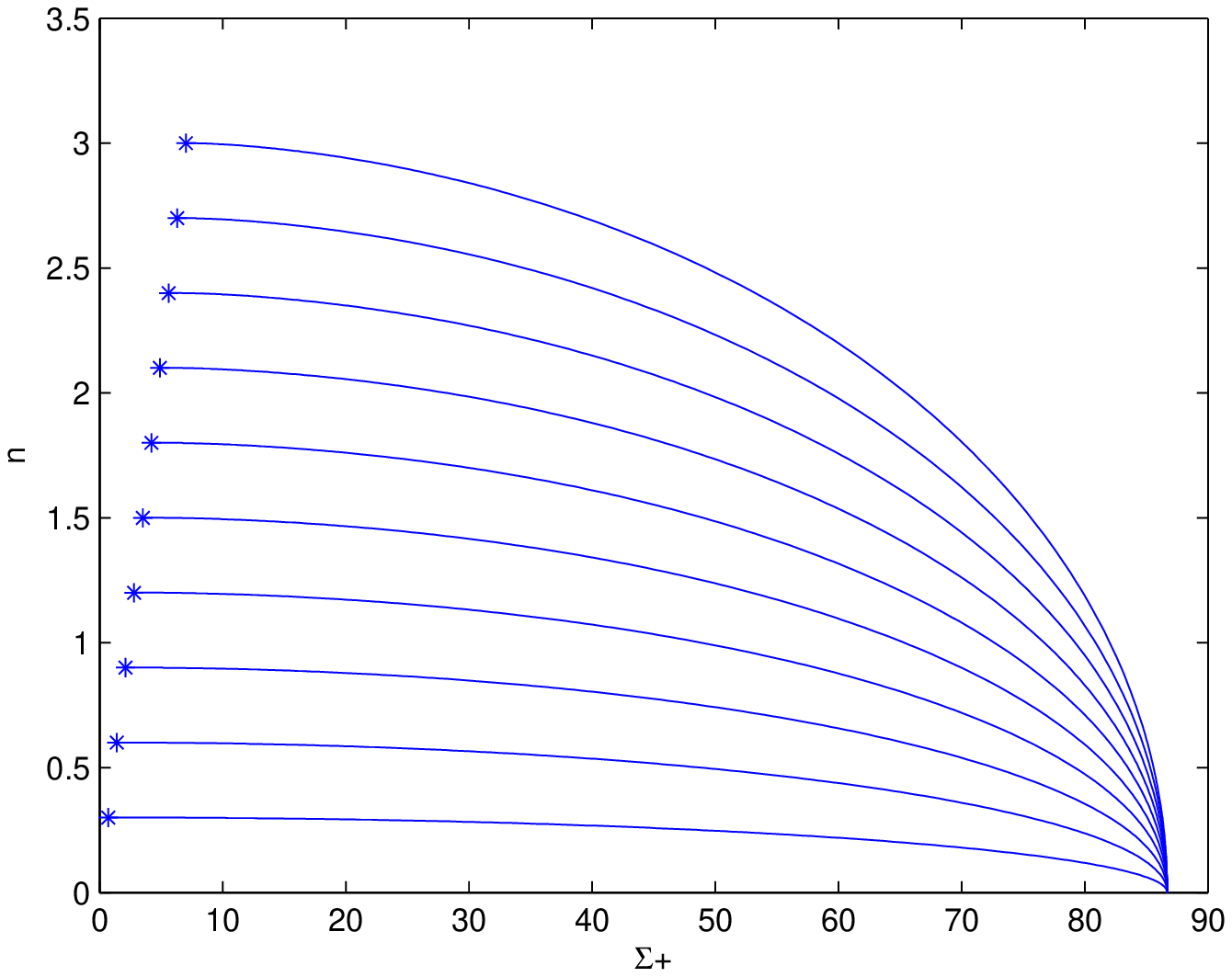}
\label{default}
\end{center}
\end{figure}

\newpage
\begin{figure}[h]
\begin{center}
\caption{Plots of the Energy Density and Anisotropy variables as functions of time, for $\xi_{0} = 0.1$, $\eta_{0} = 100$. One can see that $\Omega \rightarrow 0$, but $\Sigma_{-} \not \to 0$, and $\Sigma_{+} \not \to 0$, so the model does not isotropize at all in this case.}
\includegraphics{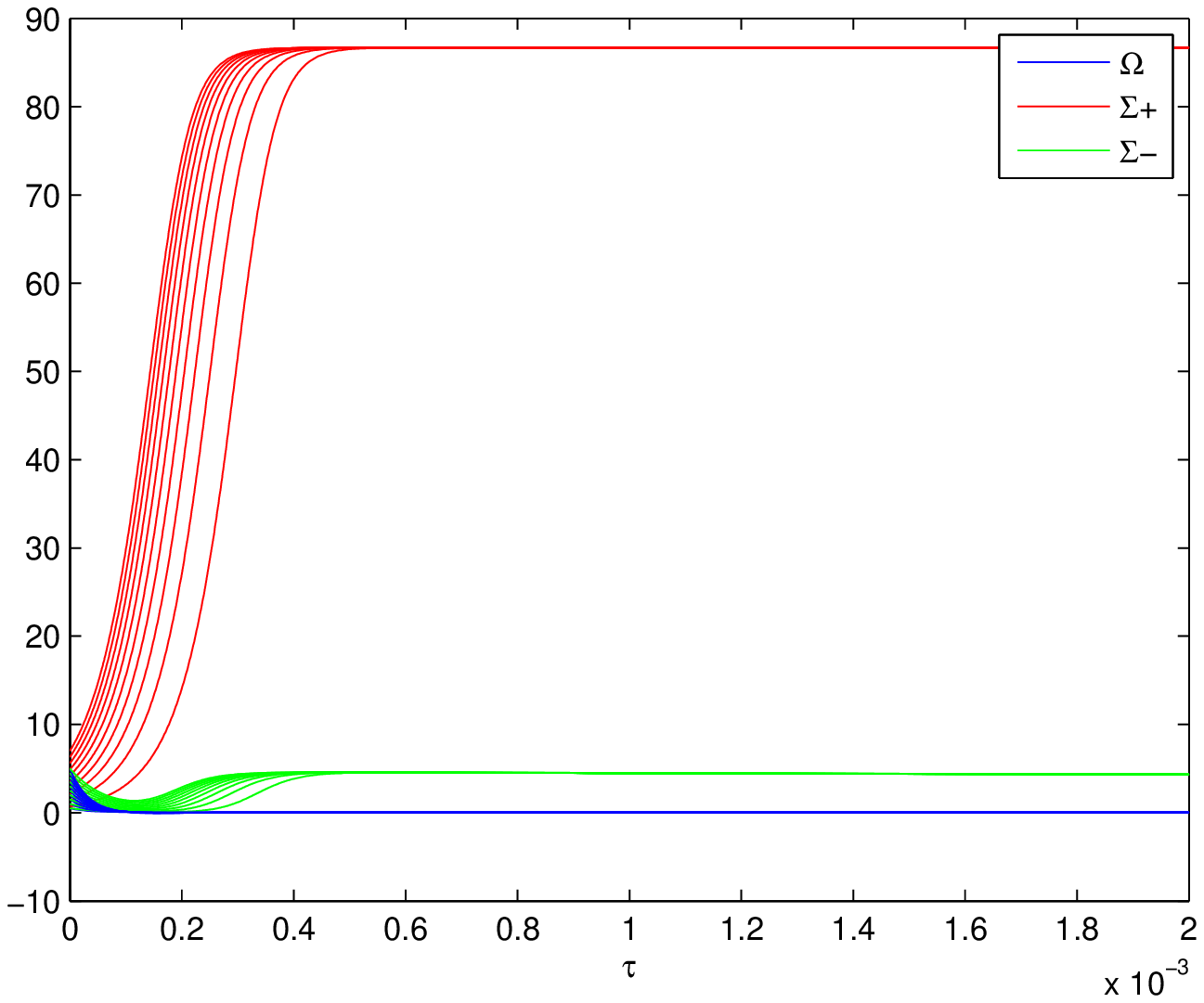}
\label{default}
\end{center}
\end{figure}

\newpage
\begin{figure}[h]
\begin{center}
\caption{Phase plot of the anisotropy in the Hubble flow for $\xi_{0} = 1$, $\eta_{0} = 0.1$. }
\includegraphics{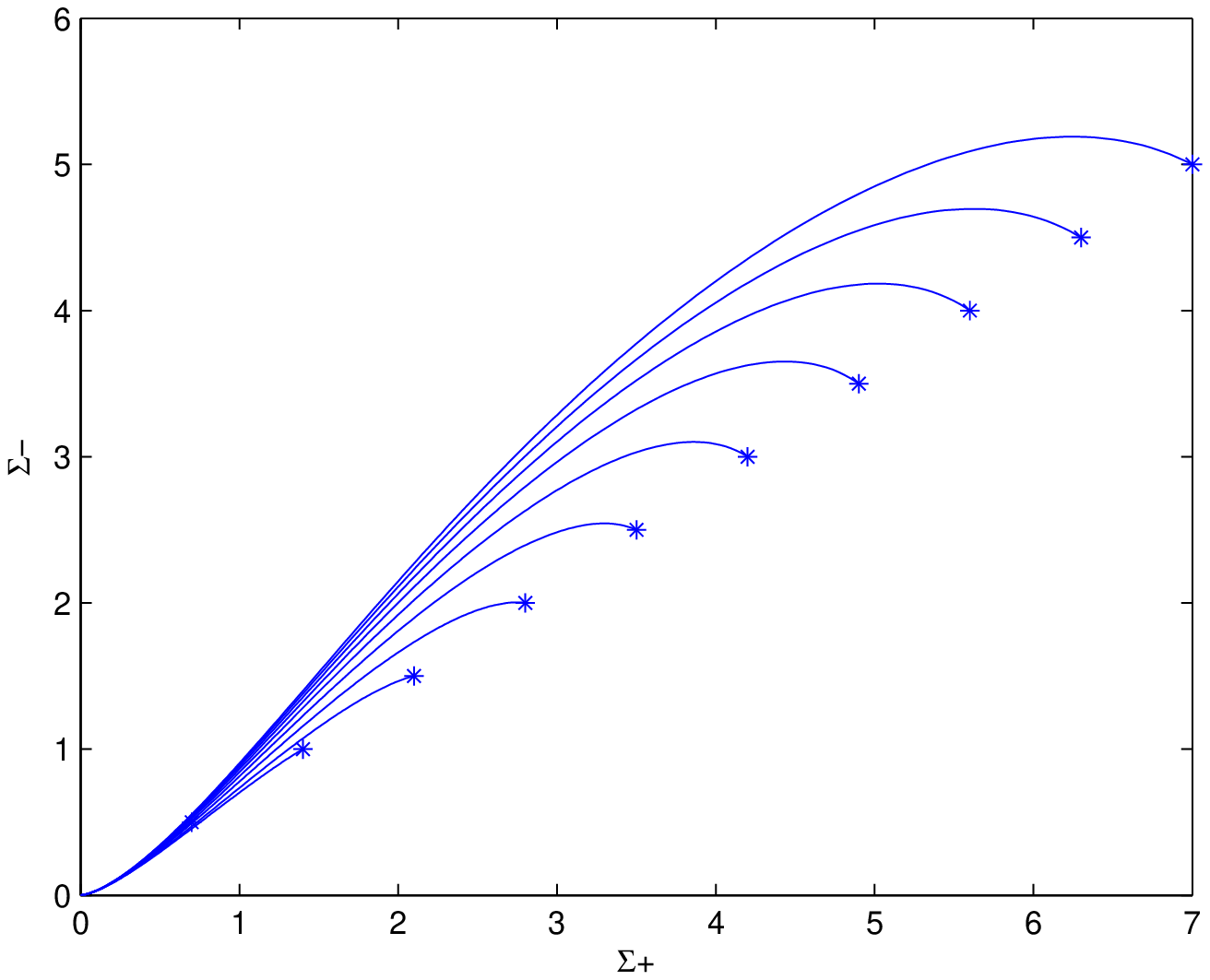}
\label{default}
\end{center}
\end{figure}

\newpage
\begin{figure}[h]
\begin{center}
\caption{Three-dimensional phase plot of the anisotropy in the Hubble flow and spatial curvature for $\xi_{0} = 1$, $\eta_{0} = 0.1$. }
\includegraphics{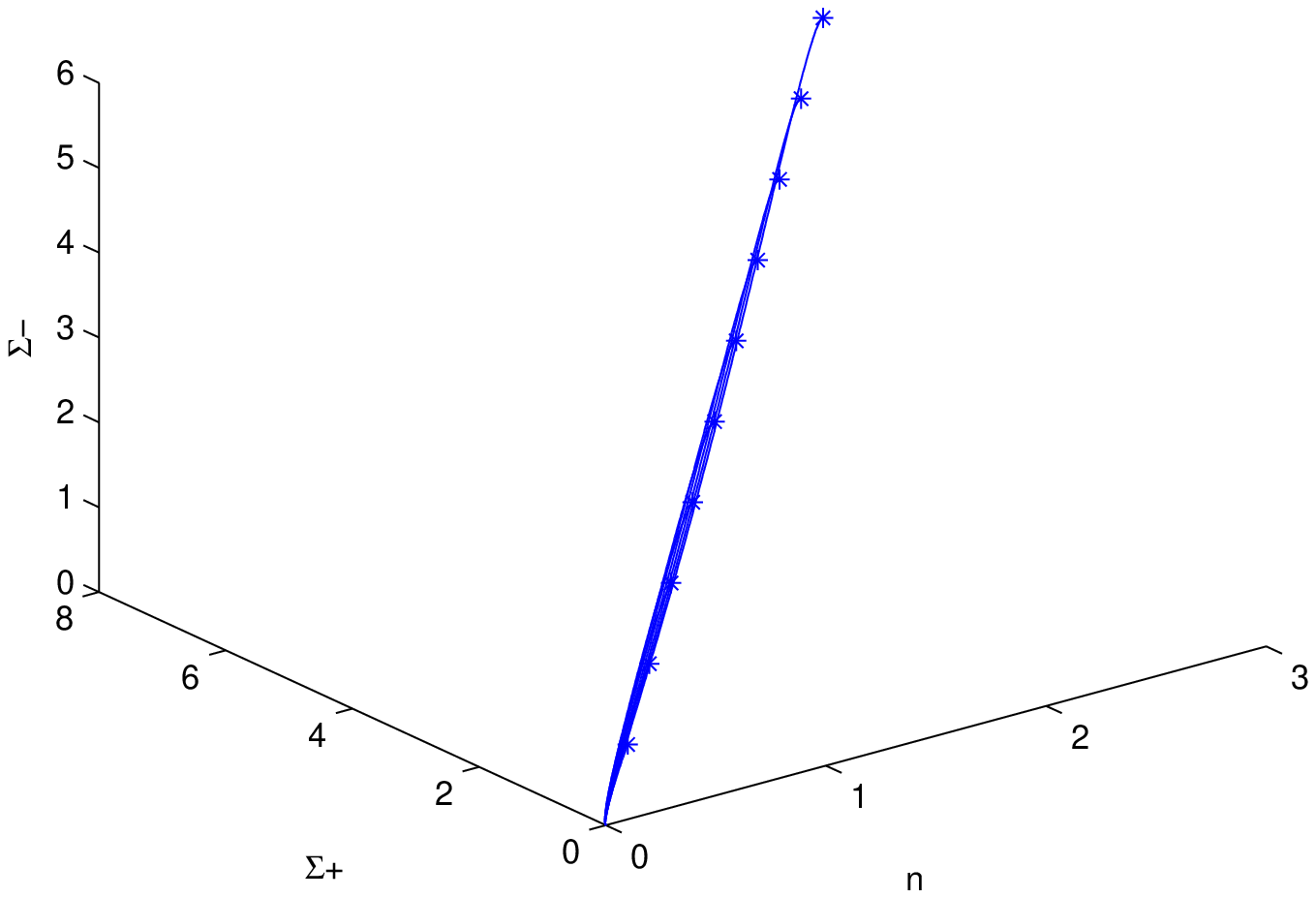}
\label{default}
\end{center}
\end{figure}

\newpage

\begin{figure}[h]
\begin{center}
\caption{Phase plot of the anisotropy and spatial curvature for $\xi_{0} = 1$, $\eta_{0} = 0.1$. }
\includegraphics{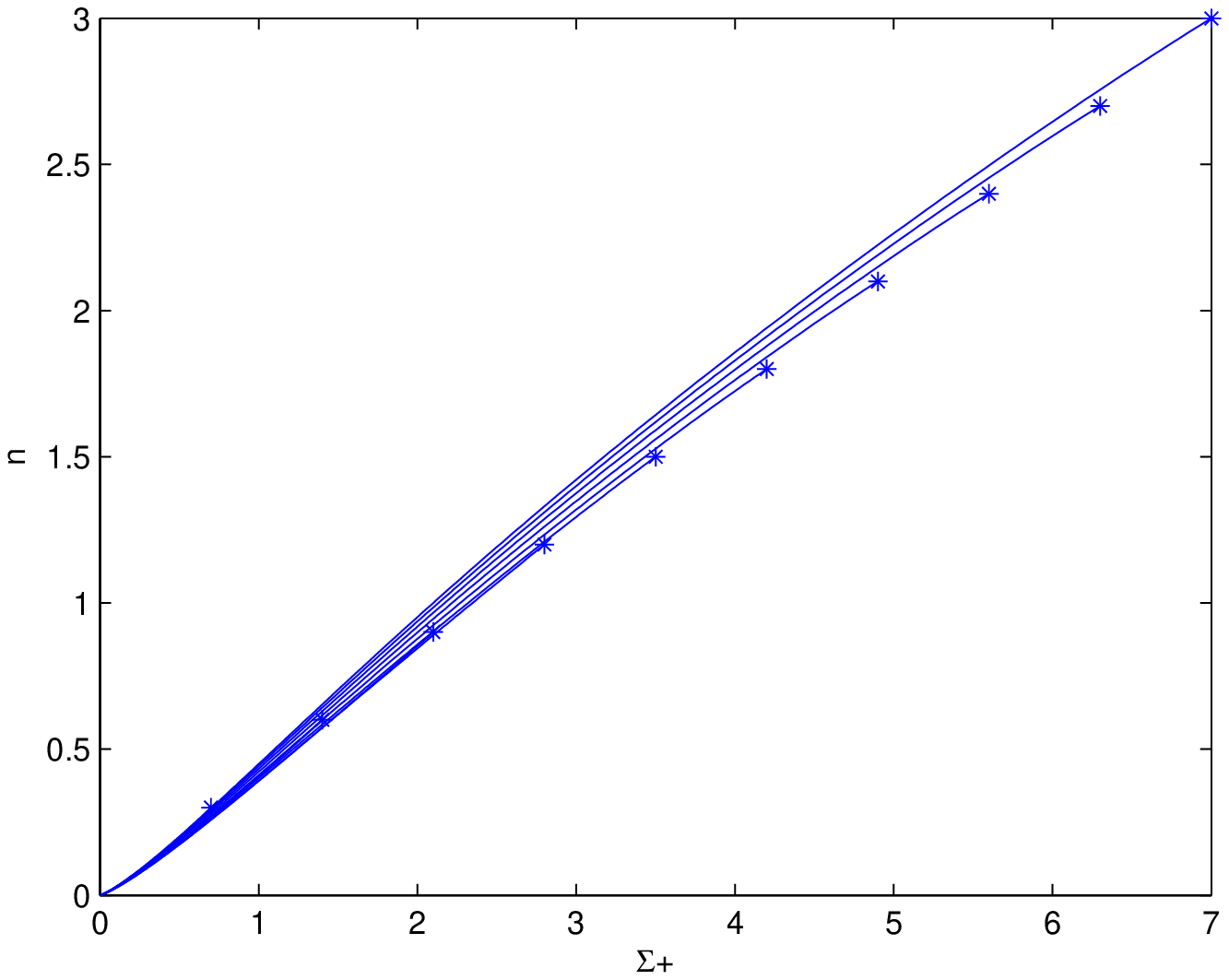}
\label{default}
\end{center}
\end{figure}

\newpage
\begin{figure}[h]
\begin{center}
\caption{Plots of the Energy Density and Anisotropy variables as functions of time, for $\xi_{0} = 1$, $\eta_{0} = 0.1$. One can see that $\Omega \rightarrow 0$,  $\Sigma_{-} \rightarrow 0$ and $\Sigma_{+} \rightarrow 0$, so the model does in fact isotropize.}
\includegraphics{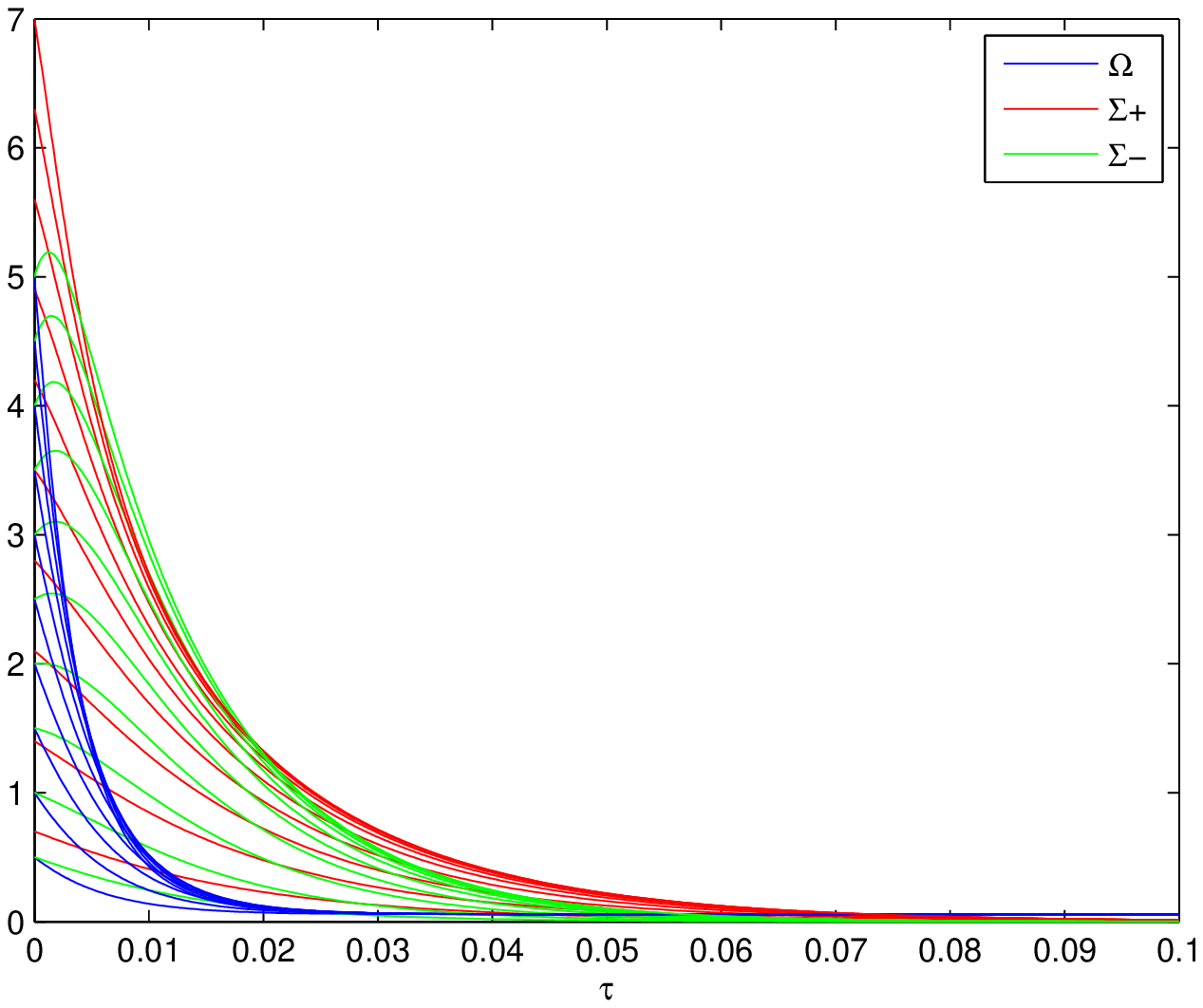}
\label{default}
\end{center}
\end{figure}

\newpage
\begin{figure}[h]
\begin{center}
\caption{Phase plot of the anisotropy in the Hubble flow for $\xi_{0} = 10$, $\eta_{0} = 0.1$. }
\includegraphics{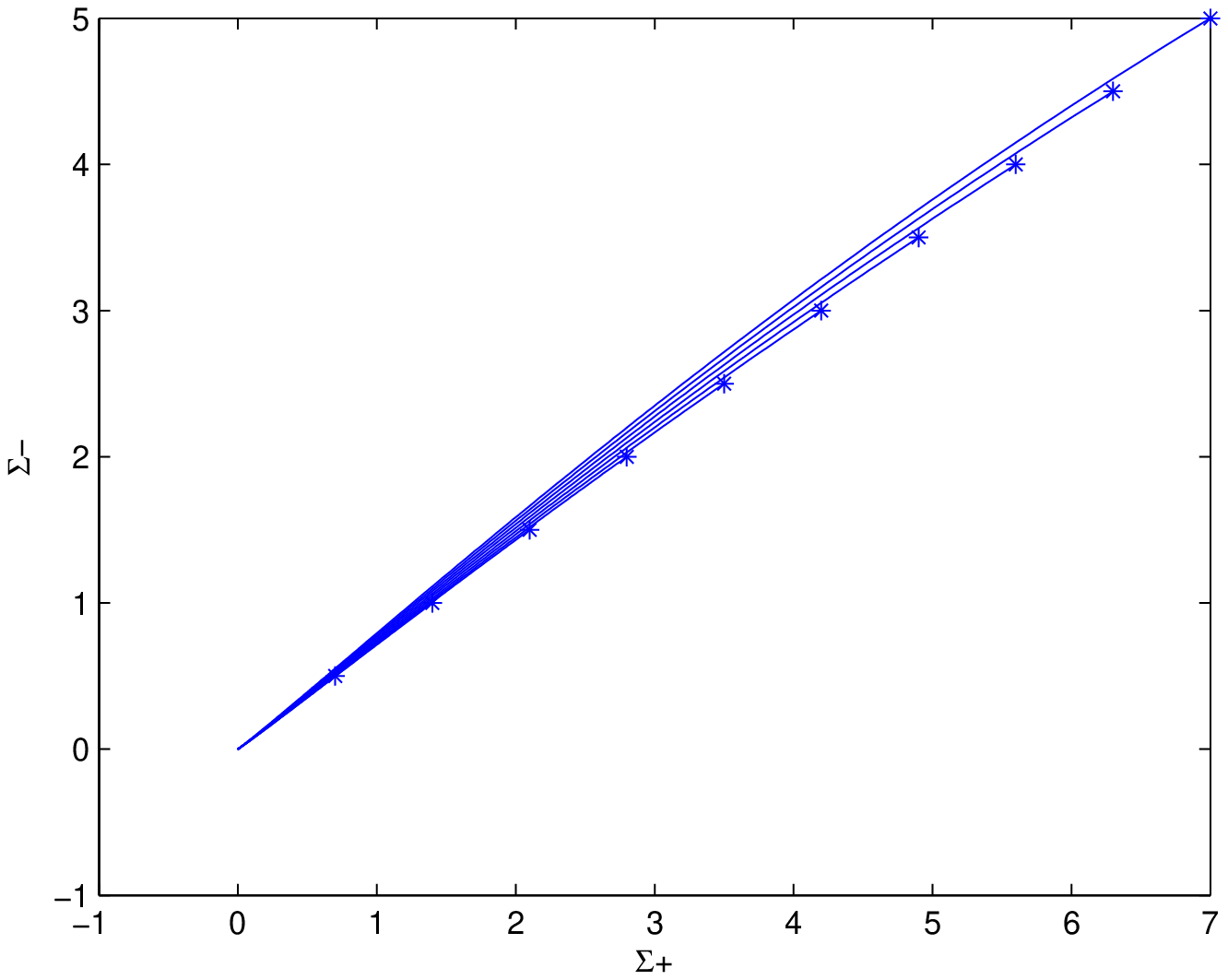}
\label{default}
\end{center}
\end{figure}

\newpage
\begin{figure}[h]
\begin{center}
\caption{Three-dimensional phase plot of the anisotropy in the Hubble flow and spatial curvature for $\xi_{0} = 10$, $\eta_{0} = 0.1$. }
\includegraphics{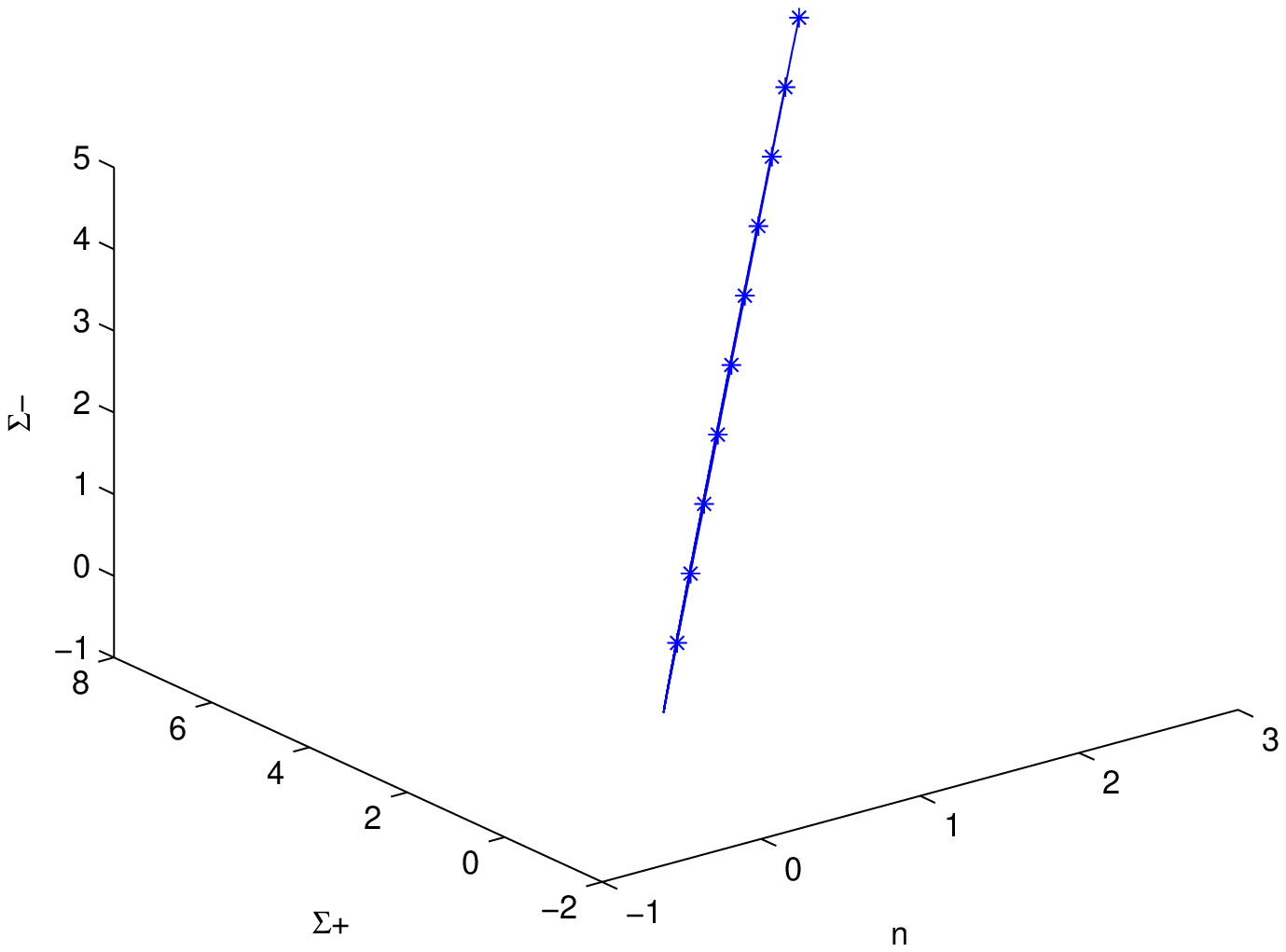}
\label{default}
\end{center}
\end{figure}

\newpage

\begin{figure}[h]
\begin{center}
\caption{Phase plot of the anisotropy and spatial curvature for $\xi_{0} = 10$, $\eta_{0} = 0.1$.}
\includegraphics{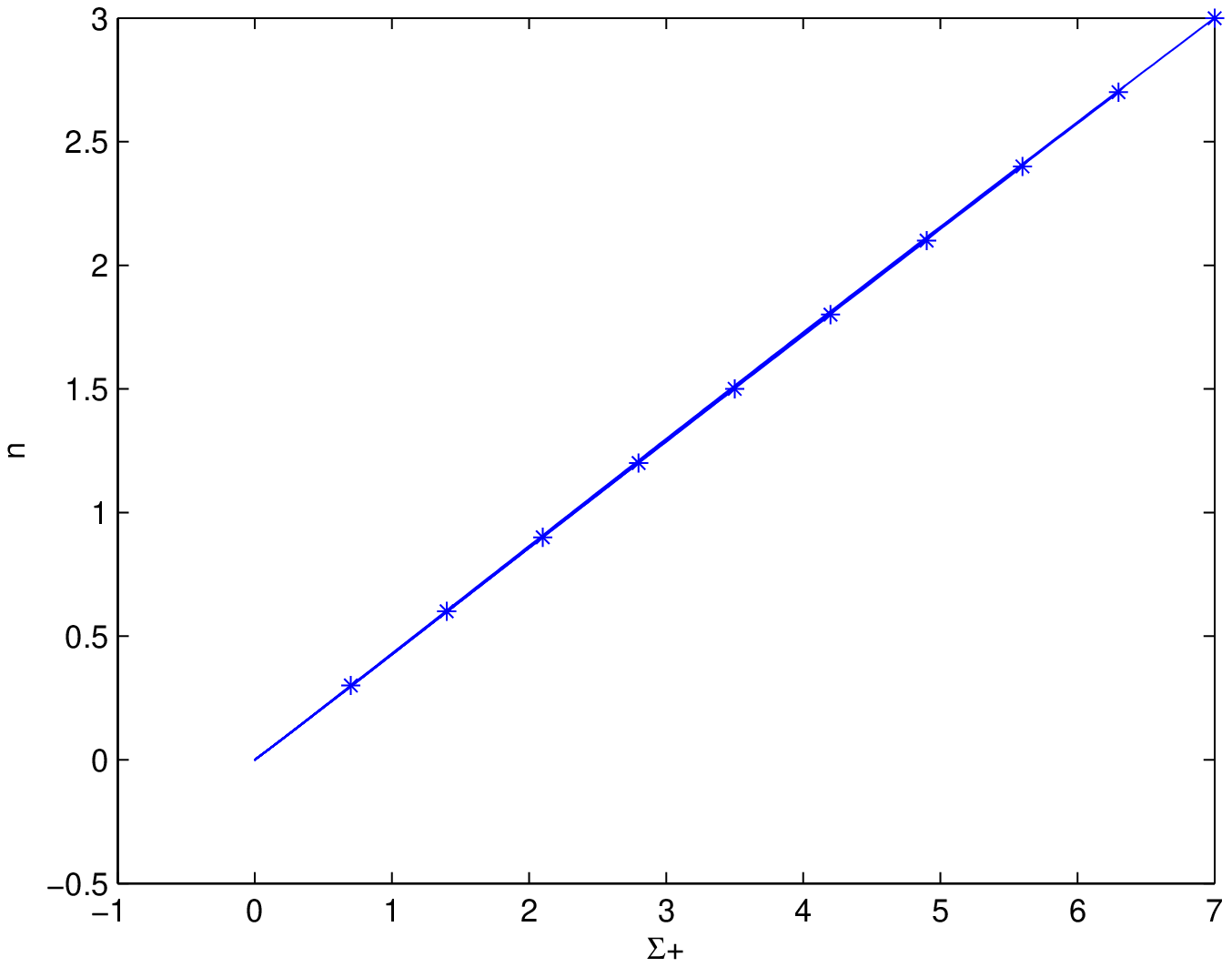}
\label{default}
\end{center}
\end{figure}

\newpage
\begin{figure}[h]
\begin{center}
\caption{Plots of the Energy Density and Anisotropy variables as functions of time, for $\xi_{0} = 10$, $\eta_{0} = 0.1$. One can see that $\Omega \rightarrow 0$,  $\Sigma_{-} \rightarrow 0$ and $\Sigma_{+} \rightarrow 0$, so the model does in fact isotropize.}
\includegraphics{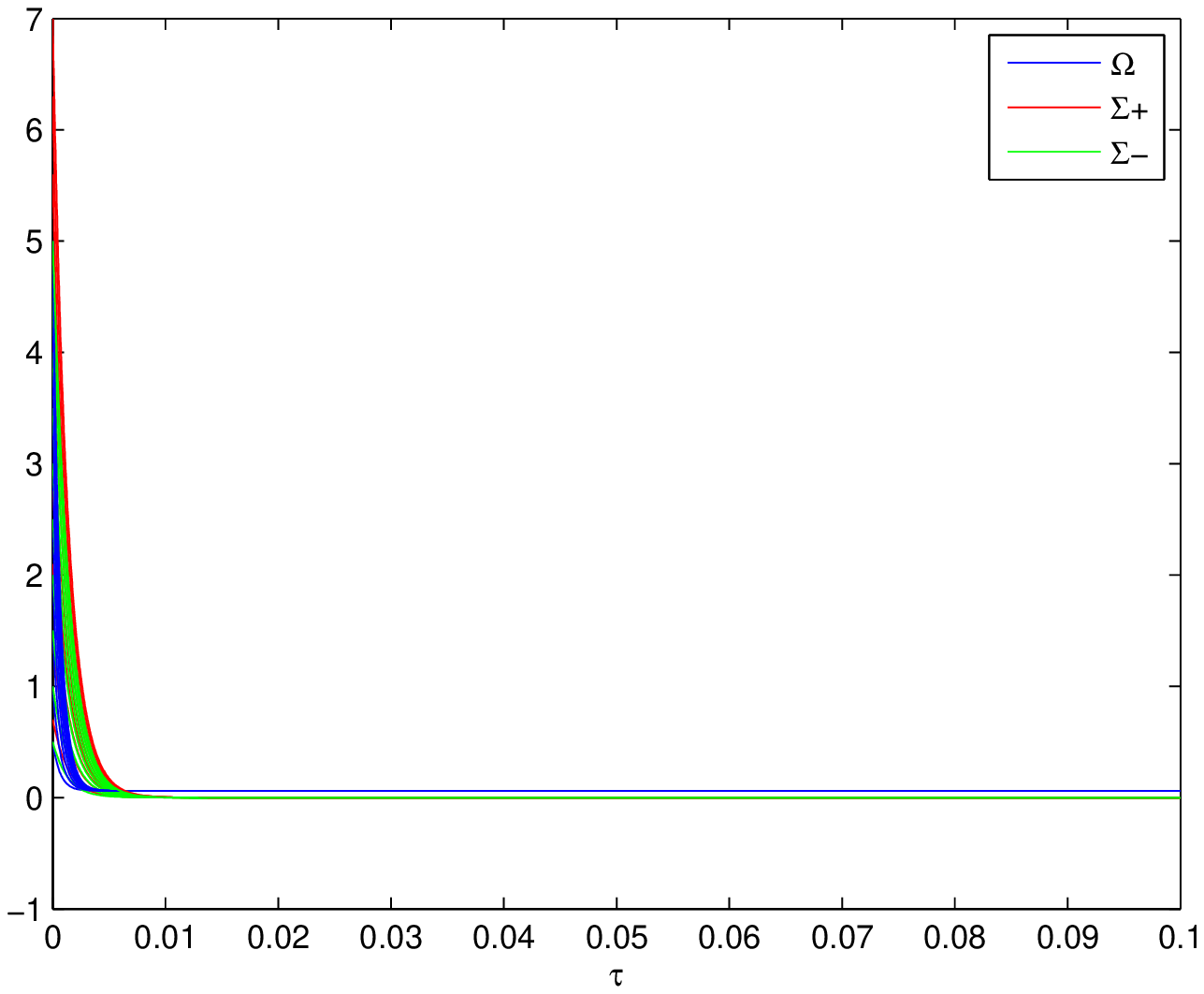}
\label{default}
\end{center}
\end{figure}

\newpage
\begin{figure}[h]
\begin{center}
\caption{Phase plot of the anisotropy in the Hubble flow for $\xi_{0} = 1$, $\eta_{0} = 0$.}
\includegraphics{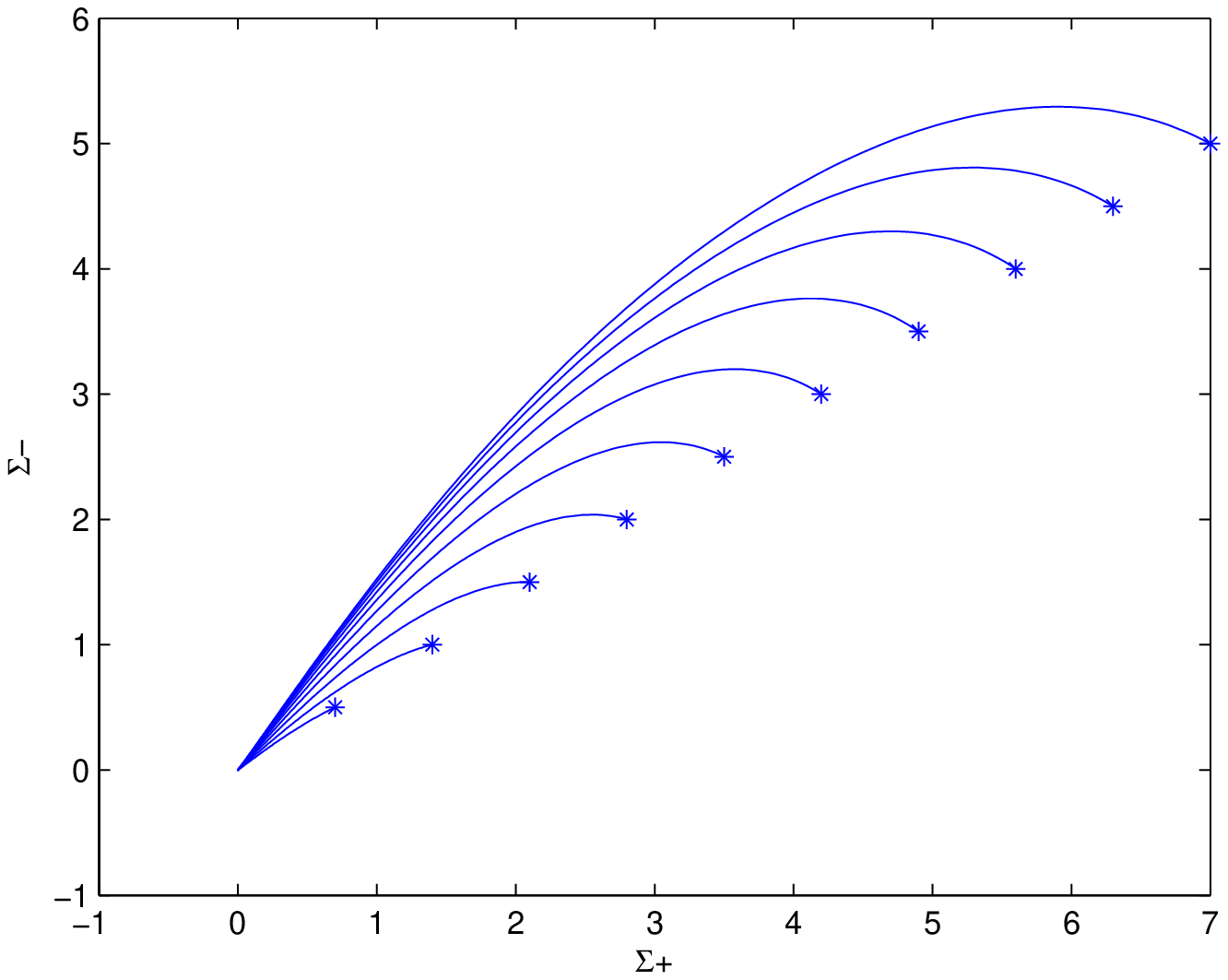}
\label{default}
\end{center}
\end{figure}

\newpage
\begin{figure}[h]
\begin{center}
\caption{Three-dimensional phase plot of the anisotropy in the Hubble flow and spatial curvature for $\xi_{0} = 1$, $\eta_{0} = 0$. }
\includegraphics{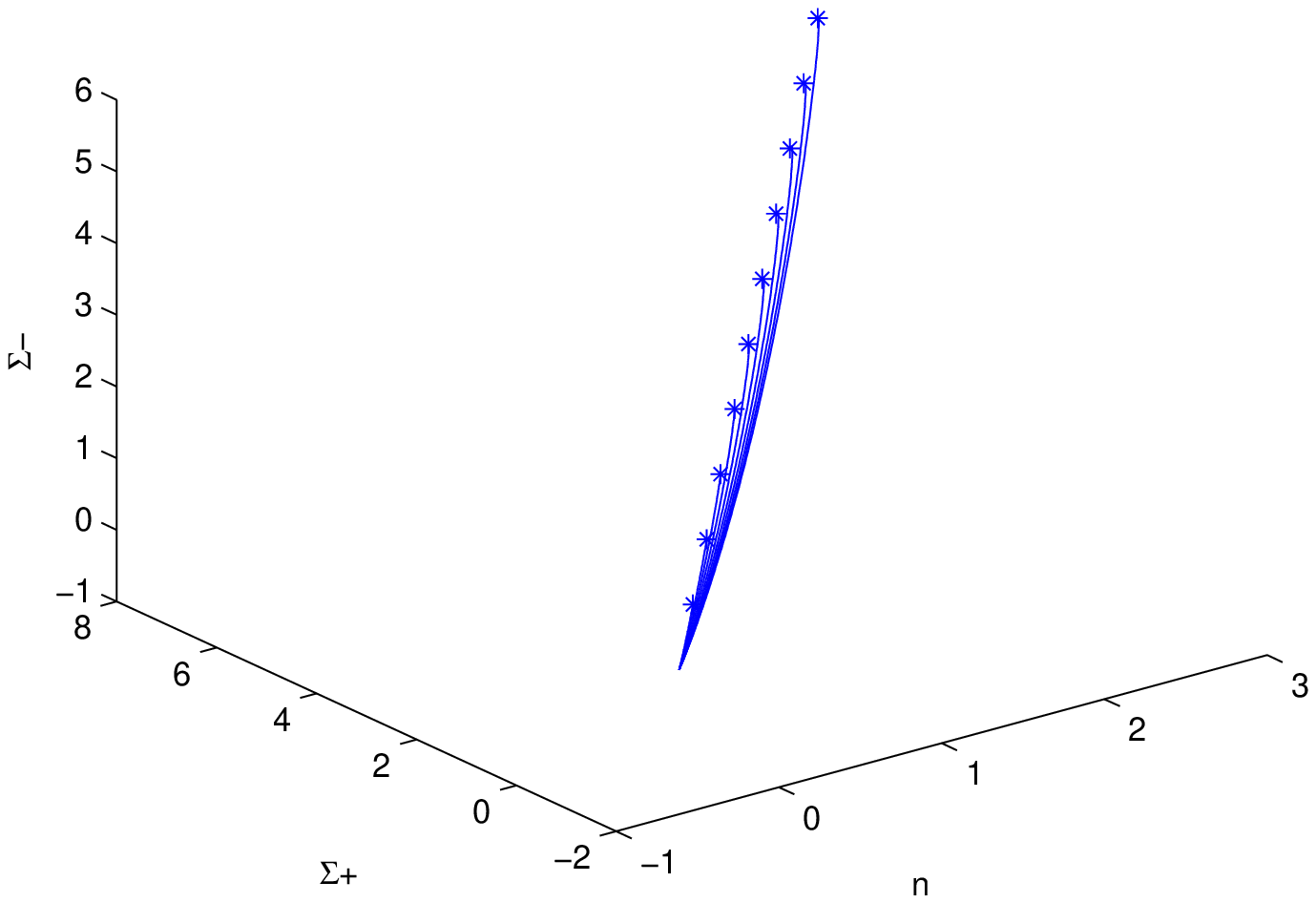}
\label{default}
\end{center}
\end{figure}

\newpage

\begin{figure}[h]
\begin{center}
\caption{Phase plot of the anisotropy and spatial curvature for $\xi_{0} = 1$, $\eta_{0} = 0$.}
\includegraphics{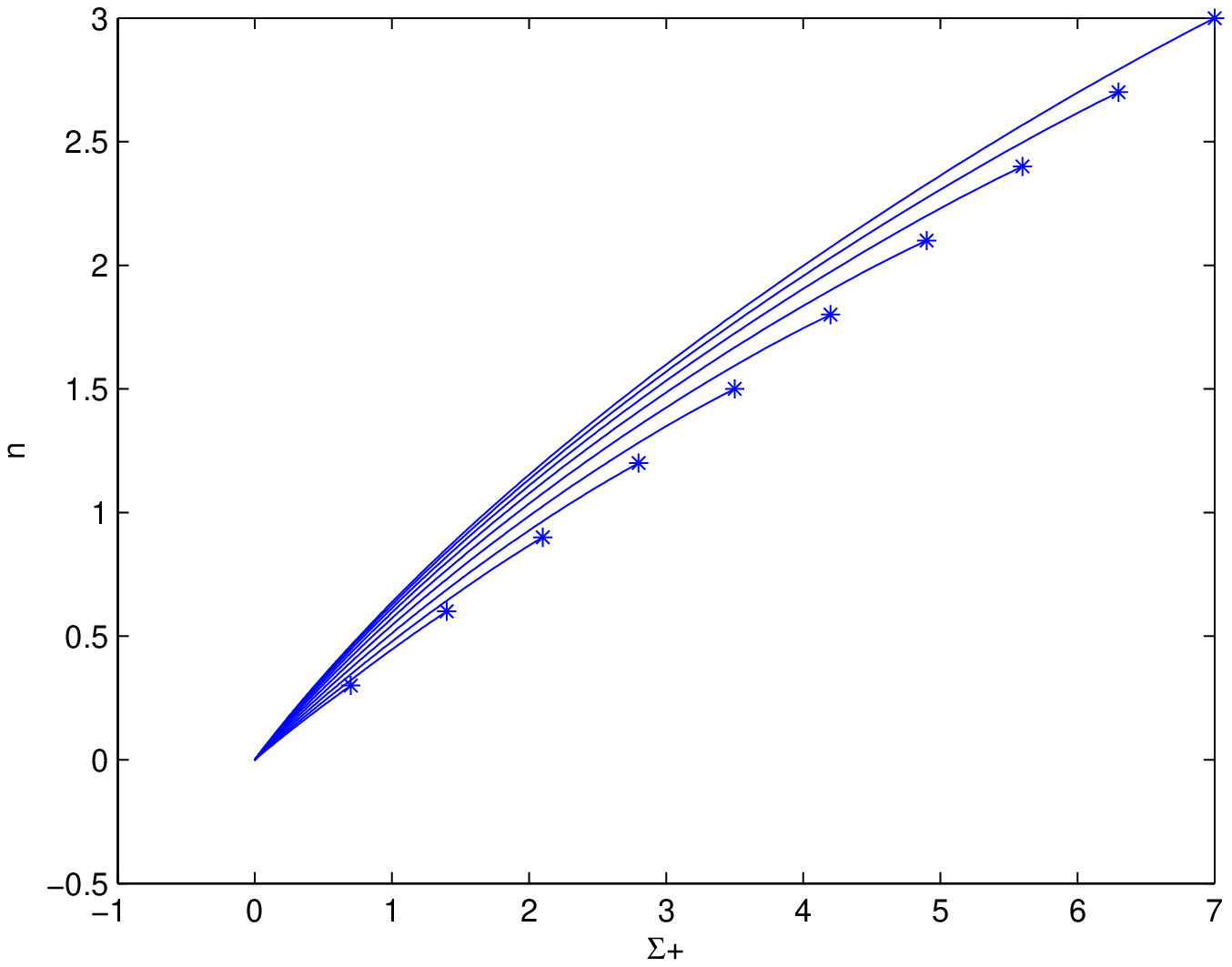}
\label{default}
\end{center}
\end{figure}

\newpage
\begin{figure}[h]
\begin{center}
\caption{Plots of the Energy Density and Anisotropy variables as functions of time, for $\xi_{0} = 1$, $\eta_{0} = 0$. One can see that $\Omega \rightarrow 0$,  $\Sigma_{-} \rightarrow 0$ and $\Sigma_{+} \rightarrow 0$, so the model does in fact isotropize.}
\includegraphics{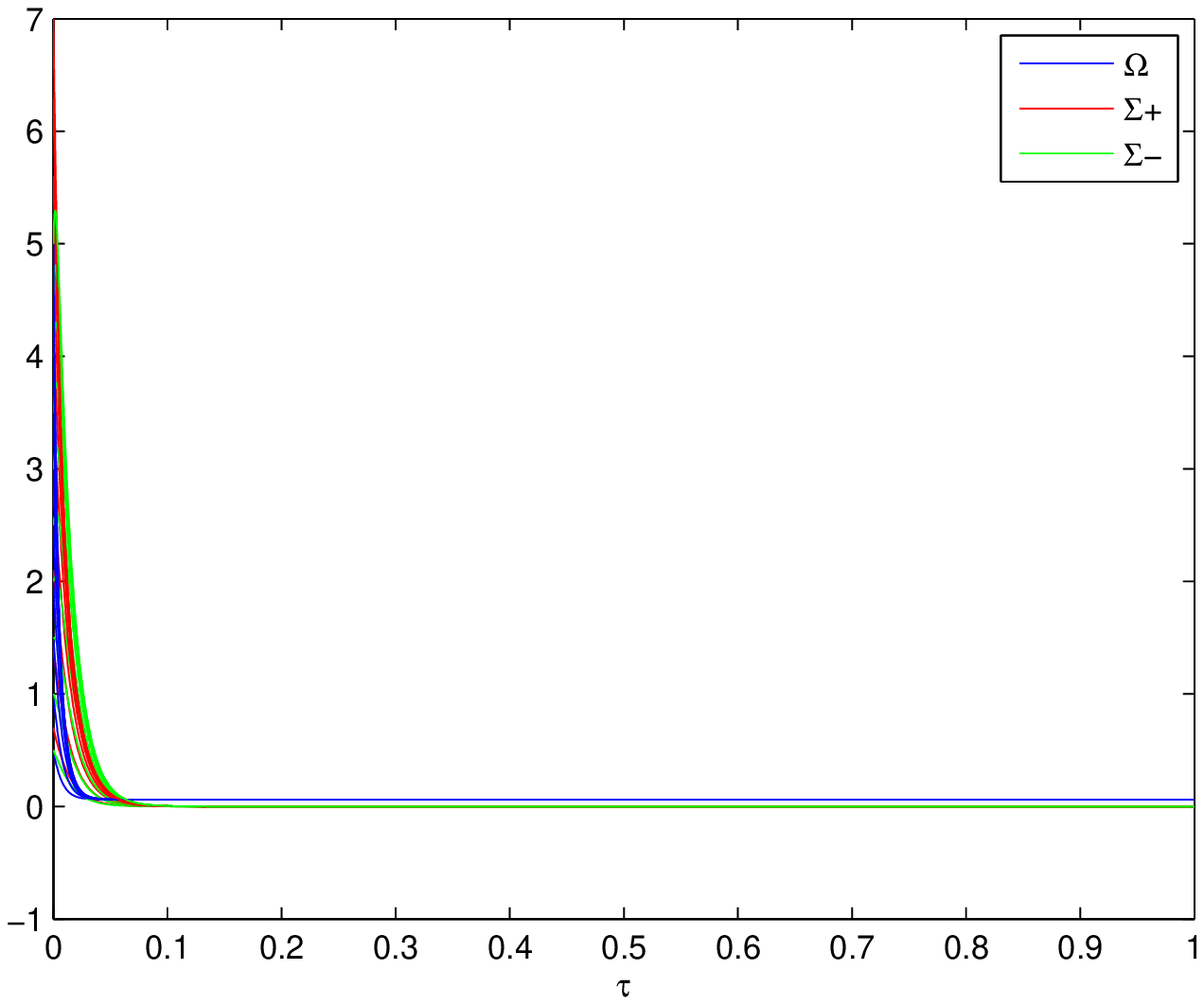}
\label{default}
\end{center}
\end{figure}

\newpage
\begin{figure}[h]
\begin{center}
\caption{Phase plot of the anisotropy in the Hubble flow for $\xi_{0} = 100$, $\eta_{0} = 0$.}
\includegraphics{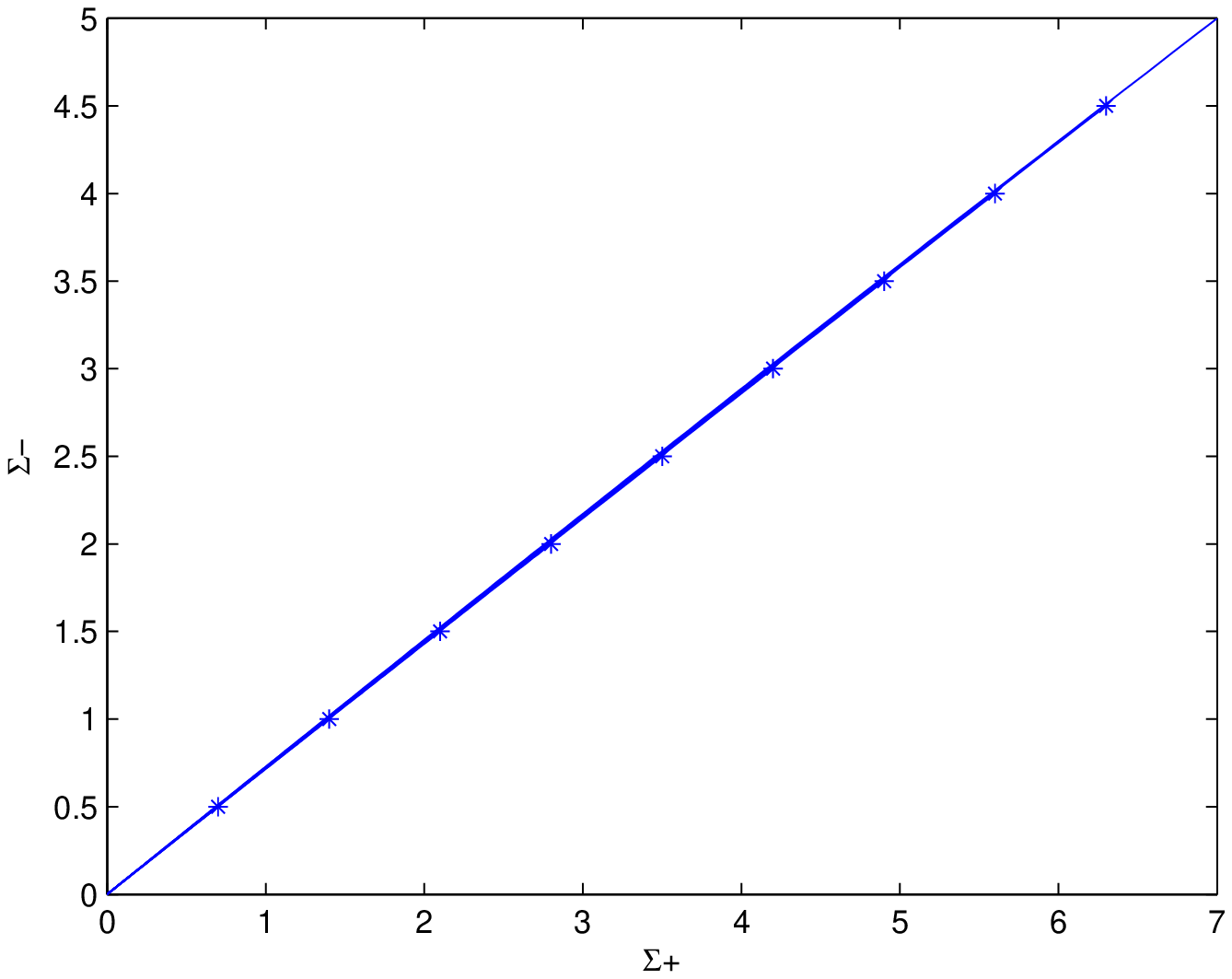}
\label{default}
\end{center}
\end{figure}

\newpage
\begin{figure}[h]
\begin{center}
\caption{Three-dimensional phase plot of the anisotropy in the Hubble flow and spatial curvature for $\xi_{0} = 100$, $\eta_{0} = 0$. }
\includegraphics{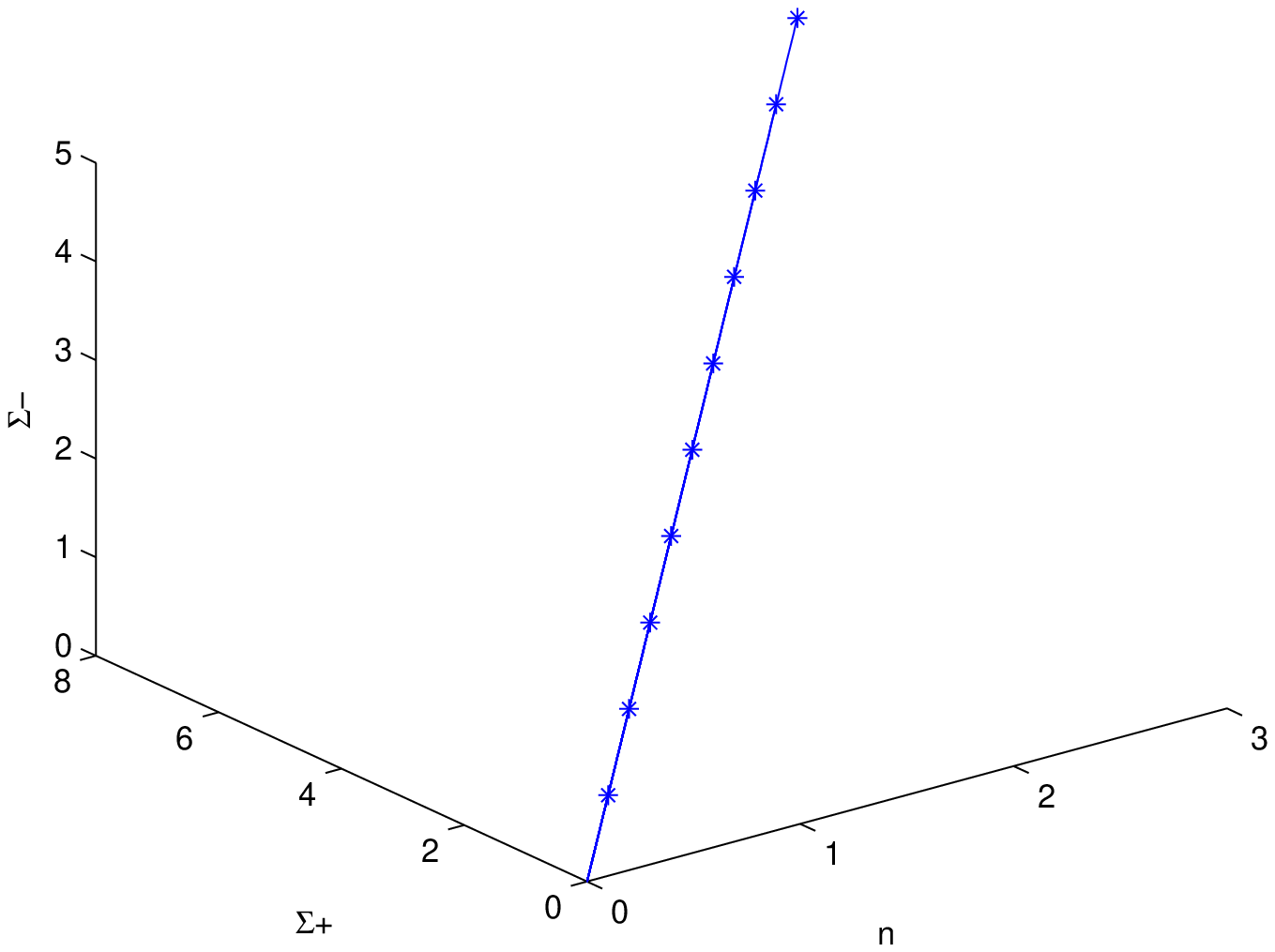}
\label{default}
\end{center}
\end{figure}

\newpage

\begin{figure}[h]
\begin{center}
\caption{Phase plot of the anisotropy and spatial curvature for $\xi_{0} = 100$, $\eta_{0} = 0$.}
\includegraphics{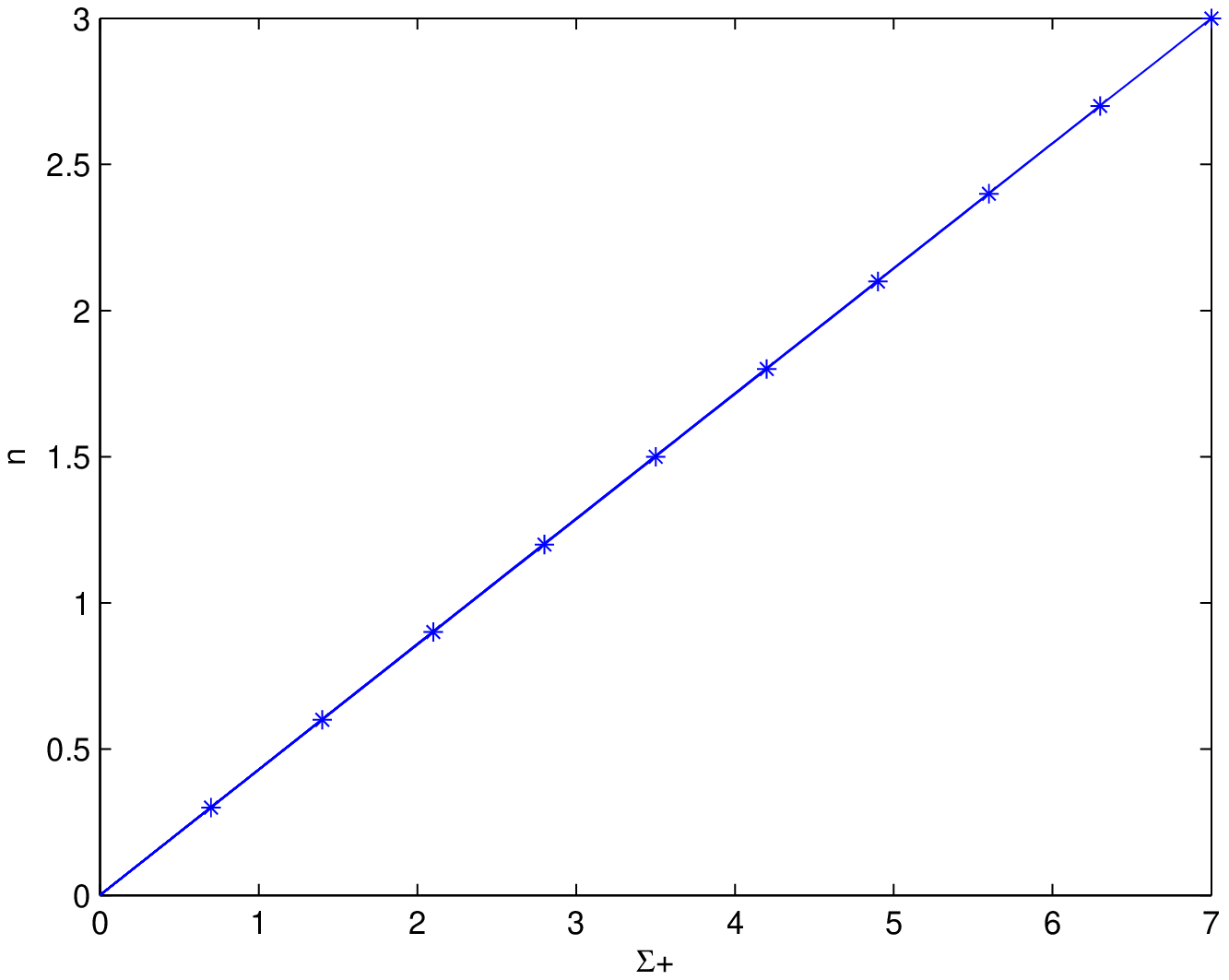}
\label{default}
\end{center}
\end{figure}

\newpage
\begin{figure}[h]
\begin{center}
\caption{Plots of the Energy Density and Anisotropy variables as functions of time, for $\xi_{0} = 100$, $\eta_{0} = 0$. One can see that $\Omega \rightarrow 0$,  $\Sigma_{-} \rightarrow 0$ and $\Sigma_{+} \rightarrow 0$, so the model does in fact isotropize.}
\includegraphics{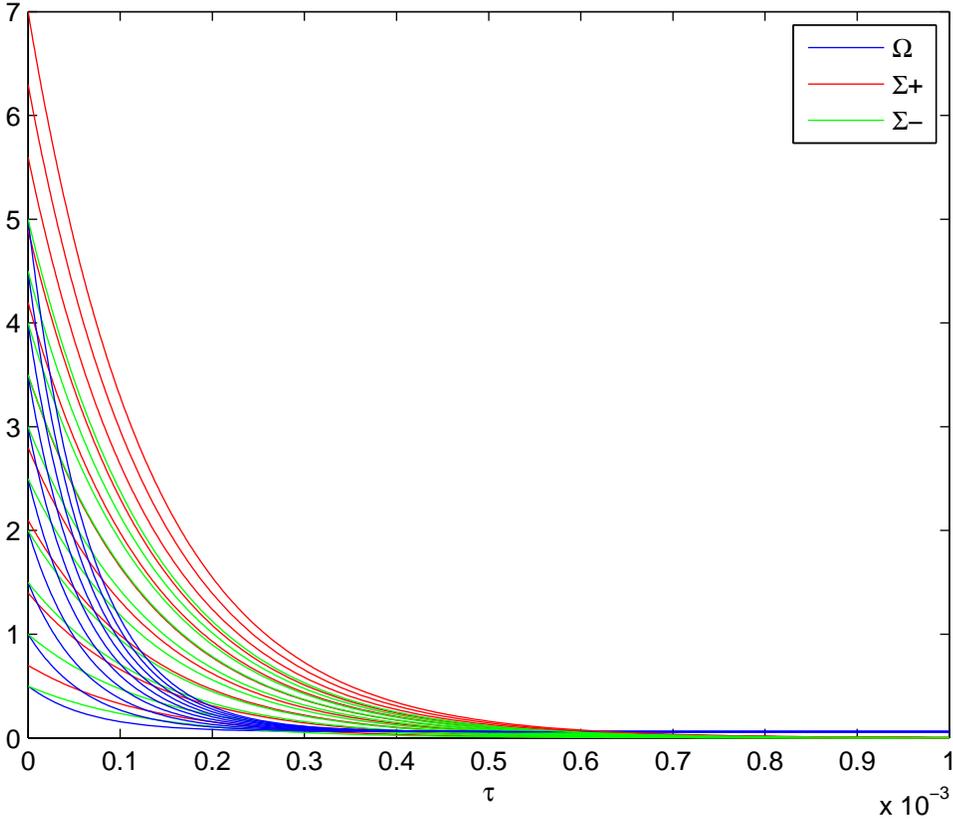}
\label{default}
\end{center}
\end{figure}

\newpage

\subsection{Discussion of Results}
First looking at Figs. (6.1-6.16), one can see that the cosmological model does not isotropize in all directions. In addition, from the phase plots of the dynamic variables, one can see that the dynamics of the model resemble that of \emph{plane-wave} solutions. Such behaviour was noted by Hewitt and Wainwright \cite{ellis} in their investigation of the dynamics of a non-tilted perfect-fluid Bianchi Type-IV model. In addition, the asymptotic limit of anisotropy of a Bianchi Type IV perfect fluid model was noted by Hervik, van den Hoogen and Coley \cite{hervikvan}. It is interesting to note that as the shear viscosity coefficient $\eta_{0}$ increased, the anisotropy in the cosmological model increased as well. However, looking at Figs. (6.17 - 6.24) where we introduced a bulk viscosity coefficient $\xi_{0}$ that was significantly larger than the shear viscosity coefficient caused the model to isotropize. What is additionally interesting is that as the bulk viscosity was increased, the relationship between the dynamical variables became increasingly linear, which hints at the notion that the dynamics of the transitionary period from the early-universe cosmology to the present-day cosmological models become simpler. Perhaps the most interesting point of all is that while the shear viscosity coefficient was set to zero in Figs. (6.25 - 6.32), with the bulk viscosity coefficient becoming larger and larger, the model still isotropized. This provides evidence that the presence of a significant bulk viscous pressure rather than a significant shear viscosity is necessary for such a cosmological model to isotropize. Similar results for other Bianchi models have been reported by van den Hoogen and Coley \cite{vdh}, Singh and Kale \cite{singhkale}, and Pradhan, Rai and Singh \cite{pradhan}. It is also important to note that no cosmological model that is spatially homogeneous and anisotropic and necessarily has a three-dimensional isometry group can evolve to a model with a six-dimensional isometry group. When one speaks of a model isotropizing, $\Sigma_{\pm} \rightarrow 0$ only asymptotically. However, any Bianchi model can become arbitrary close to that of the FLRW models as a consequence of asymptotic isotropization. It is also a strong implication that an isotropic universe necessarily implies the existence of an isotropic fluid. Therefore, one must conclude that the nature of our present-day spatially homogeneous and isotropic universe as described by the FLRW models is actually quite special indeed!

\section{On The Penrose-Hawking Singularity Theorem}
Up to this point, we have demonstrated that a Bianchi Type-IV model isotropizes with the significant presence of bulk viscous pressure. One can interpret our work so far as evolving an early-universe cosmological model towards the future. Now, we wish to analyze the reverse question by considering the physical past of a viscous model. In particular, we are interested to know whether such a universe could have emerged from a physical singularity point. 

We begin by considering The Raychaudhuri equation as given in Eq. (4.42):

\begin{equation*}
\dot{\theta} = -\frac{1}{2}\kappa\left(\mu + 3p\right) + \frac{3}{2}\kappa \xi \theta - 2 \sigma^2 - \frac{1}{3}\theta^2
\end{equation*}

If we assume the strong energy condition (Eq. (3.31)):

\begin{eqnarray}
R^{ab}u_{a}u_{b} &\geq& 0 \nonumber \\
\Rightarrow \frac{1}{2}\kappa \left(\mu + 3p\right) - \frac{3}{2}\kappa \xi \theta &\geq& 0 \mbox{ (Eq. (4.39))} \nonumber \\
\Rightarrow -\frac{1}{2}\kappa \left(\mu + 3p\right) + \frac{3}{2}\kappa \xi \theta &\leq& 0 
\end{eqnarray}

Here, we assume the equation of state:

\begin{equation}
\xi = \frac{C}{\theta}, \mbox{ $C \in \mathbb{R^{+}}$}
\end{equation}

Hence, Eq. (7.1) becomes:
\begin{equation}
 -\frac{1}{2}\kappa \left(\mu + 3p\right) + \frac{3}{2}\kappa C \leq 0 
\end{equation}

Since $\sigma^2 \geq 0$, we have from The Raychaudhuri equation:

\begin{equation}
\dot{\theta}  + \frac{1}{2}\kappa\left(\mu + 3p\right) -  \frac{3}{2}\kappa C+ \frac{1}{3}\theta^2 = -2\sigma^2 \leq 0
\end{equation}

Using Eq. (7.3), this becomes:

\begin{eqnarray}
\dot{\theta}  + \frac{1}{2}\kappa\left(\mu + 3p\right) -  \frac{3}{2}\kappa C+ \frac{1}{3}\theta^2 &\leq& 0 \nonumber \\
\Rightarrow \dot{\theta} + \frac{1}{3}\theta^2 \leq -\frac{1}{2}\kappa\left(\mu + 3p\right) + \frac{3}{2}\kappa C &\leq& 0 \nonumber \\
\Rightarrow \dot{\theta} + \frac{1}{3}\theta^2 &\leq& 0 \nonumber \\
\Rightarrow \dot{\theta} &\leq& -\frac{1}{3}\theta^2
\end{eqnarray}

We first proceed assuming a strict inequality in Eq. (7.5). That is:

\begin{eqnarray}
 \dot{\theta} &\leq& -\frac{1}{3}\theta^2 \nonumber \\
 \Rightarrow \frac{d}{dt} \left(\frac{1}{\theta}\right) &\geq& \frac{1}{3} \nonumber \\
 \Rightarrow \frac{1}{\theta(t)} &\leq& \frac{1}{\theta_{i}} + \frac{t}{3}
\end{eqnarray}

We have assumed the initial condition: $\theta_{i} = \theta(0)$. In addition, we also assume that $t \leq 0$. If we have an expanding universe, then, necessarily, the congruence of geodesics is expanding at $t=0$, which implies that $\theta_{i} > 0$. This in turn implies that:

\begin{equation}
\forall \mbox{ } t \in \mathbb{R}, \exists \mbox{ } t_{sing} \mbox{ such that: } \frac{1}{\theta(t_{sing})} = 0
\end{equation}

That is, the function $1/\theta(t)$ must have passed through zero at some time, $t_{sing}$. It then follows that:
\begin{equation}
 \frac{1}{\theta(t_{sing})} = 0 \Rightarrow \theta(t_{sing}) = \infty
\end{equation}

This tells us that there was a physical singularity at the time $t_{sing}$, where the expansion scalar was infinite. 

In the case of a strict equality in Eq. (7.6), we can be slightly more precise by employing the intermediate value theorem. We begin by defining the function:

\begin{equation}
F(t) \equiv \frac{1}{\theta(t)} = \frac{1}{\theta_{i}} + \frac{t}{3}
\end{equation}

We wish to find an interval $[a,b]$ such that $F(t) = 0$ on this interval. Suppose that $F: [a,b] \rightarrow \mathbb{R}$ is continuous and that $u \in \mathbb{R}$ such that: $F(a) < u < F(b)$. Then, by the intermediate value theorem, $\exists \mbox{ } c \in [a,b] \mbox{ such that } F(c) = u$. Clearly, we wish to find a $c$ such that $F(c) = 0$, that is, $u = 0$. This means that $u = 0$ satisfies:

\begin{equation}
F(a) = \frac{1}{\theta_{i}} + \frac{a}{3} < 0 <  F(b) = \frac{1}{\theta_{i}} + \frac{b}{3}
\end{equation}

This implies that:

\begin{equation}
\frac{1}{\theta_{i}} + \frac{a}{3} < 0 < \frac{1}{\theta_{i}} + \frac{b}{3}
\end{equation}

Since in an expanding universe, $\theta_{i} > 0$, we need to choose $[a,b] \in \mathbb{R}$ such that Eq. (7.11) is satisfied. Any choice for which $a < 0$, and $b > 0$ will satisfy this relationship. In this interval:

\begin{equation}
F(c) = \frac{1}{\theta_{i}} + \frac{c}{3} = 0 \Rightarrow |c| = \frac{3}{\theta_{i}}
\end{equation}

Therefore, $F(t) = \frac{1}{\theta(t)}$ does indeed equal zero at $c\in[a,b]$ such that $a<0, \mbox{} b>0$ and $c = \frac{3}{\theta_{i}}$. Which means that at some time $t = c$, the expansion scalar was infinite, which indicates that there was a singularity at $t = c$. Note that, in the above analysis, $t=0$ is some arbitrary time, and \emph{not} the time of the singularity. It can be thought of as the initial time, and $\theta_{i}$ is the value of the expansion scalar at this time. The arguments given above revolve around the idea that the equation for $\theta(t)$ is integrated backwards upon which noting that there exists a time $t_{sing} < 0$ which corresponds to the singularity. It then so happens that the singularity occurs at $t = t_{s} < 0$.
$\qedhere$
\\
A more formal version of the arguments given above could be given as follows. We essentially follow the methods in \cite{salas}.

\newtheorem{lem}{Lemma}
\begin{lem}
Let $F(t) \equiv \frac{1}{\theta(t)} = \frac{1}{\theta_{i}} + \frac{t}{3}$ be continuous on [a,b]. If $F(a) < 0 < F(b) \vee F(b) < 0 < F(a) \Rightarrow \exists \mbox{ } d, \mbox{ such that } a < d < b$ for which $F(d) = 0$.
\end{lem}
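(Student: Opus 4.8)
The plan is to recognize this lemma as the Intermediate Value Theorem specialized to the value $u = 0$, and to give it a self-contained proof resting on the completeness of $\mathbb{R}$. There are really two routes, and I would present the second as the formal argument promised in the text while keeping the first as a sanity check.

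The quick route exploits the fact that $F$ is affine with positive slope $1/3$, hence strictly increasing. The equation $F(d) = \frac{1}{\theta_i} + \frac{d}{3} = 0$ has the unique solution $d = -3/\theta_i$, and since $F$ is strictly increasing the only hypothesis consistent with it is $F(a) < 0 < F(b)$ (the reversed inequality $F(b) < 0 < F(a)$ would require $F(b) < F(a)$, which is impossible for an increasing function). From $F(a) < F(d) < F(b)$ together with monotonicity I then read off $a < d < b$, which is exactly the claim.

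For the formal route I would not use the explicit formula but only continuity, so that the argument is the genuine IVT applicable to the disjunction as stated. First I reduce to the case $F(a) < 0 < F(b)$: if instead $F(b) < 0 < F(a)$, I apply the result to $-F$, which is again continuous and satisfies $(-F)(a) < 0 < (-F)(b)$, and a zero of $-F$ is a zero of $F$. Next I define the set $S = \{ t \in [a,b] : F(t) < 0 \}$. This set is nonempty, since $a \in S$, and bounded above by $b$, so by completeness $d := \sup S$ exists and lies in $[a,b]$. I then rule out $F(d) > 0$ and $F(d) < 0$ in turn: if $F(d) > 0$, continuity supplies a neighborhood of $d$ on which $F > 0$, contradicting that $d$ is a limit of points of $S$; if $F(d) < 0$, continuity supplies a neighborhood on which $F < 0$, producing points of $S$ strictly to the right of $d$ and contradicting $d = \sup S$. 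Hence $F(d) = 0$, and the strict inequalities at the endpoints force $d \neq a$ and $d \neq b$, so $a < d < b$.

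The only genuinely delicate step is this two-sided continuity argument at the supremum, which is precisely where completeness of the reals is invoked; everything else is bookkeeping. For the affine $F$ at hand this step is of course automatic, but phrasing it in the continuity-only form is what makes the conclusion a rigorous existence statement rather than a mere solution of a linear equation, and it is the form that transfers to the $\theta(t)$ of the singularity argument where only continuity, not linearity, is available.
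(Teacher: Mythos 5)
Your formal route is essentially the paper's own proof: both arguments define $d$ as the supremum of the set where $F<0$ (the paper writes it as $d = \sup \{h: F(t) < 0 \mbox{ on } [a,h)\}$, you as $\sup\{t \in [a,b]: F(t)<0\}$) and then use continuity to rule out $F(d)>0$ and $F(d)<0$ in turn, i.e., the standard completeness proof of the intermediate value theorem. The only differences are cosmetic: you additionally dispatch the second disjunct $F(b)<0<F(a)$ by passing to $-F$ (a case the paper's proof silently omits), and you add a monotonicity sanity check exploiting the affine form of $F$.
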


\begin{proof}
Suppose that $F(a) < 0 < F(b)$. Since $F(a) < 0$, by the continuity of $F(t)$, $\exists \mbox{ } h \in \mathbb{R} \mbox{ such that } F < 0 \mbox{ on } [a,h)$. Define:

\begin{equation}
d = \sup \{h: F(t) < 0 \mbox{ on } [a,h)\}
\end{equation}

It is clear that $d \leq b$. One cannot have $F(d) > 0$, since this would mean that $F > 0$ on some interval to the left of $d$, which is a contradiction based on our definition of $d$ above. Additionally, we cannot have that $d = b$, so, $d < b$. We also cannot have $F(d) < 0$, since this would imply that there would be an interval $[a,t)$, with $t > d$ which would in turn imply that $F(t) < 0$. This, once again, contradicts the definition of $d$. Therefore, it must be that $F(d) = 0$. 
\end{proof}

All of the aforementioned arguments point to a general theorem on singularities first given by Penrose and Hawking \cite{penrose} \cite{hervik} \cite{ellis3}. More or less, we can say that if matter as described by the energy-momentum tensor satisfies the strong energy condition, and there exists a $\alpha \in \mathbb{R}^{+}$ such that $\theta(t) > \alpha$, everywhere in the past of a specific spatial slice, then, there exists a singularity point where all geodesics end. As noted by Hervik and Gr{\o}n, spacetimes can have singularities even if the strong energy condition is violated. If, however, we assume that the strong energy condition must be satisfied, then, for a viscous universe, Eq. (7.3) implies conditions on $C$ for the existence of a past singularity:

\begin{eqnarray}
-\frac{1}{2}\kappa \left(\mu + 3p\right) &\leq& -\frac{3}{2}\kappa C \nonumber \\
\Rightarrow \frac{1}{2}\left(\mu + 3p\right) &\geq& \frac{3}{2}C \nonumber \\
\Rightarrow \frac{1}{3} \left(\mu + 3p\right) &\geq& C
\end{eqnarray}

Since this relationship was derived independent of any specific Bianchi type, one can conclude that for a past singularity to exist in an universe described by an energy-momentum tensor of the form Eq. (3.16), with an equation of state of the form Eq.(7.2),  Eq. (7.14) must strictly hold. A diagram of an expanding congruence of geodesics from an initial singularity point is shown in Fig. (7.1) below.

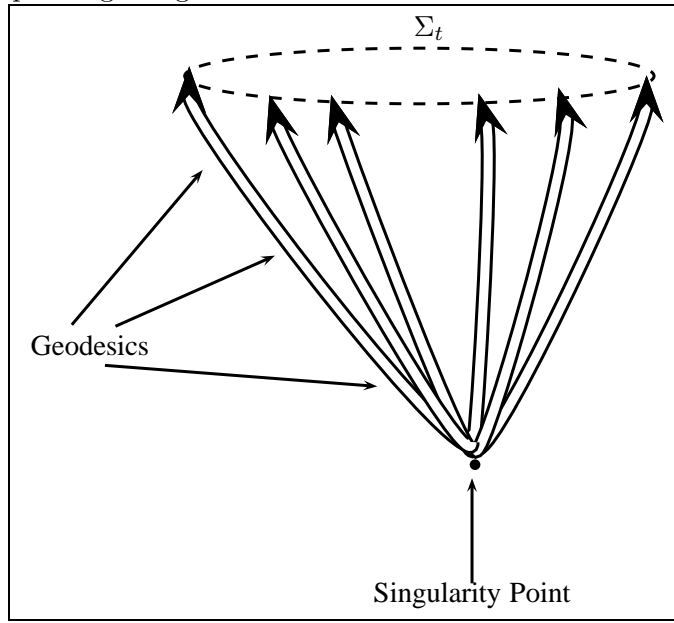
\begin{figure}[h]
\begin{center}
\caption{Expanding Congruence of Geodesics From an Initial Singularity Point}
\fbox{
\scalebox{1} 
{
\begin{pspicture}(0,-3.978047)(8.227344,3.978047)
\psellipse[linewidth=0.04,linestyle=dashed,dash=0.16cm 0.16cm,dimen=outer](5.0873437,3.1496484)(3.14,0.39)
\usefont{T1}{ptm}{m}{n}
\rput(5.2187986,3.7846484){$\Sigma_{t}$}
\psbezier[linewidth=0.04,doubleline=true,doublesep=0.12,arrowsize=0.05291667cm 2.0,arrowlength=1.4,arrowinset=0.4]{->}(5.82493,-1.7203516)(5.82493,-2.5203516)(8.107991,2.3596485)(8.107991,3.1596484)
\psbezier[linewidth=0.04,doubleline=true,doublesep=0.12,arrowsize=0.05291667cm 2.0,arrowlength=1.4,arrowinset=0.4]{->}(5.807344,-1.5803516)(5.804293,-2.3403516)(6.0903435,2.1396484)(5.8738465,2.9196484)
\psbezier[linewidth=0.04,doubleline=true,doublesep=0.12,arrowsize=0.05291667cm 2.0,arrowlength=1.4,arrowinset=0.4]{->}(5.827344,-1.6803516)(5.8652487,-2.5003517)(2.0273438,2.4796484)(2.0273438,3.2796485)
\usefont{T1}{ptm}{m}{n}
\rput(0.71473634,-0.43535155){Geodesics}
\psline[linewidth=0.04cm,arrowsize=0.05291667cm 2.0,arrowlength=1.4,arrowinset=0.4]{->}(0.46734375,-0.16035156)(2.2073438,1.8996484)
\psline[linewidth=0.04cm,arrowsize=0.05291667cm 2.0,arrowlength=1.4,arrowinset=0.4]{->}(1.0473437,-0.18035156)(3.1673439,0.75964844)
\psline[linewidth=0.04cm,arrowsize=0.05291667cm 2.0,arrowlength=1.4,arrowinset=0.4]{->}(0.90734375,-0.70035154)(4.5273438,-0.98035157)
\psdots[dotsize=0.142](5.827344,-2.0203516)
\psline[linewidth=0.04cm,arrowsize=0.05291667cm 2.0,arrowlength=1.4,arrowinset=0.4]{->}(5.787344,-3.6003516)(5.787344,-2.2003515)
\usefont{T1}{ptm}{m}{n}
\rput(5.7863674,-3.7553515){Singularity Point}
\psbezier[linewidth=0.04,doubleline=true,doublesep=0.12,arrowsize=0.05291667cm 2.0,arrowlength=1.4,arrowinset=0.4]{->}(5.807344,-1.7603515)(5.844293,-2.4603515)(4.1303434,2.1196485)(3.9138465,2.8996484)
\psbezier[linewidth=0.04,doubleline=true,doublesep=0.12,arrowsize=0.05291667cm 2.0,arrowlength=1.4,arrowinset=0.4]{->}(5.827344,-1.7603515)(5.824293,-2.4803514)(7.1503434,2.2196484)(6.9338465,2.9996483)
\psbezier[linewidth=0.04,doubleline=true,doublesep=0.12,arrowsize=0.05291667cm 2.0,arrowlength=1.4,arrowinset=0.4]{->}(5.787344,-1.7203516)(5.787344,-2.3403516)(3.3103437,2.1196485)(3.0938463,2.8996484)
\end{pspicture} 
}

}
\label{geodesics}
\end{center}
\end{figure}

\newpage
\section{Conclusions}
In this thesis, we were interested in formulating a viscous model of the universe based on The Bianchi Type IV algebra. We first began by considering a congruence of fluid lines in spacetime, upon which, analyzing their propagation behaviour, we rigorously derived the famous Raychaudhuri equation, but, in the context of viscous fluids. We then went through in great detail the topological and algebraic structure of a Bianchi Type IV algebra, by which we derived the corresponding structure and constraint equations. From this, we looked at the Einstein field equations in the context of orthonormal frames, and derived the resulting dynamical equations: The \emph{Raychaudhuri Equation}, \emph{generalized Friedmann equation}, \emph{shear propagation equations}, and a set of non-trivial constraint equations. We showed that for cases in which the bulk viscous pressure is significantly larger than the shear viscosity, this cosmological model isotropizes asymptotically to the present-day universe. We finally concluded by discussing The Penrose-Hawking singularity theorem, and showed that the viscous universe under consideration emerged from a past singularity point.

%
\newpage

\bibliographystyle{plain} 
\bibliography{sources} 

\end{document}